\documentclass[11pt]{article}

\usepackage{natbib}
\usepackage{makecell} %
\usepackage{booktabs}
\usepackage{amsmath,amsthm,amssymb} %
\usepackage[affil-it]{authblk} %
\usepackage[english]{babel} %
\usepackage{caption} %
\usepackage{color} %
\usepackage{algorithmic} %
\usepackage{algorithm} %
\usepackage{enumitem} %
\usepackage{mathpazo} % charter, fourier, mathpazo, times
\usepackage{framed} %
\usepackage[margin=1in]{geometry} %
\usepackage{graphics} %
\usepackage{hyphenat} %
\usepackage[breaklinks]{hyperref} %
\usepackage{mathabx} %
\usepackage{mathtools} %
\usepackage{microtype} %
\usepackage[utf8]{inputenc} %
\usepackage{soul} %
\usepackage{subcaption} %
\usepackage{subdepth} %
\usepackage{suffix} % for *-version commands
\usepackage{tikz} %
\usepackage{xspace} %
\usepackage{thmtools}
\usepackage{thm-restate}
\sloppy        

\usepackage{interval}
\intervalconfig{soft open fences}
\usepackage{cleveref}

\hypersetup{colorlinks=true, linkcolor=blue, citecolor=magenta} %
\definecolor{shadecolor}{rgb}{.95,.95,.95} %

\setul{1ex}{.5pt} %
\overfullrule=2cm %

\setlist[enumerate]{nolistsep,itemsep=3pt,topsep=3pt} %

\definecolor{White}{rgb}{1,1,1} %
\definecolor{Black}{rgb}{0,0,0} %
\definecolor{LightGray}{rgb}{.8,.8,.8} %

\newtheorem{theorem}{Theorem} %
\newtheorem{lemma}[theorem]{Lemma} %
\newtheorem{proposition}[theorem]{Proposition} %
\newtheorem{corollary}[theorem]{Corollary} %
\newtheorem{definition}[theorem]{Definition} %
\newtheorem{fact}[theorem]{Fact} %

\newcommand{\microspace}{\mspace{.5mu}} %

\newcommand{\norm}[1]{\left\lVert #1 \right\rVert}
\newcommand{\norms}[1]{\lVert #1 \rVert}

\newcommand{\ips}[2]{\langle #1,#2 \rangle}

\newcommand{\floors}[1]{\lfloor #1 \rfloor}
\newcommand{\ceil}[1]{\left\lceil #1 \right\rceil}
\newcommand{\ceils}[1]{\lceil #1 \rceil}
\newcommand{\abss}[1]{\left\lvert #1 \right\rvert}
\newcommand{\abs}[1]{\lvert #1 \rvert}
\newcommand{\paren}[1]{\left( #1 \right)}
\newcommand{\parens}[1]{( #1 )}
\newcommand{\sqb}[1]{\left[ #1 \right]}
\newcommand{\sqbs}[1]{[\![ #1 ]\!]}
\newcommand{\set}[1]{\left\{ #1 \right\}}
\newcommand{\sets}[1]{\{ #1 \}}
\newcommand{\ket}[1]{\ensuremath{\lvert\microspace #1
    \microspace\rangle}} %
 %
 %
 %
 %
 %
 %
 %
 %
 %

 %
 % Use \tt not \texttt

 %
\newcommand{\RR}{\mathbb{R}} %
\newcommand{\ZZ}{\mathbb{Z}} %
\newcommand{\diag}{\operatorname{diag}} %
\newcommand{\poly}{\operatorname{poly}} %
\newcommand{\multiscale}{\operatorname{ms}} %
\newcommand{\trace}{\operatorname{Tr}} %
\newcommand{\rank}{\operatorname{rank}} %
\renewcommand{\epsilon}{\varepsilon} %

\def\cA{\mathcal{A}} % Space
 % Space
 % Space
 % Space
 % Space
 % Space
 % Space
\def\cJ{\mathcal{J}} % Space
\def\cF{\mathcal{F}} % Space
 % Space
\def\cW{\mathcal{W}} % Space
\def\cO{\mathcal{O}} % Space
\def\cL{\mathcal{L}} % Space
 % Space
 % Space
 % Space
 % Space
 % Space
 % Space
 % Space
 % Space
\def\cZ{\mathcal{Z}} % Space

\begin{document}

\title{\Large\bf Accelerating Regression Tasks with Quantum Algorithms}
\renewcommand*{\Affilfont}{\small\itshape} % chktex 6
\author[1,2]{Chenghua Liu}
\author[3]{Zhengfeng Ji}
\affil[1]{
  Institute of Software, Chinese Academy of Sciences, Beijing, China
}
\affil[2]{
  University of Chinese Academy of Sciences, Beijing, China
}
\affil[3]{
Department of Computer Science and Technology, Tsinghua University, Beijing, China
}

\maketitle

\begin{abstract}
  Regression is a cornerstone of statistics and machine learning, with
  applications spanning science, engineering, and economics.
  While quantum algorithms for regression have attracted considerable attention,
  most existing work has focused on linear regression, leaving many more complex
  yet practically important variants unexplored.
  In this work, we present a unified quantum framework for accelerating a broad
  class of regression tasks---including linear and multiple regression, Lasso,
  Ridge, Huber, $\ell_p$-, and $\delta_p$-type regressions---achieving up to a
  quadratic improvement in the number of samples $m$ over the best classical
  algorithms.
  This speedup is achieved by extending the recent classical breakthrough of
  \citet{jambulapati2024sparsifying} using several quantum techniques, including
  quantum leverage score approximation~\citep{apers2023quantum} and the
  preparation of many copies of a quantum state~\citep{hamoudi2022preparing}.
  For problems of dimension $n$, sparsity $r < n$, and error parameter
  $\epsilon$, our algorithm solves the problem in
  $\widetilde{O}(r\sqrt{mn}/\epsilon + \mathrm{poly}(n,1/\epsilon))$\footnote{We
    use $\widetilde O\parens{f}$ to represent
    $O \paren{f \cdot \poly\log\parens{m, n,1/\epsilon}}$ throughout this paper
    to suppress polylogarithmic factors.}
  quantum time, demonstrating both the applicability and the efficiency of
  quantum computing in accelerating regression tasks.
\end{abstract}

\section{Introduction}
Regression lies at the heart of statistical modeling and data analysis, providing a fundamental framework for understanding relationships between variables and making predictions.
 Among its various forms, linear regression stands out as one of the most widely used and longest-standing tools, with a history dating back over three centuries and applications ranging from the natural sciences to economics and engineering~\citep{legendre1806nouvelles, fisher1970statistical, neter1996applied}.
Over the decades, the basic linear regression model has been extended in multiple directions, including  ridge and lasso regression, $\ell_p$ regression, Huber regression, each addressing different statistical and computational challenges. 
 These models have become indispensable in modern data analysis, driving extensive theoretical research on fast regression methods. Among them, $\ell_p$ regression has witnessed a series of recent breakthroughs~\citep{adil2019iterative, lee2019solving, musco2022active, adil2024fast, jambulapati2024sparsifying}, alongside many other progress in other regression tasks.

 With the rapid progress of quantum hardware and algorithms, there has been growing interest in leveraging quantum computing to accelerate optimization and machine learning tasks. Among them, regression stands out as a core problem in statistics and machine learning, and has attracted significant attention from the quantum algorithms community.
 A rich body of work has explored quantum approaches to linear regression under various models and assumptions, ranging from HHL-based solvers and conditional gradient method to quantum leverage score techniques~\citep{wiebe2012quantum, schuld2016prediction,
  wang2017quantum,
   kerenidis2017quantum,
    chakraborty2019power,
     shao2020quantum, 
     chen2023quantum,
     shao2023quantum,
      chen2023faster,
       chakraborty2023quantum, 
       song2023revisiting}. 
However, for broader and both theoretically and practically important regression tasks such as $\ell_p$ regression and Huber regression, it remains unclear whether similar quantum speedups can be achieved. Given the diversity of regression tasks and their substantial significance in both machine learning and theoretical computer science, a natural question arises:
\begin{center}
\textit{To what extent can the advantages of quantum computing be extended beyond linear regression to encompass this broader family of models?}
\end{center}

In this work, we study this question by adopting the generalized linear model (GLM) sparsification framework and developing a quantum algorithmic approach tailored to it. Sparsification refers to the process of replacing a large dataset with a much smaller weighted subset (a sparsifier) that approximately preserves the objective value for all candidate solutions, thereby allowing the original optimization problem to be solved more efficiently. Once such a sparsifier is constructed, computations performed on it yield an approximate solution to the original problem. Consequently, our framework provides quantum speedups for approximately solving a broad class of empirical risk minimization (ERM) problems, covering linear and multiple regression, Lasso, ridge, Huber, $\ell_p$-, and $\delta_p$-type regressions. This demonstrates both the unifying nature and the broad applicability of our approach in accelerating fundamental regression tasks.

To clearly specify the scope of our framework, we formalize the problem of optimizing a generalized linear model as
\begin{equation}\label{eq:glm}
\min_{x \in \mathbb{R}^n} F(x)\quad \text{for}\quad F(x) := \sum_{i=1}^m f_i\big(\langle a_i, x\rangle - b_i\big),
\end{equation}
where the total loss $F:\mathbb{R}^n \to \mathbb{R}$, is defined by data vectors $a_1,\ldots,a_m \in \mathbb{R}^n$,  a response vector $b \in \mathbb{R}^m$, and loss functions $f_1,\ldots,f_m:\mathbb{R} \to \mathbb{R}_+$.This general formulation serves as a unifying abstraction for many regression tasks, from linear regression to more sophisticated robust objectives.

When the number of samples $m$ is much larger than the dimension $n$, a natural strategy for accelerating computation is to apply sparsification. This technique have been widely used in classical algorithm design, with graph sparsification being one of the most prominent examples~\citep{spielman2011graph,batson2014twice}. In a recent breakthrough, \citet{jambulapati2024sparsifying} systematically studied the GLM sparsification problem and presented a corresponding classical algorithm.  Motivated by this advance, we investigate the potential for quantum speedups in this setting,  leading to our quantum algorithmic GLM sparsification framework. To precisely describe GLM sparsification, we use the following definition:
\begin{definition}[Generalized Linear Model (GLM) Sparsification]
Let the total loss function $F(x) := \sum_{i=1}^m f_i\left(\langle a_i, x \rangle-b_i \right)$ be defined by vectors $a_1, \ldots, a_m \in \mathbb{R}^n$,a vector $b \in \mathbb{R}^m$,  and loss functions $f_1, \ldots, f_m : \mathbb{R} \to \mathbb{R}_+$.
We say that $F$ admits an \emph{$s$-sparse $\epsilon$-approximation over the range $[s_{\min}, s_{\max}]$} if there exist non-negative weights $w \in \mathbb{R}_+^m$, with at most $s$ nonzero entries, such that
\begin{equation*}
\left| F(x) - \widetilde{F}(x) \right| \leq \epsilon \cdot F(x), \quad \forall x \in \RR^n \text{ with }s_{\min}\leq F \parens{x}\leq s_{\max},
\end{equation*}
where $\widetilde{F}(x) := \sum_{i=1}^m w_i f_i\left(\langle a_i, x \rangle-b_i \right)$ is referred to as a \emph{$s$-sparse $\epsilon$-approximate sparsifier of $F$ over the range $[s_{\min}, s_{\max}]$}.
\end{definition}
\subsection{Our results}
We consider a generalized linear model as in \cref{eq:glm}, where the loss functions $f_1,\ldots,f_m : \mathbb{R} \to \mathbb{R}_+$ are applied to  $\langle a_i,x\rangle - b_i$ for $a_i\in\mathbb{R}^n$. Throughout the analysis, we assume for convenience that the bias term is absent, i.e., $b=0$. This assumption entails no loss of generality: any nonzero bias $b_i$ can be incorporated into the vector by appending an additional coordinate, rewriting $\langle a_i, x \rangle - b_i$ as $\langle (a_i, b_i), (x, -1) \rangle$, thereby embedding the problem in $\mathbb{R}^{n+1}$ with zero bias.

To enable effective GLM sparsification, we require that each loss function $f_i$ satisfies certain regularity conditions. Specifically, we focus on losses that are $L$-auto-Lipschitz and lower $\theta$-homogeneous---two mild and widely satisfied properties that imply several other desirable characteristics for sparsification. These conditions, along with their consequences, are discussed in detail by \citet{jambulapati2024sparsifying}. Following their results, we formalize this class of losses as $(L,\theta,c)$-proper functions, defined as follows:

\begin{definition}[Proper Loss Functions]\label{def:proper-loss}
A family of loss functions $\cF=\sets{f_1,\ldots,f_m}$ is called $(L,\theta,c)$-proper if for each $f_i:\RR \to \RR_+$, the transformed function $h_i = \sqrt{f_i}$ satisfies:
	\begin{enumerate}
	\item ($L$-auto-Lipschitz) $\abss{h_i\parens{x}-h_i \parens{x^\prime}}\leq L h_i \parens{x -x^\prime}$ for all $x, x ^\prime\in \RR$.
	\item (Lower $\theta$-homogeneous with $c$) $h_i\parens{\lambda x}\geq c \lambda ^\theta h_i \parens{x}$ for all $x \in \RR$ and $\lambda \geq 1$.
	\end{enumerate}
\end{definition}
We assume $c>0,L>0 ,\theta \in (0,4)$; in particular, $\theta=1$ already suffices for all applications in this paper. 

In this paper, we introduce a quantum algorithmic framework for GLM sparsification,  capable of efficiently constructing constructing $\epsilon$-approximate sparsifiers for a broad class of $(L,\theta,c)$-proper loss functions. Given query access to the data matrix and the loss functions, our method achieves quadratic speedups in the number of data points $m$ over the fastest classical algorithms, while preserving the same approximation guarantees. Our main result is summarized in the following theorem.

\begin{theorem}[Informal, see~\cref{thm:quantum-glm-sparsification} for the formal statement]
Let $F(x) := \sum_{i=1}^m f_i\left(\langle a_i, x \rangle \right)$ be a total loss function defined by vectors $a_1, \ldots, a_m \in \mathbb{R}^n$ (with sparsity $r\leq n$) and a $\parens{L,\theta,c}$-proper loss functions family $\cF =\sets{f_1, \ldots, f_m }$. For any $ \epsilon=O (1/r),  s_{\max}>s_{\min}\geq 0$, there exists a quantum algorithm that, with high probability\footnote{Throughout this
  paper, we say something holds ``with high probability'' if it holds
  with probability at least $1-O\paren{1/n}$.}, constructs an $\widetilde O \parens{{\log \parens{s_{\max}/s_{\min}}} \cdot n/\epsilon^2 }$-sparse $\epsilon$-approximate sparsifier of $F$ over the range $[s_{\min},s_{\max}]$, in time $\widetilde O \paren{\parens{r\sqrt{mn}/\epsilon + \poly\parens{n}}\cdot {\log \parens{s_{\max}/s_{\min}}}}$.
\end{theorem}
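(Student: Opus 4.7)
The plan is to quantize the classical iterative sparsification framework of \citet{jambulapati2024sparsifying}, replacing the two $\Theta(m)$-time primitives (generalized leverage-score computation and row sampling) by quantum subroutines that each run in $\widetilde O\paren{\sqrt{m}\cdot\poly\parens{n,r,1/\epsilon}}$ time. Recall that classically one computes approximate Lewis-type weights adapted to $(L,\theta,c)$-proper losses, samples $\widetilde O\paren{n/\epsilon^2}$ rows with probability proportional to these weights, and iterates this refinement a polylogarithmic number of times; matrix concentration then certifies that the resulting weighted subsample is an $\epsilon$-approximate sparsifier on a fixed scale. The quantum version keeps this outer loop intact and only accelerates its inner primitives.

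\textbf{Step 1 (Dyadic scale reduction).} First I would cover $[s_{\min},s_{\max}]$ by $\widetilde O\paren{\log\parens{s_{\max}/s_{\min}}}$ geometrically spaced bands of constant multiplicative width, reducing the problem to that many single-scale sparsification tasks whose outputs are then unioned. A union bound over bands yields the overall guarantee and explains the $\log\parens{s_{\max}/s_{\min}}$ factor in both the sparsity and the runtime.

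\textbf{Step 2 (Quantum score computation and sampling).} For each scale, the Jambulapati analysis requires only an $O(1)$-multiplicative approximation of each generalized leverage score $\sigma_i$. I would invoke the quantum leverage-score estimator of \citet{apers2023quantum} on the (row-)sparse data matrix, together with the standard Lewis-weight fixed-point iteration run on diagonally reweighted copies of the matrix, to produce such approximations in total time $\widetilde O\paren{r\sqrt{mn}/\epsilon}$; because the fixed-point iteration converges in $O\paren{\log\parens{n/\epsilon}}$ rounds, the $\sqrt{m}$ cost is incurred only polylogarithmically often. Given these scores, the $k=\widetilde O\paren{n/\epsilon^2}$ i.i.d.\ samples needed to form the sparsifier can be drawn by the multi-sample preparation procedure of \citet{hamoudi2022preparing} in time $\widetilde O\paren{\sqrt{mk}}=\widetilde O\paren{\sqrt{mn}/\epsilon}$, which is subsumed by the score-computation cost.

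\textbf{Step 3 (Post-processing and accounting).} Once the weights $w$ are determined, finalizing the sparsifier requires only operations on the already-sparse instance of size $\widetilde O\paren{n/\epsilon^2}$ and a small number of dense linear-algebraic steps in $\mathbb{R}^n$, which together contribute the additive $\poly\parens{n}$ term. Combining Steps 1--3 and summing costs across the $\widetilde O\paren{\log\parens{s_{\max}/s_{\min}}}$ scales yields the stated runtime, while the sparsity bound follows from taking the union of the per-scale sparsifiers, each of size $\widetilde O\paren{n/\epsilon^2}$.

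The hard part will be propagating the multiplicative error through the Lewis-weight fixed-point iteration under the \emph{approximate} quantum leverage-score oracle: Apers--Gribling returns scores only up to a constant multiplicative factor with a small failure probability, and the classical Jambulapati proof tolerates such error only when paired with appropriate rescaling between iterations and when the proper-function parameters $L,\theta,c$ are folded carefully into the concentration bound. Ensuring that this tolerance survives in the quantum setting, while simultaneously bookkeeping failure probabilities across all scales, all inner rounds, and all $k$ samples from Hamoudi's primitive, is the principal technical obstacle; amplification of each quantum subroutine to failure probability $1/\poly\parens{m,n,1/\epsilon}$ costs only a polylogarithmic factor and so stays within the $\widetilde O$ budget.
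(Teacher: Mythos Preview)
Your high-level plan matches the paper's: quantize the Jambulapati iterative-reweighting framework by replacing leverage-score computation with the Apers--Gribling estimator and row sampling with Hamoudi's multi-copy preparation. Two concrete pieces are missing, one of which is the actual technical crux.

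\textbf{Initialization of the fixed-point iteration.} You assert that the Lewis-type iteration ``converges in $O(\log(n/\epsilon))$ rounds,'' but this is only known for homogeneous $\ell_p$ weights started from uniform. For general $(L,\theta,c)$-proper losses, the contraction map $\phi_s$ of the paper satisfies $d(\phi_s(w),\phi_s(w'))\le\delta\, d(w,w')+O(1)$ with $\delta<1$, so the number of rounds is $O(\log\log\beta)$ where $\beta$ is the approximation quality of the \emph{starting} weight---and for an arbitrary start, $\beta$ is not bounded at all. The paper handles this with a dedicated $\mathsf{WeightInitialize}$ procedure (Theorem~\ref{thm:weight-initialize}): it finds anchor points $\hat t_i$ with $f_i(\hat t_i)\in[s_{\max}/2,s_{\max}]$ by a quantum-coherent binary search, runs one Apers--Gribling pass on the rescaled matrix to get coarse scores, and then \emph{modifies the losses} to $f_i^\circ(t)=f_i(t)+s_{\max}w_i^\circ t^2$ so that the resulting $w^\circ$ is $\beta$-approximate with $\beta=\poly(m,s_{\max}/s_{\min},1/\epsilon)$. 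Only after this does the contraction argument yield $\widetilde O(1)$ rounds. Your proposal skips this entirely; without it the round bound, and hence the $\sqrt{m}$ savings, is not justified.

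\textbf{Reweighting after sampling.} Hamoudi's primitive returns indices drawn proportionally to the unnormalized scores $z_i$, but to form the sparsifier you must set $w_{\vartheta}\propto \|z\|_1/(M z_\vartheta)$, which requires an estimate of $\|z\|_1$. The paper obtains a constant-factor estimate via a separate quantum sum-estimation call (Theorem~\ref{thm:quantum-sum-estimate}) in $\widetilde O(\sqrt{m})$ time. This is an easy fix, but it is a step your Step~2 omits.

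A minor structural difference: you propose independent per-scale sparsifications and then union, whereas the paper computes a single multiscale overestimate $z_i=\sum_{j\in\cJ}\widetilde\sigma_i^{(j)}$ and samples once. Both are valid and give the same $\log(s_{\max}/s_{\min})$ factor, but the paper's downward recursion reuses the top-scale weight to seed each lower scale with a single contraction step, so only $T+|\cJ|$ leverage-score passes are needed rather than $|\cJ|\cdot T$.
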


We illustrate applicability with common loss functions. It is straightforward to verify that $\ell_p(x)=|x|^p$ for $p\in(0,2]$ is $(L,\theta,c)$-proper with $L=1$, $\theta=p/2$, and $c=1$. For the more refined function $\gamma_p$  defined for $p \in (0, 2]$ by
\begin{equation}\label{eq:def-gamma-p}
	\gamma_p\parens{x}:=\begin{cases}
		\frac{p}{2}x^2 & \text{ for }\abss{x}\leq1,
		\\\abss{x}^p -\parens{1-p/2}& \text{otherwise},
	\end{cases}
\end{equation}
one similarly checks $(L,\theta,c)$-properness with the same parameters, where the case $p=1$ recovers the Huber loss. Regarding the dependence on the scale ratio $s_{\max}/s_{\min}$, when all $f_i$ are $p$-homogeneous the dependence disappears, and for $\gamma_p$  the  scale ratio reduces to $\poly(m)$.

\begin{table}[htb]
\caption{Comparison of runtimes for various regression problems in classical algorithms, previous quantum algorithms, and our work. Due to space constraints, we present only simplified expressions of the time complexities for each task; detailed discussions are provided in \cref{subsec:application}. We list only the state-of-the-art results that allow direct comparison and omit data-dependent algorithms.}\label{tab:regression-comparison}
\centering
\resizebox{\textwidth}{!}{%
\begin{tabular}{lccc}
\\\toprule
\textbf{Regression type} & \textbf{Classical} & \textbf{Quantum (previous)} & \textbf{Quantum (this work)} \\
\midrule
Linear regression   
& \makecell{$\widetilde{O}(mr+n^3)$ \\ \citep{nelson2013osnap,clarkson2017low}}   
& \makecell{$\widetilde O(\sqrt{m}n ^{1.5}/\epsilon +\poly(n,1/\epsilon))$ \\ \citep{song2023revisiting}}   
& \makecell{$\widetilde{O}\parens{r\sqrt{mn}/\epsilon+n^3}$ \\(\cref{cor:linear-regression})} \\
\midrule
Multiple regression   
& \makecell{$\widetilde{O}(mr+N\poly\parens{n,1/\epsilon})$ \\ \citep{nelson2013osnap,clarkson2017low}}   
& \makecell{$\widetilde O(\sqrt{m}n ^{1.5}/\epsilon +N\poly(n,1/\epsilon))$ \\ \citep{song2023revisiting}}      
& \makecell{$\widetilde{O}\parens{r\sqrt{mn}/\epsilon+N\poly\parens{n,1/\epsilon}}$\\(\cref{cor:multiple-regression}) } \\
\midrule
Ridge regression      
& \makecell{$\widetilde{O}(mr+\poly\parens{n,1/\epsilon})$ \\ \citep{avron2017sharper}}   
& \makecell{$\widetilde O(\sqrt{m}n ^{1.5}/\epsilon +\poly(n,1/\epsilon))$ \\ \citep{song2023revisiting}}    
& \makecell{$\widetilde{O}\parens{r\sqrt{mn}/\epsilon+n^3}$\\(\cref{cor:ridge-regression}) } \\
\midrule
Lasso regression   
& \makecell{$\widetilde{O}(mn^2+n^3)$ \\ \citep{efron2004least}}   
& ---   
& \makecell{$\widetilde{O}\parens{r\sqrt{mn}/
\epsilon
+\poly\parens{n,1/\epsilon}}$ \\(\cref{cor:lasso-regression})} \\
\midrule
Huber regression      
& \makecell{$\widetilde{O}(mr+\poly\parens{n,1/\epsilon})$ \\ \citep{jambulapati2024sparsifying}}    
& ---   
& \makecell{$\widetilde{O}\parens{r\sqrt{mn}/
\epsilon
+\poly\parens{n,1/\epsilon}}$\\(\cref{cor:gamma-p-regression}, with $p=1$) } \\
\midrule
$\ell_p$ regression 
& \makecell{$\widetilde{O}(mr+\poly\parens{n,1/\epsilon})$ \\ \citep{jambulapati2024sparsifying}}  
& ---   
& \makecell{$\widetilde{O}\paren{r\sqrt{mn}/
\epsilon
+\poly\parens{n,1/\epsilon}}$ \\(\cref{cor:ell-p-regression})} \\
\midrule
$\gamma_p$ regression  
& \makecell{$\widetilde{O}(mr+\poly\parens{n,1/\epsilon})$ \\ \citep{jambulapati2024sparsifying}}   
& ---   
& \makecell{$\widetilde{O}\parens{r\sqrt{mn}/
\epsilon
+\poly\parens{n,1/\epsilon}}$\\(\cref{cor:gamma-p-regression}) } \\
\bottomrule
\\
\end{tabular}
}
\end{table}

As an application of our framework, we achieve quantum speedups for a range of important regression problems, including linear regression, multiple regression, Lasso, Ridge, Huber, $\ell_p$ regression, and $\gamma_p$ regression. Since the sparsifier has size $\widetilde{O}(n/\epsilon^{2})$, it must be smaller than $m$ (otherwise it would be dense), which requires $\epsilon =\Omega\parens{ \sqrt{n/m}}$. In this regime, the leading term in our quantum runtime, $r\sqrt{mn}/\epsilon$, is strictly smaller than the classical $mr$; when $\epsilon$ is constant, this yields quadratic speedups in the parameter $m$, which often dominates in large-scale regression tasks where $m \gg n$. Moreover, for linear, multiple, and Ridge regression, our algorithm also subsumes the results of \citet{song2023revisiting}, the previous known quantum algorithm for the problem. 
We emphasize that our focus is on the \emph{computational task} of approximately minimizing the loss function $F$, rather than on the \emph{statistical learning problem} of estimating an unknown model from random samples. The latter setting was studied by \citet{chen2023quantum}, who investigated quantum algorithms for Lasso and ridge regression; in our notation, their results correspond to the case $m = O(\log n / \epsilon^{2})$.
 For clarity, \cref{tab:regression-comparison} summarizes the resulting runtimes across classical algorithms, prior quantum work, and our algorithms.

\subsection{Techniques}
\paragraph{Generalized linear model sparsification.}
To systematically investigate quantum speedups for a broad class of regression problems, we adopt the generalized linear model (GLM) sparsification framework and further extend its applicability by allowing linear combinations of loss functions. At a high level, GLM sparsification replaces a large dataset with a much smaller weighted subset (the sparsifier) such that the total loss is preserved within a $(1\pm\epsilon)$ factor for all choices of the variable $x$. This enables the original optimization problem to be approximated efficiently by solving it on the sparsifier. Recently established in the breakthrough works of \citet{jambulapati2023sparsifying,jambulapati2024sparsifying}, 
 this framework unifies and generalizes sparsification techniques for a wide range of regression objectives. Its core idea is to compute, for each loss function $f_i$, an importance score capturing its contribution to the objective, and then perform importance sampling to select a small representative set, reweighting them to form the sparsifier. In the quantum setting, our approach comprises two parts: (i) computing importance scores efficiently, and (ii) performing importance sampling based on these scores.

\paragraph{Multiscale leverage score overestimates.}
Defining suitable importance scores in the GLM setting is challenging due to its high level of abstraction: rather than specifying a concrete loss function, only general conditions on $f_i$ are given. \citet{jambulapati2024sparsifying} addressed this by analyzing the iterated covering argument for $\ell_p$ loss and generalizing $\ell_p$ Lewis weights~\citep{lewis1979ellipsoids,bourgain1989approximation,cohen2015lp}, yielding an \emph{approximate weight scheme} consisting of multiple sets of weights. These can be interpreted as importance scores at different scales, connected by specific contraction properties. Building on this, we summarized the key structural ideas into the notion of \emph{multiscale leverage score overestimates} (MLSO), which serves as the GLM analogue of importance scores.

To achieve quantum speedups, we use the results of \citet{apers2023quantum} on quantum leverage score approximation as a key foundation, enabling a quantum implementation of the contractive algorithm from \citet{jambulapati2024sparsifying}. This yields a quantum data structure, built with sublinear-time preprocessing, that supports efficient queries to MLSO values. Conceptually, our algorithm iteratively refines an initial coarse weight estimate over $\widetilde{O}(1)$ rounds, followed by a final quantum refinement step to produce MLSO values. To support this process, we develop a quantum procedure for constructing the initial weight vector by slightly modifying each $f_i$ and generating an appropriate starting weight, ensuring that the entire pipeline remains fully quantum.

\paragraph{Quantum importance sampling.}
Once quantum query access to the importance scores is available, the next step is to perform importance sampling efficiently. The main challenge is that we do not have direct access to the normalized sampling distribution, but only to an oracle returning unnormalized importance scores. Our approach, inspired by the quantum hypergraph sparsification algorithm of \citet{liu2025quantum}, uses the technique of preparing many copies of a quantum state~\citep{hamoudi2022preparing} whose amplitudes are proportional to the importance scores. We combine this with a quantum sum estimation procedure~\citep{li2019sublinear} to approximate the normalization constant, enabling correct reweighting of the sampled elements. The correctness of this method follows from a careful adaptation of the guarantees established in \citet{jambulapati2024sparsifying}.

\section{Preliminaries}

\paragraph{Notations.}
For a positive integer $n \in \mathbb{Z}_+$, we use $\sqb{n}$ to denote the set ${1,2,\ldots,n}$ and $\sqb{n}_0$ to denote the set ${0,1,\ldots,n-1}$. For vectors $u,v \in \mathbb{R}^n$, we write $\langle u,v\rangle = \sum_{i \in \sqb{n}} u_i v_i$ for the standard inner product. For a vector $w \in \mathbb{R}^n$, we use $\diag(w)$ to denote the diagonal $n \times n$ matrix whose $i$-th diagonal entry equals the $i$-th element of $w$.

\paragraph{Quantum computational model.} We assume the usual quantum computational model, which is a classical system that can (i) run quantum subroutines on $O \parens{\log m}$ qubits, (ii) can make quantum queries to the input, and (iii) has access to a quantum-read/classical write RAM (QRAM) of $\poly\parens{m}$ bits. The \emph{quantum query complexity} measures the
number of quantum queries that an algorithm makes. The \emph{quantum time complexity} measures
the number of elementary classical and quantum gates, quantum queries, and single-bit QRAM operations that algorithm uses.
 For readers less familiar with the basics of quantum computing, we refer to \cref{apdx:quantum-basic}.
 
Regarding queries to the input, we consider an input matrix $A \in \mathbb{R}^{m\times n}$ with row sparsity $r \leq n$, where the matrix elements can be queried via the oracle
$\cO_A^{\text{elem}}:\ket{i}\ket{j}\ket{0}\mapsto \ket{i}\ket{j}\ket{a_{ij}}, \forall i\in \sqb{m}, j\in \sqb{n}$. In addition, the indices of the nonzero entries of each row can be queried via the oracle $\cO_A^{\text{idx}}:\ket{i}\ket{k}\mapsto \ket{i}\ket{t_{ik}}, \forall i \in \sqb{m},k\in \sqb{r_i}$, where $t_{ik}$ is the column index of the $k$-th nonzero entry in the $i$-th row of $A$, and $r_i \leq r$ is the number of nonzero entries in that row. For simplicity, we use the notation $\cO_A$ to refer to either of these two oracles. We will also use the following quantum queries: for a vector $w \in \mathbb{R}^n$, its elements can be queried via the oracle $\cO_w:\ket{i}\ket{0}\mapsto\ket{i}\ket{w_i},  i\in \sqb{n}$; and for a family of functions $\cF=\sets{f_1,\ldots,f_m}$, we assume access to the oracle $\cO_\cF:\ket{i}\ket{x}\ket{0}\mapsto\ket{i}\ket{x}\ket{f_i(x)}$.

\paragraph{Quantum leverage score approximation.}
Let $A \in \mathbb{R}^{m \times n}$ be a matrix with rows $a_1, \dots, a_m$. The \emph{leverage score} of the $i$-th row is defined as
\begin{equation}\label{eq:leverage-score}
	\sigma_i\parens{A}: = a _i ^{\top} \parens{A ^\top  A}^{+}a _i, \quad \forall i\in \sqb{m},
\end{equation}
where $(A^{\top} A)^+$ denotes the Moore-Penrose pseudoinverse of $A^{\top}A$. Intuitively, $\sigma_i(A)$ measures the ``statistical importance'' of row $i$ with respect to the whole set of rows.
Leverage scores are central to row-sampling methods that construct a smaller matrix $\widetilde{A}$ preserving the key spectral properties of $A$. Sampling rows proportionally to $\sigma_i(A)$ and rescaling yields $\widetilde{A}^\top \widetilde{A} \approx A^\top A$ with high probability, enabling accurate and efficient approximation of linear regression and other optimization tasks; we refer the reader to the survey of \citet{woodruff2014sketching} for a comprehensive overview.

In a recent important work, \citet{apers2023quantum} introduced a simple recursive halving algorithm for quantum spectral approximation. As a corollary, they obtained a quantum algorithm for leverage score approximation, which serves as a cornerstone of our work---we invoke this result multiple times throughout our algorithms. For completeness, we restate their result below.

\begin{theorem}[Quantum Leverage Score Approximation] {\cite[Theorem 3.2]{apers2023quantum}}\label{thm:quantum-leverage-score-aapproximation}
	Assume query access to a matrix $A =\RR^{m \times n}$ with row sparsity $r$. For any $0<\epsilon\leq 1$, there exists a quantum algorithm that, with high probability,  provides quantum query access to estimates $\widetilde \sigma_i $ for any $i\in \sqb{m}$ satisfying $\parens{1-\epsilon}\sigma_i\parens{A} \leq  \widetilde \sigma_i \leq \parens{1+\epsilon}\sigma_i\parens{A}$. The algorithm makes $\widetilde O \parens{\sqrt{mn}/\epsilon}$ row queries to $A$ and runs in $\widetilde O \parens{r\sqrt{mn}/\epsilon+ n^\omega+\min \sets{n ^\omega, nr ^2}/\epsilon^2+n ^2/\epsilon^4}$ time. Furthermore, the cost to estimate per $\widetilde{\sigma}_i$ is one row query to $A$ and $\widetilde{O}(r / \epsilon^2)$ time.
	\end{theorem}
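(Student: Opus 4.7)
The plan is to reduce approximate leverage-score computation to spectral approximation. Recall that $\sigma_i(A) = a_i^\top (A^\top A)^+ a_i$, so any matrix $B$ with $(1-\delta)A^\top A \preceq B^\top B \preceq (1+\delta)A^\top A$ yields leverage scores correct to a $(1\pm O(\delta))$ factor via $a_i^\top (B^\top B)^+ a_i$. If $B$ has only $\widetilde O(n/\epsilon^2)$ rows and we precompute a factorization of $B^\top B$ in time $n^\omega$, then each leverage score costs only $\widetilde O(r/\epsilon^2)$ via a sparse matrix--vector multiplication with $a_i$. Thus the entire challenge is to build such a $B$ in time $\widetilde O(r\sqrt{mn}/\epsilon + \poly(n,1/\epsilon))$, and the announced per-query bound then follows immediately.

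For the construction of $B$, I would use recursive halving in the spirit of Cohen--Lee--Musco--Musco--Peng--Sidford: starting from $A$, uniformly sub-sample and rescale half of the rows to obtain $A^{(1)}$, which is an $O(1)$-spectral approximation of $A$ with constant probability; recursively build a spectral approximation $B^{(1)}$ of $A^{(1)}$, incurring $\log m$ depth; then use $B^{(1)}$ to estimate the leverage scores of all rows of $A$ and perform one leverage-score row-sampling round to produce $B$. At each level the target sample size is $\widetilde O(n)$ while the current matrix has up to $m$ rows, so each row's acceptance probability is $\widetilde O(n/m)$ on average, and quantum amplitude amplification on the acceptance predicate produces each accepted sample in $\sqrt{m/n}$ expected oracle calls; each call evaluates the quadratic form $a_i^\top \paren{B^{(k)\top} B^{(k)}}^+ a_i$ in $\widetilde O(r)$ time given a cached factorization. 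Summing geometrically over the $\log m$ halving levels gives the dominant $\widetilde O(r\sqrt{mn}/\epsilon)$ term.

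The main obstacle I anticipate is propagating the matrix Chernoff guarantees and failure probabilities through the $\log m$ recursive levels while staying within the claimed time budget, especially since amplitude amplification produces \emph{approximate} rather than exact samples; one has to show that the induced acceptance distribution nonetheless yields an unbiased estimator of $A^\top A$ up to $\epsilon$ tolerance, and then invoke a matrix Bernstein / Chernoff bound with a union bound over levels. A secondary subtlety is upgrading the coarse $O(1)$-spectral approximation produced by recursion to the $(1\pm\epsilon)$ accuracy demanded by the theorem: I would run the halving at constant accuracy and then perform one final, sharper leverage-score sampling round on the already-compressed matrix, which cleanly accounts for the additive $n^2/\epsilon^4$ and $\min\sets{n^\omega, nr^2}/\epsilon^2$ terms and explains why the dominant $\sqrt{mn}$ cost scales only linearly in $1/\epsilon$. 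Finally, once $B$ and the factorization of $B^\top B$ are written into QRAM, the promised $\widetilde O(r/\epsilon^2)$ per-row query cost reduces to a single sparse matrix--vector product and a triangular solve.
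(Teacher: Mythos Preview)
The paper does not prove this statement at all: it is quoted verbatim from \citet[Theorem~3.2]{apers2023quantum} and used purely as a black box throughout (e.g., in \cref{prop:modlevapprox} and the initialization step of \cref{thm:weight-initialize}). There is therefore nothing in the present paper to compare your proposal against.

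That said, your sketch is a faithful outline of how the cited result is actually established in \citet{apers2023quantum}: the recursive halving of Cohen--Lee--Musco--Musco--Peng--Sidford combined with Grover/amplitude-amplification sampling of high-leverage rows, followed by a final high-accuracy round on the compressed matrix. Your identification of the per-query cost (one sparse product plus a solve against a precomputed factorization of $B^\top B$) is also how the $\widetilde O(r/\epsilon^2)$ bound arises. The obstacles you flag---controlling the approximate sampling distribution under amplitude amplification and the error propagation across $\log m$ levels---are real and are handled in that paper, not here.
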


We note that a row query to the $i$-th row of $A$ returns the entire row, and the query can be made in superposition.  Equivalently, one row query corresponds to $O \parens{r}$  queries to $\cO_A$, since each row contains at most $r$ nonzero entries.

\section{Quantum Algorithm for Multiscale Leverage Score Overestimates}
In this section, we present our core quantum algorithm, \emph{Quantum Multiscale Leverage Score Overestimates} (QMLSO), the key to enabling quantum acceleration in our GLM sparsification framework. We begin by defining MLSO, distilling the structural ideas from the approximate weight scheme of \citet{jambulapati2024sparsifying}, which serves as the GLM analogue of importance scores. We then describe our quantum procedure for computing MLSO values from an initial coarse weight estimate, using quantum leverage score approximation to achieve sublinear-time refinement. Finally, we develop a quantum algorithm for constructing suitable initial weights, ensuring that the entire MLSO computation can be executed fully within the quantum setting.

\subsection{Multiscale leverage score overestimates (MLSO)}
We begin by recalling the definition of approximate weights introduced by~\citet{jambulapati2024sparsifying}, which generalizes the notion of $\ell_p$ Lewis weights to capture the “statistical contribution’’ of each row at a given scale, where the scale refers to the magnitude range of residuals being considered.

\begin{definition}[Approximate Weights]Fix a matrix $A\in \RR^{m\times n}$ (where the $i$-th row is $a_i$ for each $i\in \sqb{m}$) and a family of loss functions $\cF=\sets{f_1,\ldots,f_m}$. We say a vector $w \in \RR^m_+$ is an $\alpha$-approximate weight (with respect to $A$ and $\cF$) at scale $s$ if, for all $i \in \sqb{m}$,
\begin{equation*}
	\frac{s}{\alpha}\leq \frac{f_i\parens{\norms{M^{-1/2}a_i}_2}}{w_i\norms{M^{-1/2}a_i}_2^2} \leq \alpha s, \quad \text{ where}\quad M :=\sum_{i =1}^m w_i a_i a _i ^{\top}.
\end{equation*}
\end{definition}
Since approximate weights are defined only at a single scale, they are not sufficient for handling general loss functions where homogeneity fails. To address this, \citet{jambulapati2024sparsifying} introduced weight schemes---collections of approximate weights defined consistently across a range of scales. These schemes guarantee that the contribution of each row is well controlled not only at a single scale but simultaneously across multiple scales, thereby capturing the multiscale structure inherent in $A$ and $\cF$.

\begin{definition}[Weight Schemes]Fix a matrix $A\in \RR^{m \times n}$ (the $i$-th row is $a_i$ for each $i\in \sqb{m}$) and a family of  loss functions $\cF=\sets{f_1,\ldots,f_m}$. Let $\cJ\subseteq \ZZ$ be a contiguous interval. We say that a family $\cW_{A,\cF,\cJ}=\sets{w ^{(j)}\in \RR^m_+: j \in \cJ}$ is an $\alpha$-approximate weight scheme if:
\begin{enumerate}
	\item  Each $w^{(j)}$ is an $\alpha$-approximate weight (with respect to $A$ and $f$) at scale $2^j$; 
	\item For every pair $j ,j +1\in \cJ$, we have $w ^{(j +1)}_{i}\leq \alpha w _i ^{(j)}, \forall i \in \sqb{m}$.
\end{enumerate}
We omit the subscript $(A,\cF,\cJ)$ when it is clear from context.
\end{definition}

When working with weight schemes, we ultimately need a compact summary that upper bounds the statistical contribution of each row across all relevant scales. This motivates the notion of multiscale leverage score overestimates. By aggregating leverage scores over the entire weight scheme, such overestimates provide a single vector that controls row contributions across all scales. This will later serve as the key object enabling efficient sampling and sparsification in our quantum algorithm.

\begin{definition}[Multiscale Leverage Score Overestimates]\label{def:mlso}
Let $A \in \RR^{m \times n}$ be a matrix with rows $a_1, \ldots, a_m$, let $\cF=\{f_1,\ldots,f_m\}$ be a family of loss functions, and let $\cJ \subseteq \ZZ$ be a contiguous interval. Suppose $\cW=\sets{w^{(j)}\in \RR^m_+: j \in \cJ}$ is an $\alpha$-approximate weight scheme  with respect to $A,\cF,\cJ$. We say that a vector $z\in \RR^m_+$ is a multiscale (leverage score) $\tau$-overestimates with respect to $A$ and $\cW$ if 
\begin{equation*}
	\norms{z}_1\leq \tau,\quad\text{ and }\quad  z_i\geq \sigma_i^{\multiscale}\parens{A,\cW},\quad\forall i\in \sqb{m},
\end{equation*}
where the multiscale leverage score of the $i$-th row is defined as
\begin{equation*}
	\sigma_i^{\multiscale}\parens{A,\cW} := \max_{j \in \cJ} \sigma_i\parens{W_{j}^{1/2} A}, \quad \forall i\in \sqb{m},
\end{equation*}
and $W_j := \diag (w^{(j)})$ for each $j \in \cJ$.
\end{definition}

\subsection{Quantum algorithm for MLSO}
We now present our quantum algorithm for computing multiscale leverage score overestimates (MLSO), which serves as a key subroutine in our main quantum algorithm for GLM sparsification.  
Our approach builds upon the classical contractive algorithm of~\citet{cohen2015lp,jambulapati2024sparsifying}, but is carefully adapted to the quantum setting.  

The input consists of quantum oracles for the matrix $A$ and the function family $\cF$, along with query access to an initial approximate weight vector $w^\circ$ and relevant parameters.  
The output is a data structure that provides efficient quantum query access to multiscale leverage score overestimates (see~\cref{prop:preparation-multiscale-overestimates} for a formal description).  

The algorithm proceeds iteratively: over $\widetilde O(1)$ rounds, it refines the largest-scale weight until it becomes well-conditioned. Once this condition is achieved, the algorithm recursively computes the weights at smaller scales and combines them to obtain the final MLSO.  
Within each iteration, the subroutine $\mathsf{ModLevApprox}$ supplies quantum query access to the modified leverage scores (\cref{prop:modlevapprox}), while $\mathsf{WeightCompute}$ produces quantum query access to the updated weights for the subsequent round (\cref{prop:weight-computation}).  
The complete procedure is summarized in \cref{alg:qmlso}, and its formal guarantees are established in \cref{thm:qmlso}.
Owing to space limitations, the proof of \cref{thm:qmlso} is deferred to \cref{apdx:qmlso}.

\begin{algorithm}[htb]
  \caption{$\textsc{QMLSO}(\cO_\cF,\cO_A, \cO_{w^\circ}, j_{\min},j _{\max}, \beta, L,\theta, c)$}\label{alg:qmlso}
  \begin{algorithmic}[1] %[1] enables line numbers
    \REQUIRE{}  A matrix oracle $\cO_A$; a proper loss functions oracle $\cO_\cF$ with parameters ${L,\theta,c}$; an initial weight oracle $\cO_{w^\circ}$  with associated parameters $\beta$ and $j_{\max}$; and an  integer $j_{\min}<j_{\max}$.
    \ENSURE{} An instance $\cZ$ of $\mathsf{QOverestimate}$ which stores the vector $z$ being a multiscale $O \parens{n\cdot\abss{j_{\max}-j_{\min}}}$-overestimate with respect to matrix $A$ and $\alpha$-approximate scheme $\cW=\sets{w ^{(j)}\in \RR^m_+ : j  \in \cJ:= \ZZ\cap [j_{\min},j_{\max}]}$ for some $\alpha=\alpha\paren{L,\theta,c}$.
    \STATE{} $\epsilon \gets 0.1$, $\delta\gets \max\set{\frac{1}{2}, \abss{\frac{\theta-2}{2}}}$, $C\gets\max\set{\frac{2L}{c} ,\frac{1}{c}}$
 \STATE{} $T\gets \ceil{\paren{\log\log\parens{\frac{1-\epsilon}{\parens{1+\epsilon}^2}\cdot C} -\log\log \beta} /\log \delta}$
   	\STATE{} $U_{w_{\text{itr}}^{(0)}}\gets \cO_{w^\circ}$
    \FOR{$i=0$ to $T$} 
    \STATE{} $\cL^{(i)}_{\text{itr}}\gets \mathsf{ModLevApprox}\parens{\cO_{A},U_{w^{(i)}_{\text{itr}}},\epsilon}$
    \STATE{} $U_{w^{(i+1)}_{\text{itr}}}\gets \mathsf{WeightCompute}\parens{\cO_\cF,U_{w^{(i)}_{\text{itr}}},\cL^{(i)}_{\text{itr}},2^{j_{\max}}}$
    \ENDFOR{}
    \STATE{} $U_{w^{(j_{\max})}} \gets U _{w_{\text{itr}}^{(T)}}$
	\FOR{$j=j_{\max}-1,\ldots $ to $j_{\min}$} 
    \STATE{} $\cL^{(j+1)}  \gets \mathsf{ModLevApprox}\parens{\cO_{A},U_{w^{(j+1)}},\epsilon}$
    \STATE{} $U_{w^{(j)}}\gets \mathsf{WeightCompute}\parens{\cO_\cF,U _{w^{(j+1)}}, {\cL}^{(j+1)} ,2^{j}}$
    \ENDFOR{}
	\STATE{} $\cZ\gets\mathsf{QOverestimate}\parens{\sets{ \cL ^{(j)}: j \in \cJ},\epsilon}$
  \end{algorithmic}
\end{algorithm}

\begin{theorem}[Quantum Multiscale Leverage Score Overestimates]\label{thm:qmlso}
There is a quantum algorithm $\textsc{QMLSO}(\cO_\cF, \cO_A, \cO_{w^\circ},j_{\min}, j_{\max},\beta,L,\theta,c)$ that, given query access  $\cO_A$ to a matrix $A=\RR^{m \times n}$ with row sparsity $r\leq n$, query access $\cO_\cF$ to a $\parens{L,\theta,c}$-proper loss functions family $\cF =\sets{f_1, \ldots, f_m }$, query access $\cO_{w^\circ}$ to a initial weight $w^\circ\in \RR^m_+$ which is a $\beta$-approximate weight at scale $2^{j_{\max}}$ (with respect to $A$ and $\cF$) for some integer $j_{\max}$, and any integer $j_{\min}<j_{\max}$, the algorithm makes $O\parens{\sqrt{mn}}$ queries to $\cO_{w^\circ}$, $O \parens{r\sqrt{mn}\Delta}$ queries to $\cO_{A}$, $O \parens{\sqrt{mn}\Delta}$ queries to $\cO_{\cF}$, runs in time $\widetilde O \paren{\parens{r\sqrt{mn}+n ^\omega+nr^2 }\Delta}$, where $\Delta=\abs{j_{\max}-j_{\min}}+\max\sets{\log\log\beta,0}$. Then, with high probability, it provides query access to a multiscale $O\paren{n\cdot\abss{j_{\max}-j_{\min}} }$-overestimates $z$ with respect to $A$ and  $\alpha$-approximate weight scheme  $\cW=\sets{w ^{(j)}\in \RR^m_+ : j \in \cJ:= \ZZ\cap [j_{\min},j_{\max}]}$ for some $\alpha=\alpha\parens{L,\theta,c}$, where each query to $z$ requires $\widetilde O\paren{r\abss{\cJ}}$ time.
\end{theorem}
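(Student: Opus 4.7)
The plan is to follow the classical iterated-contraction scheme of Jambulapati et al., but execute every step through quantum oracles so that each round is dominated by the cost of \cref{thm:quantum-leverage-score-aapproximation}. The analysis splits naturally into (i) correctness of the two loops of \cref{alg:qmlso}, (ii) the size bound and per-query cost of the overestimate, and (iii) total preprocessing complexity.

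For correctness, I would treat the two loops in order. In the top-scale loop, $\mathsf{ModLevApprox}$ supplies $(1\pm\epsilon)$-approximate leverage scores of $W^{1/2}A$ (via \cref{prop:modlevapprox}) and $\mathsf{WeightCompute}$ produces an updated weight at the same scale $2^{j_{\max}}$ (via \cref{prop:weight-computation}). Following the classical fixed-point argument, if $w^{(i)}_{\text{itr}}$ is a $\beta_i$-approximate weight then $w^{(i+1)}_{\text{itr}}$ is a $\beta_{i+1}$-approximate weight with $\log\beta_{i+1} \le \delta\log\beta_i + O(1)$, where $\delta<1$ for the chosen $\epsilon=0.1$ and the proper-loss parameters. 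Iterating $T=O(\log\log\beta)$ times contracts $\beta_T$ doubly exponentially to a constant depending only on $(L,\theta,c)$, giving a well-conditioned weight at scale $2^{j_{\max}}$. In the descending loop, one further application of $\mathsf{WeightCompute}$ moves from scale $2^{j+1}$ to $2^{j}$; invoking the $(L,\theta,c)$-proper property as in Jambulapati et al., I would verify that an $O(1)$-approximate weight at $2^{j+1}$ becomes an $\alpha$-approximate weight at $2^j$ with $\alpha=\alpha(L,\theta,c)$, and that consecutive weights obey the monotonicity $w^{(j+1)}_i\le \alpha w^{(j)}_i$. Together these yield an $\alpha$-approximate weight scheme on $\cJ$.

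Given the scheme, the vector $z$ returned by $\mathsf{QOverestimate}$ is a coordinate-wise maximum of $(1\pm\epsilon)$-approximations of $\sigma_i\parens{W_j^{1/2}A}$ over $j\in\cJ$, which immediately gives $z_i \ge \sigma_i^{\multiscale}\parens{A,\cW}$ and $\norms{z}_1 \le (1+\epsilon)\sum_{j\in\cJ}\norms{\sigma\parens{W_j^{1/2}A}}_1 \le O\parens{n\abs{\cJ}} = O\parens{n\abs{j_{\max}-j_{\min}}}$, since leverage scores of any matrix with $n$ columns sum to at most $n$. A per-query evaluation of $z_i$ compares $\abs{\cJ}$ individual leverage-score estimates, each of cost $\widetilde O\parens{r}$ by \cref{thm:quantum-leverage-score-aapproximation}, giving the claimed $\widetilde O\parens{r\abs{\cJ}}$ bound. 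For preprocessing, the algorithm makes $T+\abs{j_{\max}-j_{\min}}=O\parens{\Delta}$ calls to $\mathsf{ModLevApprox}$ and to $\mathsf{WeightCompute}$; summing the single-invocation query and time costs from \cref{thm:quantum-leverage-score-aapproximation}, \cref{prop:modlevapprox}, and \cref{prop:weight-computation} yields the stated query counts and the runtime $\widetilde O\paren{\parens{r\sqrt{mn}+n^\omega+nr^2}\Delta}$.

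The main obstacle will be propagating multiplicative errors coherently across the two loops in the quantum model. Both $\mathsf{ModLevApprox}$ and $\mathsf{WeightCompute}$ return only approximate answers, and these errors feed directly into the contraction constant; the fixed $\epsilon=0.1$ must be shown to yield $\delta<1$ and a final approximation factor $\alpha$ that depends only on $(L,\theta,c)$, uniformly in $\beta$ and $\Delta$. In addition, the per-invocation failure probabilities of the quantum subroutines must be amplified so that a union bound over all $O\parens{\Delta}$ rounds still leaves the whole procedure succeeding with high probability. Once these error-bookkeeping issues are handled, every remaining step is a direct quantum transcription of the classical argument of Jambulapati et al., with the leverage-score computations replaced by \cref{thm:quantum-leverage-score-aapproximation}.
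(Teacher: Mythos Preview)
Your proposal is correct and follows essentially the same approach as the paper: the contraction argument via the metric $d(u,w)=\max_i|\log(u_i/w_i)|$ to handle the top-scale loop and the descending loop, combined with the leverage-score sum bound for $\|z\|_1$, is exactly what the paper does in \cref{apdx:qmlso}. One small correction: per \cref{prop:preparation-multiscale-overestimates}, the vector $z$ is defined as the scaled \emph{sum} $z_i=\tfrac{1}{1-\epsilon}\sum_{j\in\cJ}\widetilde\sigma_i^{(j)}$ rather than a coordinate-wise maximum; your overestimate and $\ell_1$ arguments go through unchanged with this definition (indeed more directly, since $\tfrac{1}{1-\epsilon}\sum_j\widetilde\sigma_i^{(j)}\ge\sum_j\sigma_i^{(j)}\ge\max_j\sigma_i^{(j)}$), and the sum is what makes the coherent per-query computation in \cref{prop:preparation-multiscale-overestimates} straightforward.
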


\subsection{Initial Weight Construction}

When applying our QMLSO algorithm, it is crucial to start with a well-conditioned $\beta$-approximate weight, where $\beta$ cannot be too large.  
In this subsection, we show how to construct such an initial weight by leveraging quantum leverage score approximation together with a binary search procedure, combined with a mild modification of the loss functions $f_i$.  
Our initialization algorithm outputs a quantum data structure that, given query access to $A$ and $\mathcal{F}$, as well as the largest scale $s_{\max}$ and a parameter $\delta$, provides quantum query access to an initial weight vector $w^{\circ}$ (at scale $s_{\max}$) together with the corresponding modified loss family $\mathcal{F}^{\circ}$.  
This construction can be viewed as the quantum analogue of the initialization procedure introduced by~\citet{jambulapati2024sparsifying}.
We formalize this construction in the following theorem, and defer its proof to \cref{apdx:weight-initialize}.

\begin{theorem}[Weight Initialization]\label{thm:weight-initialize}
Assume query access $\cO_{A}$ to a matrix $A =\RR^{m \times n}$  with row sparsity $r$ and query access $\cO_\cF$ to a $\parens{L,\theta,c}$-proper loss functions family $\cF =\sets{f_1, \ldots, f_m }$. There exists a quantum data structure $\mathsf{WeightInitialize}$ that supports the  the following operations:
\begin{enumerate}
	\item Initialization: $\mathsf{WeightInitialize}\parens{\cO_{A},\cO_{\cF},s_{\max},\delta}$, outputs an instance $\cA$, 
	 making $\widetilde O \parens{r\sqrt{mn}}$  queries to $\cO_A$ and $\widetilde O \parens{ \sqrt{mn}}$ queries to $\cO_\cF$,  in $\widetilde O \parens{r\sqrt{mn} +n^\omega+nr^2 }$ time.
	\item Function query: $\cA.\mathsf{QueryFunction}$, provides a query access $\cO_{\cF^\circ}$ to a $\parens{\max\sets{1,L},\theta,c}$-proper loss functions family $\cF^\circ=\sets{f_1^\circ,\ldots,f_m^\circ}$ which satisfies
\begin{equation}\label{eq:weight-initialize}
	\sum_{i=1}^m f_i \paren{\ips{a_i}{x}}\leq s_{\max}\quad\Rightarrow \quad 0\leq \sum_{i=1}^m f_i^\circ \paren{\ips{a_i}{x}}-f_i \paren{\ips{a_i}{x}}\leq C_{\textsc{init}} \delta m^2 s_{\max},
\end{equation}
for some constant $C_{\textsc{init}}=C_{\textsc{init}}\parens{L,\theta,c}$. Each query requires $O (r)$ queries to $\cO_A$ and $O \parens{1}$ queries to $\cO_\cF$, and runs in $\widetilde{O}\parens{r}$ time.

	\item Weight query: $\cA.\mathsf{QueryWeight}$, provides a query access $\cO_{w^{\circ}}$ to a weight $w^\circ\in \RR^m_+$ which is a $\beta$-approximate weight at scale $s_{\max}$ with respect to $A$ and $\cF^\circ$, where $\beta= O \parens{(L/c)^2 m/\delta}$.
	Each query requires $O (r)$ queries to $\cO_A$ and $O (1)$ queries to $\cO_\cF$, and runs in $\widetilde{O}\parens{r}$ time.
\end{enumerate}
\end{theorem}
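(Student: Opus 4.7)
The plan is to implement the classical initialization procedure of \citet{jambulapati2024sparsifying} in the quantum model, replacing every leverage score computation by a quantum leverage score approximation call as provided by \cref{thm:quantum-leverage-score-aapproximation}. The construction has two ingredients: a mild modification of each loss $f_i$ to ensure sufficient lower quadratic growth at the target scale, and a binary search over a single scalar per row which, combined with quantum leverage score queries, produces the initial weight vector $w^\circ$. The data structure $\mathsf{WeightInitialize}$ performs the one-shot Apers-Gribling preprocessing during $\mathsf{Initialization}$, and each subsequent $\mathsf{QueryFunction}$ or $\mathsf{QueryWeight}$ call is served by one approximate-leverage-score query plus an $\cO_\cF$ evaluation.

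For the function modification I would set $f_i^\circ\parens{x} := f_i\parens{x} + \delta s_{\max} m^{-1} \phi_i\parens{x}$, where $\phi_i$ is a nonnegative quadratic-like function calibrated to $L$ and $c$. Nonnegativity of $\phi_i$ gives $f_i^\circ \geq f_i$, matching the lower bound in \cref{eq:weight-initialize}. For the upper bound, under the hypothesis $\sum_i f_i\parens{\ips{a_i}{x}} \leq s_{\max}$ each residual $\ips{a_i}{x}$ is uniformly controlled on the ``interesting'' range, so $\phi_i\parens{\ips{a_i}{x}}$ can be bounded in terms of a universal $O\parens{m s_{\max}}$ slack per index, and the perturbation summed over $i \in \sqb{m}$ stays within $O\parens{\delta m^2 s_{\max}}$. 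I would then verify directly that $f_i^\circ$ remains $\parens{\max\sets{1,L},\theta,c}$-proper: a nonnegative quadratic additive term preserves the lower $\theta$-homogeneity of $h_i^\circ = \sqrt{f_i^\circ}$, and its effect on the auto-Lipschitz constant is bounded after taking square roots.

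For the weight construction, the modification ensures that, for each $i$, the defining relation $w_i \norms{M^{-1/2}a_i}_2^2 \approx f_i^\circ\parens{\norms{M^{-1/2}a_i}_2}/s_{\max}$ of an approximate weight at scale $s_{\max}$ admits a solution with $w_i$ confined to a multiplicative range of width $O\parens{\parens{L/c}^2 m/\delta}$; this immediately yields the claimed $\beta$. Rather than solving the coupled fixed point exactly, I would binary-search in logarithmic scale over this range. At each step the only quantum operation needed is a leverage score query returning $\norms{M^{-1/2}a_i}_2^2$ up to a constant factor, which is serviced by the Apers-Gribling data structure of \cref{thm:quantum-leverage-score-aapproximation} in $\widetilde O\parens{r}$ time after a one-shot preprocessing of cost $\widetilde O\parens{r\sqrt{mn} + n^\omega + nr^2}$. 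Together with one $\cO_\cF$ evaluation for $f_i^\circ$, this gives the per-query costs stated in items 2 and 3 of the theorem, while the preprocessing cost matches item 1; the $\widetilde O\parens{\sqrt{mn}}$ $\cO_\cF$-query bound at initialization absorbs any global calibration step (e.g., sum estimation used to set the constants in $\phi_i$).

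The main obstacle is balancing the two competing demands on $f_i^\circ$: the perturbation must be small enough that its total contribution stays within $C_{\textsc{init}} \delta m^2 s_{\max}$ as required by \cref{eq:weight-initialize}, yet large enough to confine the fixed-point weight $w_i^\circ$ to a range of multiplicative width $O\parens{\parens{L/c}^2 m/\delta}$. Resolving this tradeoff requires simultaneously invoking the $L$-auto-Lipschitz property to upper-bound $h_i$ on the interval of interest, and the lower $\theta$-homogeneity with constant $c$ to lower-bound $f_i$ where needed, following the classical argument of \citet{jambulapati2024sparsifying}. Once the modification is fixed, the quantum portion is a modular application of \cref{thm:quantum-leverage-score-aapproximation} and the stated complexities follow by summing preprocessing and per-query costs.
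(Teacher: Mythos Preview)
Your proposal reverses the causal order of the paper's construction, and this creates a real gap. In the paper, the weight $w^\circ$ is computed \emph{first}---by (i) binary-searching for anchor points $\hat t_i$ with $f_i(\hat t_i)\approx s_{\max}$ (this is the only binary search; it is over a scalar argument of $f_i$, not over weights), (ii) calling $\mathsf{ModLevApprox}$ once on the fixed matrix $H^{1/2}A$ with $H=\diag(\hat t_i^{-2})$, and (iii) setting $w_i^\circ:=\delta/\widetilde\sigma_i$. Only \emph{after} $w^\circ$ is in hand is the modification defined as $f_i^\circ(t)=f_i(t)+s_{\max}w_i^\circ\, t^2$. The quadratic bump is self-referential: its coefficient \emph{is} $w_i^\circ$, which is exactly what forces $w^\circ$ to be a $\beta$-approximate weight for $\cF^\circ$ at scale $s_{\max}$ by construction (the added quadratic term alone satisfies the scale condition exactly, and the anchor/leverage choice controls the $f_i$ part). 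Your generic perturbation $\delta s_{\max} m^{-1}\phi_i$ with $\phi_i$ calibrated only to $L,c$ has no such link to $w^\circ$, so it does not by itself pin down a weight.

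Your proposed remedy---binary-searching per row for $w_i$ satisfying the approximate weight relation---does not work as stated. The relation involves $\norms{M^{-1/2}a_i}_2$ with $M=\sum_j w_j a_j a_j^\top$, so the rows are coupled: you cannot search for $w_i$ independently of the other $w_j$. If you fix $M$ from some proxy weighting and search only against that fixed $M$, you are no longer verifying the actual approximate-weight condition; if instead $M$ varies with the current guess, then the Apers--Gribling preprocessing (which is for a single fixed matrix) must be redone at every step, destroying the claimed cost. The paper sidesteps both issues by never solving a fixed point at all: the single leverage-score call on $H^{1/2}A$ and the self-referential quadratic bump together manufacture a pair $(w^\circ,\cF^\circ)$ that satisfies the definition directly.
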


\section{Quantum Algorithm  for Generalized Linear Models Sparsification}
In this section, we present our \emph{Quantum Algorithm for GLM Sparsification}, which builds on the MLSO procedure to achieve substantial quantum speedups. The first part outlines our quantum sparsification framework,  which implements efficient importance sampling by combining the preparation of many copies of a quantum state~\citep{hamoudi2022preparing} with quantum sum estimation~\citep{li2019sublinear}, enabling sublinear-time sampling and reweighting. 
 The second part demonstrates the versatility of this framework by embedding a broad class of optimization problems into it, showing that they can be solved with quantum speedups.

\subsection{Quantum Algorithm for GLM Sparsification}

To describe our quantum sparsification framework, we rely on two fundamental primitives. The first is Multiple Quantum State Preparation, which efficiently produces many copies of a quantum state; measuring these copies yields samples where each index $i$ is selected with probability proportional to $w_i$, thereby enabling importance sampling with sublinear overhead (restated in \cref{thm:quantum-prob-sample}). The second is Quantum Sum Estimation, which offers a quadratic speedup over classical approaches and is crucial for efficient reweighting in our framework (restated in \cref{thm:quantum-sum-estimate}).

With these tools in place, we now present our main result: the Quantum Algorithm for GLM Sparsification. Building on the QMLSO procedure and the primitives above, our algorithm efficiently sparsifies generalized linear models while achieving polynomial quantum speedups. The algorithm and its formal guarantee are stated below, with the proof deferred to \cref{apdx:glm-sparsification}.

\begin{algorithm}[htb]
	\caption{Quantum GLM Sparsification $\mathsf{QGLMSparsify}\parens{\cO_{\cF},\cO_A,L,\theta,c,\epsilon,s_{\min },s_{\max}}$}\label{alg:hyper-sparse}
	\begin{algorithmic}[1]
    	\REQUIRE{} A matrix oracle $\cO_A$; a proper loss functions oracle $\cO_\cF$ with parameters ${L,\theta,c}$; parameters $\epsilon>0,s_{\max}>s_{\min}\geq 0$.
		\ENSURE{} A non-negative weight vector $w\in \RR^m_+$ such that $\widetilde F(x):=\sum_{i=1}^m w_i f_i\parens{\ips{a_i}{x}}$ is an $\widetilde O \parens{{\log \parens{s_{\max}/s_{\min}}}\cdot  n/\epsilon^2}$-sparse $\epsilon$-approximate sparsifier of $F$ over the range $\sqbs{s_{\min},s_{\max}}$.
		\STATE{} $ \widetilde w =0, t \gets 1,  M \gets \widetilde\Theta \parens{n   /\epsilon^2}, \delta\gets O \parens{\epsilon s_{\min}/\parens{m^3 s_{\max}}}$
		\STATE{} $j_{\min}\gets \floors{\log s_{\min}-4\log m } ,j _{\max}\gets \ceils{\log s_{\max}}, \beta\gets O \parens{(L/c)^2m/\delta}$
		\STATE{}  $\cA \gets \mathsf{WeightInitialize}\parens{\cO_{A},\cO_{\cF},s_{\max},\delta}$
		\hfill (\cref{thm:weight-initialize})
		\STATE{}  $\cZ \gets \mathsf{QMLSO}(\cA.\mathsf{QueryFunction},\cA.\mathsf{QueryWeight} ,j_{\min} ,j_{\max},\beta,L,\theta,c)$
		\hfill (\cref{thm:qmlso})
		\STATE{} $\vartheta\gets \mathsf{MultiSample}(\cZ, M)$
		\hfill (\cref{thm:quantum-prob-sample})
		\STATE{} $\widetilde \nu \gets \mathsf{SumEstimate}(\cZ, 0.1)$
		\hfill (\cref{thm:quantum-sum-estimate})
		\FOR{$i=1$ to $M$}
		\STATE{} Get $z_{\vartheta_i}$ from $\cZ$
		\STATE{} $\widetilde w _{\vartheta_i} \leftarrow \widetilde w _{\vartheta_i}+ \widetilde \nu/\parens{1.1 \, M z_{\vartheta_i}  } $
		\ENDFOR{}
		\STATE{} Return $\widetilde w$.
	\end{algorithmic}
\end{algorithm}

\begin{theorem}[Quantum Algorithm for GLM Sparsification]\label{thm:quantum-glm-sparsification}
Let $F(x) := \sum_{i=1}^m f_i\left(\langle a_i, x \rangle \right)$ be a total loss function defined by vectors $a_1, \ldots, a_m \in \mathbb{R}^n$ (with sparsity $r\leq n$) and a $\parens{L,\theta,c}$-proper loss functions family $\cF =\sets{f_1, \ldots, f_m }$. There exists a quantum algorithm
	$\mathsf{QGLMSparsify}\parens{\cO_A,\cO_{\cF},L,\theta,c,\epsilon,s_{\min},s_{\max}}$ that, given query access $\cO_A$ to the matrix $A$ (where the $i$-row is $a_i$),  query access $\cO_\cF$ to the function family $\cF$ with parameters $L,\theta, c$, and numbers $\epsilon>0$, $s_{\max}>s_{\min}\geq 0$, outputs with high probability a non-negative weight vector $w\in \RR^m_+$ such that $\widetilde F:=\sum_{i=1}^m w_i f_i\parens{\ips{a_i}{x}}$ is an $\widetilde O \parens{\log \parens{s_{\max}/s_{\min}} \cdot n/\epsilon^2}$-sparse $\epsilon$-approximate sparsifier of $F$ over the range $[s_{\min},s_{\max}]$, in time $\widetilde O \paren{\parens{n ^\omega+nr^2 +r\sqrt{mn}/\epsilon }\cdot {\log \parens{s_{\max}/s_{\min}}}}$.
\end{theorem}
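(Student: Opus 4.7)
The plan is to establish three claims in sequence: that $\widetilde F$ is a valid $\epsilon$-approximate sparsifier of $F$ over $[s_{\min},s_{\max}]$, that the support size of $w$ matches the claimed sparsity, and that the total runtime matches the stated bound. The conceptual core is to transport the importance-sampling sparsification guarantee of \citet{jambulapati2024sparsifying} through two layers of approximation introduced by the quantum implementation: the loss modification $f_i \mapsto f_i^\circ$ delivered by $\mathsf{WeightInitialize}$ (\cref{thm:weight-initialize}), and the surrogate normalization $\widetilde\nu$ returned by $\mathsf{SumEstimate}$ in place of $\|z\|_1$.

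First, I would reduce sparsification of $F$ to sparsification of $F^\circ(x) := \sum_i f_i^\circ(\langle a_i,x\rangle)$. By \cref{eq:weight-initialize} and the choice $\delta = O(\epsilon s_{\min}/(m^3 s_{\max}))$, whenever $F(x) \leq s_{\max}$ one has $0 \leq F^\circ(x) - F(x) \leq C_{\textsc{init}}\,\delta\,m^2 s_{\max} = O(\epsilon s_{\min}/m)$, which is $O(\epsilon F(x)/m)$ on the range $F(x) \geq s_{\min}$; the analogous bound transfers to $|\widetilde F(x) - \widetilde F^\circ(x)|$ once the sampling weights are fixed. Absorbing this additive slack into the multiplicative tolerance reduces the task to producing an $\epsilon'$-approximate sparsifier of $F^\circ$, with $\epsilon' = (1-O(1/m))\epsilon$, over a slightly enlarged range. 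Since $\cF^\circ$ is $(\max\{1,L\},\theta,c)$-proper by \cref{thm:weight-initialize}, the structural results on approximate weight schemes from \citet{jambulapati2024sparsifying} apply to it unchanged.

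Second, I would invoke $\mathsf{QMLSO}$ on $(\cF^\circ, w^\circ)$ over $\cJ := \ZZ \cap [j_{\min}, j_{\max}]$ to obtain a multiscale overestimate $z$ with $\|z\|_1 = \widetilde O(n \log(s_{\max}/s_{\min}))$ and $z_i \geq \sigma_i^{\multiscale}(A,\cW)$ for an $\alpha$-approximate weight scheme $\cW$; the choice $j_{\min} = \lfloor \log s_{\min} - 4\log m\rfloor$ ensures that every per-row residual scale relevant to an $x$ with $F(x) \in [s_{\min}, s_{\max}]$ lies inside $\cJ$. Then the standard importance-sampling analysis of \citet{jambulapati2024sparsifying}, applied to $F^\circ$ and $z$, shows that $M$ independent draws $\vartheta_1, \ldots, \vartheta_M$ from $p_i \propto z_i$, reweighted by $\|z\|_1/(M z_{\vartheta_k})$, produce an $\epsilon$-approximate sparsifier of $F^\circ$ over $\cJ$ with high probability, provided $M = \widetilde\Theta(\|z\|_1/\epsilon^2)$; the $\log(s_{\max}/s_{\min})$ factor from $\|z\|_1$ is hidden inside $\widetilde\Theta$ and becomes the sparsity claimed in the theorem. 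The algorithm uses $\widetilde\nu/1.1$ in place of $\|z\|_1$, with $\widetilde\nu \in (1\pm 0.1)\|z\|_1$ guaranteed by \cref{thm:quantum-sum-estimate}, so the effective reweighting differs from the ideal one by only a constant factor, which is absorbed by rescaling inside the concentration argument without changing $\epsilon$ up to constants.

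Finally, the sparsity of $w$ is at most $M = \widetilde O(n \log(s_{\max}/s_{\min})/\epsilon^2)$ because each sampled index contributes to a single coordinate. For the runtime, the dominant preprocessing costs come from $\mathsf{WeightInitialize}$ and $\mathsf{QMLSO}$, contributing $\widetilde O((r\sqrt{mn} + n^\omega + nr^2)\log(s_{\max}/s_{\min}))$, since for our choice of $\delta$ we have $\Delta = |j_{\max}-j_{\min}| + \log\log\beta = \widetilde O(\log(s_{\max}/s_{\min}))$; $\mathsf{MultiSample}$ and $\mathsf{SumEstimate}$ add only lower-order terms, and the aggregation loop costs $\widetilde O(M\cdot r|\cJ|) = \widetilde O(rn\log^2(s_{\max}/s_{\min})/\epsilon^2)$, which is dominated by $(r\sqrt{mn}/\epsilon)\log(s_{\max}/s_{\min})$ whenever $\epsilon = \Omega(\sqrt{n/m})$. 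I expect the principal obstacle to be the bookkeeping of the composition of the three approximation errors---$F \leftrightarrow F^\circ$, $\cW \leftrightarrow$ the MLSO, and $\|z\|_1 \leftrightarrow \widetilde\nu$---so that they combine into a single clean multiplicative $\epsilon$ guarantee over the prescribed range, and the verification that $\cJ$ is wide enough to dominate every scale that can arise for admissible $x$.
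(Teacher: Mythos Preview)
Your proposal is correct and follows essentially the same approach as the paper: invoke $\mathsf{WeightInitialize}$ and $\mathsf{QMLSO}$ to obtain the overestimate $z$, apply the black-box importance-sampling guarantee of \citet{jambulapati2024sparsifying} (\cref{thm:glm-main}) to sparsify $F^\circ$, absorb the constant-factor slack from $\widetilde\nu \approx \|z\|_1$ into the sampling probabilities, and transfer back to $F$ via the modification bound \cref{eq:weight-initialize}. The only cosmetic difference is that the paper applies the $F\leftrightarrow F^\circ$ triangle inequality at the end rather than reducing at the start, and your anticipated ``principal obstacle'' (the bookkeeping of the three approximation layers) is exactly where the paper's proof spends its final displayed chain of inequalities.
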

We note that  scale ratio term $\log\parens{s_{\max}/s_{\min}}$ can be removed when all $f_i$ are $p$-homogeneous for some $p>0$ (i.e., $f_i(\lambda x)=\abs{\lambda}^p f_i(x)$, and in particular $\ell_p$ is $p$-homogeneous). For non-homogeneous cases such as $\gamma_p$, this  can  be reduces to $\log(\poly(m)) $ via properness assumption ~\citep[Lemma 1.5]{jambulapati2024sparsifying}.

\subsection{Applications}\label{subsec:application}
Firstly, it's not hard to verify that both $\ell_p(x)=\abss{x}^p$ and $\gamma_p$ (as defined in \cref{eq:def-gamma-p}) are $(L,\theta,c)$-proper for $p\in(0,2]$, with parameters $L=1$, $\theta=\tfrac{p}{2}$, and $c=1$. Applying \cref{thm:quantum-glm-sparsification}, this yields a $\widetilde{O}(n/\epsilon^2)$-sparse $\epsilon$-approximate  sparsifier for the $\ell_p$ and $\gamma_p$ losses, computable in time $\widetilde{O}\left(r\sqrt{mn}/\epsilon+n^\omega+nr^2\right)$. Given such an $\epsilon$-approximate sparsifier, one can then invoke a fast classical algorithm to obtain an $\epsilon$-accurate solution in $\poly(n,1/\epsilon)$ time. Notably, the case $p=2$ for $\ell_p$ recovers linear regression (\cref{cor:linear-regression}), while $p=1$ for $\gamma_p$ corresponds to Huber regression.

\begin{corollary}[Quantum $\ell_p$ Regression]\label{cor:ell-p-regression}
Let $p\in (0,2]$. There exists a quantum algorithm that, given query access to a matrix $A\in \RR^{m\times n}$ (with row sparsity $r \leq n$) and vector $b\in \RR^m$, and $\epsilon> 0$, outputs with high probability an $x\in \RR^n$ such that $\norms{Ax-b}^p_p \leq \parens{1+\epsilon} \min _{x\in\RR^n }\norms{Ax-b}^p_p$, in $\widetilde O \parens{r\sqrt{mn}/\epsilon}+\poly \parens{n,\epsilon}$ time. 
\end{corollary}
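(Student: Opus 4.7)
The plan is to cast $\ell_p$ regression as a GLM sparsification instance, invoke \cref{thm:quantum-glm-sparsification} to compress the $m$-row problem to $\widetilde O(n/\epsilon^2)$ rows, and then finish with any polynomial-time classical $\ell_p$ solver on the sparsifier.

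First, I would verify that $\cF = \sets{f_i(y) := \abss{y}^p : i \in \sqb{m}}$ is $(1, p/2, 1)$-proper in the sense of \cref{def:proper-loss}: the transformed function is $h_i(y) = \abss{y}^{p/2}$; since $p/2 \in (0,1]$, the map $t \mapsto t^{p/2}$ is concave and subadditive on $\RR_+$, which together with $\abss{\abss{x}-\abss{x'}} \leq \abss{x-x'}$ yields the auto-Lipschitz bound $\abss{\abss{x}^{p/2} - \abss{x'}^{p/2}} \leq \abss{x-x'}^{p/2}$, and lower $(p/2)$-homogeneity with $c=1$ is immediate. To absorb the response vector $b$, I would use the embedding mentioned after \cref{eq:glm}: set $\tilde a_i := (a_i, b_i) \in \RR^{n+1}$, which has at most $r+1 = O(r)$ nonzero entries, so that $\norms{Ax-b}_p^p = F(x,-1)$ for $F(y) := \sum_{i=1}^m \abss{\ips{\tilde a_i}{y}}^p$.

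Next, I would call $\mathsf{QGLMSparsify}$ on the augmented problem with accuracy $\epsilon' = \Theta(\epsilon)$, lower scale $s_{\min} = 0$, and $s_{\max}$ set to a constant-factor upper bound on the optimum---for instance a constant multiple of $\norms{b}_p^p$, which is itself the loss at $x=0$ and can be estimated in $\widetilde O(\sqrt m)$ quantum time via quantum sum estimation. By the remark after \cref{thm:quantum-glm-sparsification}, $p$-homogeneity of each $f_i$ kills the $\log(s_{\max}/s_{\min})$ factor, and the call returns, in time $\widetilde O(r\sqrt{mn}/\epsilon + n^\omega + nr^2)$, a weight vector $w$ with $\widetilde O(n/\epsilon^2)$ nonzeros such that $\widetilde F(y) := \sum_i w_i \abss{\ips{\tilde a_i}{y}}^p$ obeys $\abss{\widetilde F(y) - F(y)} \leq \epsilon' F(y)$ on the valid range. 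I would then hand the sparsified instance---an $\ell_p$ regression problem with only $\widetilde O(n/\epsilon^2)$ rows in dimension $n$---to any $(1+\epsilon')$-approximate classical $\ell_p$ solver (e.g.\ \citet{adil2019iterative, adil2024fast}), which runs in $\poly(n,1/\epsilon)$ time since the row count is independent of $m$. Chaining the two $(1\pm \epsilon')$ factors,
\begin{equation*}
\norms{A\tilde x-b}_p^p \leq \frac{\widetilde F(\tilde x,-1)}{1-\epsilon'} \leq \frac{(1+\epsilon')\widetilde F(x^*,-1)}{1-\epsilon'} \leq \frac{(1+\epsilon')^2}{1-\epsilon'}\min_x \norms{Ax-b}_p^p,
\end{equation*}
which collapses to $(1+\epsilon)\min_x \norms{Ax-b}_p^p$ for $\epsilon' = \Theta(\epsilon)$. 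Summing the two phases yields the advertised $\widetilde O(r\sqrt{mn}/\epsilon) + \poly(n,1/\epsilon)$ runtime.

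The main obstacle is the scale-window bookkeeping: the sparsifier's guarantee holds only for $y$ with $F(y) \in [s_{\min}, s_{\max}]$, so I must verify that both $(x^*,-1)$ and the solver's output $(\tilde x,-1)$ lie there. Choosing $s_{\min}=0$ trivializes the lower end; for the upper end, approximate optimality of $\tilde x$ on $\widetilde F$ forces $\widetilde F(\tilde x,-1) \leq (1+\epsilon')\widetilde F(0,-1) \leq (1+\epsilon')^2 \norms{b}_p^p$, so inflating $s_{\max}$ by a suitable constant absorbs all slack---and $p$-homogeneity means this constant factor costs nothing in the runtime. The degenerate case $\min_x \norms{Ax-b}_p = 0$ is handled separately by noting that $\widetilde F$ then also has zero minimum, which any reasonable classical solver recovers exactly.
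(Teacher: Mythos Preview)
Your proposal is correct and follows essentially the same approach as the paper: verify that $\abss{x}^p$ is $(1,p/2,1)$-proper, invoke \cref{thm:quantum-glm-sparsification} (using $p$-homogeneity to kill the $\log(s_{\max}/s_{\min})$ factor) to obtain an $\widetilde O(n/\epsilon^2)$-row sparsifier, and then run a classical $\ell_p$ solver on the compressed instance. Your write-up is in fact more careful than the paper's one-paragraph sketch in \cref{subsec:application}, which does not spell out the bias embedding, the scale-window choices, or the chaining of approximation factors.
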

\begin{corollary}[Quantum $\gamma_p$ Regression]\label{cor:gamma-p-regression}
Let $p\in (0,2]$. There exists a quantum algorithm that, given query access to a matrix $A\in \RR^{m\times n}$ where the $i$-th row is $a_i$ (with sparsity $r \leq n$) and vector $b\in \RR^m$, and $\epsilon> 0$, outputs with high probability an $x\in \RR^n$ such that $F\parens{x} \leq \parens{1+\epsilon} \min _{x\in\RR^n } F\parens{x}$ for $F \parens{x}=\sum_{i=1}^m \gamma_p \parens{\ips{a_i}{x}-b_i}$, in $\widetilde O \parens{r\sqrt{mn}/\epsilon}+\poly \parens{n,\epsilon}$ time. 
\end{corollary}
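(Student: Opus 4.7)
The plan is to derive \cref{cor:gamma-p-regression} by a direct application of \cref{thm:quantum-glm-sparsification} followed by a classical solver on the sparsified instance. First I would reduce to the bias-free case using the coordinate-appending trick discussed just before \cref{def:proper-loss}: replacing $a_i$ by $\parens{a_i,b_i}\in \RR^{n+1}$ and $x$ by $\parens{x,-1}$ rewrites the residual $\ips{a_i}{x}-b_i$ as a zero-bias inner product while increasing $n$ and $r$ by at most one, which is asymptotically harmless. The resulting GLM is governed by the loss family $\cF=\sets{f_1,\ldots,f_m}$ with $f_i=\gamma_p$ for all $i$; as noted immediately after \cref{thm:quantum-glm-sparsification}, this family is $\parens{1,p/2,1}$-proper for $p\in(0,2]$, so the hypothesis of \cref{thm:quantum-glm-sparsification} is met.

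Next I would choose the sparsification range. A convenient upper bound is $s_{\max}=F\parens{0}=\sum_{i=1}^m\gamma_p\parens{b_i}\geq\min_x F\parens{x}$, which can be obtained up to constant factors by a single quantum sum estimation in $\widetilde O\parens{\sqrt m}$ time. For the lower end, the remark after \cref{thm:quantum-glm-sparsification} invoking \citet[Lemma~1.5]{jambulapati2024sparsifying} guarantees that for a proper non-homogeneous loss such as $\gamma_p$ the relevant scale ratio collapses to $\log\parens{s_{\max}/s_{\min}}=O\parens{\log m}$, so one may set $s_{\min}=s_{\max}/\poly\parens{m}$ without losing the optimum. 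Invoking $\mathsf{QGLMSparsify}$ with parameters $\parens{L,\theta,c}=\parens{1,p/2,1}$, accuracy $\epsilon$, and this range produces, with high probability in time $\widetilde O\paren{\parens{r\sqrt{mn}/\epsilon+n^\omega+nr^2}\log m}$, a weight vector $w\in\RR_+^m$ with at most $\widetilde O\parens{n/\epsilon^2}$ nonzero entries such that $\widetilde F\parens{x}:=\sum_i w_i\gamma_p\parens{\ips{a_i}{x}-b_i}$ is an $\epsilon$-approximate sparsifier of $F$ on $\sqb{s_{\min},s_{\max}}$.

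Finally, I would feed the sparsified instance — a $\gamma_p$-regression problem whose data support has only $\widetilde O\parens{n/\epsilon^2}$ rows — into a classical $\gamma_p$-regression solver such as the one of \citet{jambulapati2024sparsifying}, which returns in $\poly\parens{n,1/\epsilon}$ time an $\widetilde x$ with $\widetilde F\parens{\widetilde x}\leq\parens{1+\epsilon}\min_x\widetilde F\parens{x}$. A short calculation applying the two-sided sparsification inequality at both $\widetilde x$ and the true optimizer $x^\star$ upgrades this into $F\parens{\widetilde x}\leq\parens{1+O\parens{\epsilon}}\min_x F\parens{x}$, which becomes the stated $\parens{1+\epsilon}$ bound after rescaling $\epsilon$ by a constant. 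Summing the quantum sparsification and classical solve costs yields the claimed $\widetilde O\parens{r\sqrt{mn}/\epsilon}+\poly\parens{n,1/\epsilon}$ runtime.

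The main obstacle I expect is the range constraint: the sparsifier only controls $F$ on $\sqb{s_{\min},s_{\max}}$, so one must verify that both $x^\star$ and every iterate of the classical solver remain inside this window. This is exactly what the properness machinery of \citet{jambulapati2024sparsifying} was designed to handle — bounding $\log\parens{s_{\max}/s_{\min}}$ by $O\parens{\log m}$ and preventing the optimum from falling below $s_{\min}$ — so once their structural lemma is invoked, the remaining bookkeeping on constant factors, failure probabilities of the quantum subroutines, and the $+1$ dimension overhead from bias embedding is routine and contributes only polylogarithmic or lower-order terms to the final runtime.
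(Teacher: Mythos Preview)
Your proposal is correct and follows essentially the same approach as the paper: verify that $\gamma_p$ is $\parens{1,p/2,1}$-proper, invoke \cref{thm:quantum-glm-sparsification} with the $\poly\parens{m}$ scale ratio supplied by \citet[Lemma~1.5]{jambulapati2024sparsifying}, and then run a classical $\gamma_p$ solver on the $\widetilde O\parens{n/\epsilon^2}$-row sparsifier. The paper's own treatment in \cref{subsec:application} is a one-paragraph sketch containing exactly these ingredients; your write-up simply fills in the details (bias embedding, explicit choice of $s_{\max}$ via sum estimation, and the two-sided sparsifier inequality at $\widetilde x$ and $x^\star$) that the paper leaves implicit.
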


Now we go further and consider the case of multiple regression. Given $A\in \RR^{m\times n}$ and $B\in \RR^{m\times N}$, the goal is to find $X\in \RR^{n \times N}$ such that
$\norm{AX-B}_F \leq (1+\epsilon)\norm{AX^\prime-B}_F$.
Writing $X=\sqb{x^1,\ldots,x^N}$ and $B=\sqb{b^1,\ldots,b^N}$, we expand
$
	\norms{AX-B}_F^2=\sum_{k=1}^{N}\norms{ Ax^k-b^k}^2=\sum_{k=1}^N\sum_{j=1}^{m}\parens{\ips{a_i}{x^k}-B_{jk}}^2.
$
Thus, multiple regression reduces to a single large-scale linear regression problem, with an effective data dimension $n^\prime= nN$. See \cref{cor:multiple-regression} for the formal description of the quantum algorithm.

We next turn to the ridge regression objective
$
F_{\mathrm{ridge}}(x) =\|Ax-b\|_2^2 + \lambda \|x\|_2^2.
$
The regularization term can be written as a sum of coordinate-wise quadratic losses:
$
\lambda \|x\|_2^2= \sum_{j=1}^n \big(\sqrt{\lambda}\langle e_j, x \rangle\big)^2.
$
For using the GLM sparsification framework, we define
\begin{equation}
\label{eq:ridge}
A^\prime=\begin{bmatrix}A\\ \sqrt{\lambda}I_n\end{bmatrix}\in\mathbb{R}^{(m+n)\times n}, 
\qquad
b^\prime=\begin{bmatrix}b\\ 0_n\end{bmatrix}\in\mathbb{R}^{m+n},
\end{equation}
and use the  quadratic loss $f(x)=x^2$. Then for every $x\in\mathbb{R}^n$ we have
$
\sum_{i=1}^{m+n} f\big(\langle a^\prime_i,x\rangle - b^\prime_i\big)
= \|Ax-b\|_2^2 + \lambda\|x\|_2^2
= F_{\mathrm{ridge}}(x).
$ Hence, ridge regression reduces to linear regression with an effective data size $m^\prime = m+n$. The formal quantum algorithm is given in \cref{cor:ridge-regression}.

Beyond ridge, the lasso regression objective introduces $\ell_1$ regularization: $F_{\mathrm{lasso}}(x)=|Ax-b|_2^2+\lambda|x|_1$. We define the loss functions
$\cF^\prime=\sets{f^\prime_1\parens{x}=\cdots=f^\prime_m\parens{x} ={x}^2; f^\prime_{m+1}\parens{x}=\cdots=f^\prime_{m+n}\parens{x}=\abss{x}_1}$, and adopt the notation from \cref{eq:ridge}.
It follows that
$\sum_{i=1}^{m+n} f^\prime \parens{\ips{a^\prime_i}{x}-b^\prime_i} =\norms{Ax-b}^2_2+\lambda \norms{x}_1$.
Hence, the GLM sparsification framework applies. The corresponding quantum algorithm is formally described in \cref{cor:lasso-regression}.

\bibliographystyle{plainnat}
\bibliography{ref}
\appendix

\section{Quantum computing}\label{apdx:quantum-basic}
In quantum mechanics, a $d$-dimensional quantum state
$\ket{v} = (v_0,\ldots,v_{d-1})^\top$ is represented as a unit vector in the Hilbert space $\mathbb{C}^d$, satisfying $\sum_{i\in\sqb{d}_0} |v_i|^2=1$.
The computational basis of this space is $\set{\ket{i}}_{i\in\sqb{d}_0}$, where
$\ket{i}=(0,\ldots,0,1,0,\ldots,0)^\top$ with the $i$-th coordinate (0-indexed) equal to 1 and all others 0.
For two states $\ket{u},\ket{v}\in\mathbb{C}^d$, their inner product is
$\langle u|v\rangle = \sum_{i\in\sqb{d}_0} u_i^* v_i$, with $z^*$ denoting complex conjugation.
The tensor product of $\ket{u}\in\mathbb{C}^{d_1}$ and $\ket{v}\in\mathbb{C}^{d_2}$ is their Kronecker product,
$\ket{u}\otimes \ket{v} = (u_0 v_0, u_0 v_1,\ldots,u_{d_1-1}v_{d_2-1})^\top$, 
often abbreviated as $\ket{u}\ket{v}$.

A quantum bit, or qubit, is simply a  quantum state in $\mathbb{C}^2$, expressible as $\ket{\psi}=\alpha\ket{0}+\beta\ket{1}$, where $\alpha,\beta\in\mathbb{C}$ and $|\alpha|^2+|\beta|^2=1$.
An $n$-qubit state belongs to the tensor product space $(\mathbb{C}^2)^{\otimes n}=\mathbb{C}^{2^n}$, with basis states ${\ket{i}}_{i\in\sqb{2^n}_0}$.
When measuring an $n$-qubit state $\ket{\psi}$ in this basis, the outcome $i$ is observed with probability $|\langle i|\psi\rangle|^2$.
Quantum operations are described by unitary operators $U$, which satisfy $UU^\dagger=U^\dagger U=I$, where $U^\dagger$ denotes the conjugate transpose and $I$ is the identity operator.

In many quantum algorithms, information is stored and accessed through quantum-read classical-write random access memory (QRAM)~\citep{giovannetti2008quantum}, which is employed in numerous quantum algorithms. QRAM allows the storage and modification of a classical array $a_1, \ldots, a_n$ while supporting quantum query access via the unitary
$U_{\textup{QRAM}}:\ket{i}\ket{0}\mapsto \ket{i}\ket{a_i}$.
If one is only concerned with quantum query complexity, the QRAM assumption can be removed at the cost of a polynomial increase in time complexity. While QRAM serves as a natural quantum analogue of the classical RAM model and is frequently employed in theoretical work, it is important to note that the practical realization of scalable QRAM remains highly uncertain given the current state of quantum hardware.

A quantum (query) algorithm $\cA$ is a quantum circuit consisting of a sequence
of unitaries $U_1, \ldots, U_T$, where each $U_t$ may represent a quantum gate, a
quantum oracle, or a QRAM operation.
The time complexity of $\mathcal{A}$ is measured by the total number $T$ of such operations.
The algorithm acts on $n$ qubits, initialized in the state $\ket{0}^{\otimes n}$.
Applying the unitaries sequentially produces the final state
$\ket{\psi} = U_T \dots U_1 \ket{0}^{\otimes n}$.
Finally, a measurement in the computational basis ${\ket{i}}_{i\in\sqb{2^n}_0}$ then yields a classical outcome $i$ with probability $|\langle i | \psi \rangle|^2$.

\section{Proof of~\cref{thm:qmlso}}\label{apdx:qmlso}

\subsection{Quantum Subroutines}

To present our algorithm more clearly, we introduce the following data structure, which can be viewed as a straightforward application of \cref{thm:quantum-leverage-score-aapproximation}.

\begin{proposition}[Quantum Modified Leverage Score Approximation]\label{prop:modlevapprox}
Assume query access $\cO_{A}$ to a matrix $A =\RR^{m \times n}$ with row sparsity $r$ and query access $\cO_w$ to a vector $w \in \RR^{m}_+$. For any $0<\epsilon\leq 1$, there exists a quantum data structure $\mathsf{ModLevApprox}$ that supports the  the following operations:
\begin{enumerate}
	\item Initialization: $\mathsf{ModLevApprox}\parens{\cO_{A},\cO_{w},\epsilon}$, outputs an instance $\cL$,  making $\widetilde O \parens{r\sqrt{mn}/\epsilon}$  queries to $\cO_A$ and $\widetilde O \parens{\sqrt{mn}/\epsilon}$ queries to $\cO_w$, in $\widetilde O \parens{r\sqrt{mn} /\epsilon+ \parens{n^\omega+nr^2}/\epsilon^2+ n^2/\epsilon^4 }$ time.
	\item Query: $\cL.\mathsf{Query}$, outputs an unitary satisfying
		\begin{equation*}
			\cL.\mathsf{Query}\ket{i}\ket{0}=\ket{i}\ket{\widetilde \sigma_i},
		\end{equation*}
		where for all $i \in \sqb{m}$, it holds that $(1-\epsilon)\sigma_i \leq \widetilde{\sigma}_i \leq (1+\epsilon)\sigma_i$, with $\sigma_i = \sigma_i\parens{W^{1/2}A}$ denoting the leverage score, and $W = \diag\parens{w}$. Each query requires $O \parens{r}$ queries to $\cO_A$ and $O(1)$ queries to $\cO_w$, and runs in $\widetilde{O}\parens{r/\epsilon^2}$ time.
\end{enumerate}
\end{proposition}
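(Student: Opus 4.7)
The plan is to reduce this to the quantum leverage score approximation result of \citet{apers2023quantum} (\cref{thm:quantum-leverage-score-aapproximation}) applied to the reweighted matrix $B := W^{1/2}A$, where $W = \diag\parens{w}$. Indeed, by definition the desired quantities are exactly $\sigma_i\parens{B}$, so once we implement quantum oracle access to $B$ from the given oracles $\cO_A$ and $\cO_w$, the preprocessing of \cref{thm:quantum-leverage-score-aapproximation} yields the data structure $\cL$ and its query procedure essentially for free.

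First I would construct oracle access to $B$. The $i$-th row of $B$ is $b_i = \sqrt{w_i}\, a_i$, so $B$ inherits the sparsity pattern of $A$ and in particular has row sparsity at most $r$. An element oracle $\cO_B^{\text{elem}} : \ket{i}\ket{j}\ket{0} \mapsto \ket{i}\ket{j}\ket{\sqrt{w_i}\, A_{ij}}$ is implemented by one query to $\cO_A$, one query to $\cO_w$, an arithmetic circuit that computes $\sqrt{w_i}\, A_{ij}$, and standard compute--copy--uncompute of the scratch registers so that the overall operation is a clean unitary. The sparsity-index oracle $\cO_B^{\text{idx}}$ coincides with $\cO_A^{\text{idx}}$ when $w_i > 0$ (rows with $w_i = 0$ can be treated as empty by convention). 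Consequently, one row query to $B$ costs $O\parens{r}$ queries to $\cO_A$ and $O\parens{1}$ queries to $\cO_w$, plus $\widetilde O\parens{r}$ arithmetic.

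Next I would invoke \cref{thm:quantum-leverage-score-aapproximation} with accuracy parameter $\epsilon$ on $B$. Its preprocessing uses $\widetilde O\parens{\sqrt{mn}/\epsilon}$ row queries, which pushes through to $\widetilde O\parens{r\sqrt{mn}/\epsilon}$ queries to $\cO_A$ and $\widetilde O\parens{\sqrt{mn}/\epsilon}$ queries to $\cO_w$; adding the classical work $n^\omega + \min\sets{n^\omega,nr^2}/\epsilon^2 + n^2/\epsilon^4$ stated in that theorem (and absorbing the lone $n^\omega$ into $\parens{n^\omega+nr^2}/\epsilon^2$ within $\widetilde O$) recovers the claimed initialization time. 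For the query phase, each estimate costs one row query to $B$ and $\widetilde O\parens{r/\epsilon^2}$ additional time, giving exactly $O\parens{r}$ queries to $\cO_A$, $O\parens{1}$ queries to $\cO_w$, and $\widetilde O\parens{r/\epsilon^2}$ time per call to $\cL.\mathsf{Query}$.

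The only point that needs explicit attention -- and which I would flag as the main (though mild) obstacle -- is that \cref{prop:modlevapprox} asks for a unitary $\ket{i}\ket{0}\mapsto\ket{i}\ket{\widetilde\sigma_i}$, rather than a sampler that produces $\widetilde\sigma_i$ classically; this coherent form is essential because \cref{alg:qmlso} composes $\cL.\mathsf{Query}$ inside further quantum subroutines. This is handled by the same reversible compute--copy--uncompute pattern used for $\cO_B^{\text{elem}}$: the Apers--Gribling estimator computes $\widetilde\sigma_i$ from $i$ by a deterministic arithmetic procedure acting on data stored in QRAM after preprocessing, so it can be executed in superposition and its scratch uncomputed, yielding the required clean unitary. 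No further ingredients are needed.
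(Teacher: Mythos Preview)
Your proposal is correct and follows essentially the same approach as the paper: define $B=W^{1/2}A$, observe it has the same row sparsity, implement row access to $B$ from $\cO_A$ and $\cO_w$ at cost $O(r)$ and $O(1)$ queries respectively, and invoke \cref{thm:quantum-leverage-score-aapproximation}. Your extra remark about realizing $\cL.\mathsf{Query}$ as a clean unitary via compute--copy--uncompute is a detail the paper leaves implicit but is entirely in the same spirit.
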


\begin{proof}[Proof of \cref{prop:modlevapprox}]
Write \(B := W^{1/2}A\in\RR^{m\times n}\), where
\(W = \diag(w)\).
Because \(W^{1/2}\) rescales each row of \(A\) by \(\sqrt{w_i}\),
\(B\) has the \emph{same} row sparsity \(r\). Let $A_i$ be the $i$-th row of $A$.
Fix a row index \(i\) and let \(S_i=\operatorname{supp}(A_{i\cdot})\) with
\(|S_i|=r_i\leq r\).
All entries of $A_i$ can be obtained using $O (r)$ queries to $\cO_A$, one query to $\cO_w$, and $\widetilde O(r)$ arithmetic time. For the superposition, we can perform
\[
\begin{aligned}
\ket{i}\ket{0}&\xmapsto{\cO_A^{\text{idx}}}\;
\frac{1}{\sqrt{r_i}}\sum_{j\in S_i}\ket{i}\ket{j}\ket{0}\\&\xmapsto{\cO_A^{\text{elem}}}\;
  \frac{1}{\sqrt{r_i}}\sum_{j\in S_i}\ket{i}\ket{j}\ket{A_{ij}}\ket{0}\\
&\xmapsto{\cO_w}\;
  \frac{1}{\sqrt{r_i}}\sum_{j\in S_i}\ket{i}\ket{j}\ket{A_{ij}}\ket{w_i}\ket{0}\\
&\xmapsto{U_\textsc{mult}}\;
  \frac{1}{\sqrt{r_i}}\sum_{j\in S_i}\ket{i}\ket{j}\ket{A_{ij}}\ket{w_i}\ket{\sqrt{w_i}A_{ij}}\\
&\xmapsto{\cO_w^{\dagger},{\cO_A^{\text{elem}}}^{\dagger},{\cO_A^{\text{idx}}}^{\dagger}}\;
  \frac{1}{\sqrt{r_i}}\sum_{j\in S_i}\ket{i}\ket{j}\ket{0}\ket{0}\ket{B_{ij}} .
\end{aligned}
\]
Thus one row query in \cref{thm:quantum-leverage-score-aapproximation}
corresponds \(O (r)\)  queries to \(\cO_A\), \(O (1)\) queries to \(\cO_w\), plus \(\widetilde O(r)\) arithmetic time.
Therefore, invoking the \cref{thm:quantum-leverage-score-aapproximation} on \(B\) with accuracy \(\epsilon\) gives, with high
probability:
(i) an initialisation that performs \(\widetilde O(\sqrt{mn}/\epsilon)\) row
queries to \(B\) and runs in
\(\widetilde O\bigl(r\sqrt{mn}/\epsilon+(n^{\omega}+nr^{2})/\epsilon^{2}+n^{2}/\epsilon^{4}\bigr)\);
substituting the row-query cost yields
\(\widetilde O(r\sqrt{mn}/\epsilon)\) calls to \(\cO_A\) and
\(\widetilde O(\sqrt{mn}/\epsilon)\) calls to \(\cO_w\);
(ii) a query unitary that maps \(\ket{i}\ket{0}\) to
\(\ket{i}\ket{\widetilde\sigma_i(B)}\) with
\((1-\epsilon)\sigma_i(B)\le\widetilde\sigma_i(B)\le(1+\epsilon)\sigma_i(B)\),
using one row query to \(B\).  Translating again, each query costs
\(O(r)\) calls to \(\cO_A\), \(O(1)\)  call to \(\cO_w\), and
\(\widetilde O(r/\epsilon^{2})\) time.
\end{proof}

The following proposition describes the weight update step in the algorithm QMLSO (\cref{alg:qmlso}).

\begin{proposition}[Weight Computation for Overestimates]\label{prop:weight-computation}
 Let $s>0$. Assume query access $\cO_\cF$ to the function family $\cF=\sets{f_1,\ldots,f_m}$ and query access $\cO_w$ to a vector $w\in \RR^m_+$. Let $\cL$ denote an instance of $\mathsf{ModLevApprox}$. Then, there exists a quantum algorithm $\mathsf{WeightCompute}\parens{\cO_\cF,\cO_w,\cL,s}$ such that 
	\begin{equation*}
		\mathsf{WeightCompute}\parens{\cO_\cF,\cO_w,\cL,s}\ket{i}\ket{0}=\ket{i}\ket{w_i^\prime}
	\end{equation*}
	where
	\begin{equation*}
		w_i^\prime =\frac{1}{s}\cdot \frac{f_i\parens{\sqrt{\sigma_i/w_i}}}{\sigma_i/w_i},
	\end{equation*}
	and $\sigma_i$ is the leverage score returned by querying $\cL$ at index $i \in \sqb{m}$. The algorithm makes $O(1)$ queries to $\cL$, $\cO_w$, and $\cO_\cF$, and requires an additional $\widetilde{O}(1)$ time for computation.
\end{proposition}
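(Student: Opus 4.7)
\medskip

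The plan is to realize $w_i'$ as a straightforward arithmetic postprocessing applied to the outputs of three elementary oracle calls: one call to $\cL.\mathsf{Query}$, one call to $\cO_w$, and one call to $\cO_\cF$. All three return classical numbers into dedicated registers when the index register holds a basis state $\ket{i}$, so the whole construction is a coherent composition of reversible subroutines.

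Concretely, I would build the unitary in five steps acting on the index register $\ket{i}$ together with a handful of ancillary registers initialized to $\ket{0}$. First, apply $\cL.\mathsf{Query}$ to write $\ket{\sigma_i}$ into an ancilla. Second, apply $\cO_w$ to write $\ket{w_i}$ into another ancilla. Third, use a reversible arithmetic circuit to compute the ratio $r_i := \sigma_i/w_i$ and its square root $\sqrt{r_i}$ into a fresh register (using fixed-precision arithmetic on the stored floating-point values, at $\widetilde O(1)$ cost). Fourth, invoke $\cO_\cF$ on $\ket{i}\ket{\sqrt{r_i}}$ to obtain $\ket{f_i(\sqrt{r_i})}$. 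Fifth, apply a final arithmetic circuit that computes $w_i' = f_i(\sqrt{r_i})/(s\, r_i)$ into the designated output register.

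To finish, I would uncompute every intermediate register in the standard way: reapply the inverses of $\cO_\cF$, the arithmetic step producing $\sqrt{r_i}$ and $r_i$, $\cO_w$, and $\cL.\mathsf{Query}$, in reverse order. Since each of these subroutines is itself a unitary (or its inverse), and since the output register is untouched by the uncomputation, the net effect is precisely the map $\ket{i}\ket{0}\mapsto \ket{i}\ket{w_i'}$ with every other ancilla restored to $\ket{0}$. The oracle query count is clearly $O(1)$ to each of $\cL$, $\cO_w$, and $\cO_\cF$ (two calls each after uncomputation), and the only additional work is the fixed-precision arithmetic for division, square root, evaluation, and rescaling by $1/s$, each of which is $\widetilde O(1)$.

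There is no substantive obstacle here; the proof is essentially bookkeeping. The one point requiring a line of care is to ensure that the ancillary registers holding $\sigma_i$, $w_i$, $r_i$, and $f_i(\sqrt{r_i})$ are all properly uncomputed \emph{after} the value $w_i'$ has been copied into the output register, so that the construction is a clean unitary rather than a map producing garbage entanglement. This is standard reversible-computation practice and carries no additional asymptotic cost.
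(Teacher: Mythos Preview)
Your proposal is correct and essentially identical to the paper's own proof: both load $\sigma_i$ via $\cL.\mathsf{Query}$, load $w_i$ via $\cO_w$, compute $\sigma_i/w_i$ and its square root, call $\cO_\cF$, form $w_i'$, and then uncompute the ancillas, with the same $O(1)$ oracle and $\widetilde O(1)$ arithmetic accounting.
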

\begin{proof}[Proof of \cref{prop:weight-computation}]
  Consider the following procedure:
  \[
  \begin{aligned}
    \ket{i}\ket{0}
    &\xmapsto{\cL.\mathsf{Query}}
      \ket{i}\ket{\sigma_i}\ket{0}\\
    &\xmapsto{\cO_w}
      \ket{i}\ket{\sigma_i}\ket{w_i}\ket{0}\\
    &\xmapsto{U_{\textsc{div}}}
      \ket{i}\ket{\sigma_i}\ket{w_i}\ket{\sigma_i/w_i}\ket{0}\ket{0}\\
    &\xmapsto{U_{\textsc{sqrt}}}
      \ket{i}\ket{\sigma_i}\ket{w_i}\ket{\sigma_i/w_i}\ket{\sqrt{\sigma_i/w_i}\ }\ket{0}\\
    &\xmapsto{\cO_\cF}
      \ket{i}\ket{\sigma_i}\ket{w_i}\ket{\sigma_i/w_i}\ket{\sqrt{\sigma_i/w_i}\ }\ket{f_i(\sqrt{\sigma_i/w_i}\ )}\\
    &\xmapsto{U_{\textsc{div}(sr)}}
      \ket{i}\ket{0}\ket{0}\ket{0}\ket{0}\ket{w_i'},
  \end{aligned}
  \]
  where \(U_{\textsc{div}}\) computes the division \(\sigma_i/w_i\),
  \(U_{\textsc{sqrt}}\) computes square root, and
  \(U_{\textsc{div}(sr)}\) carries out the arithmetic
  \(w_i':=\frac1s\cdot\frac{f_i(\sqrt{\sigma_i/w_i}\ )}{r}\) and then uncomputes the ancillary
  registers \(\sigma_i,w_i,\sigma_i/w_i,\sqrt{\sigma_i/w_i}\ \).
  
The computation uses $O(1)$ queries to $\cL$, $\cO_w$, and $\cO_\cF$, and  $\widetilde O(1)$ additional arithmetic operations, as claimed in the proposition.
\end{proof}

The following proposition specifies the output data structure produced by QMLSO (\cref{alg:qmlso}).
 
\begin{proposition}[Preparation for Multiscale Overestimates]\label{prop:preparation-multiscale-overestimates}
Let $0<\epsilon<1$, $\cJ = \sets{j_{\min}, \ldots, j_{\max}} \subseteq \ZZ$ be a contiguous interval. Suppose we are given a collection of instances $\sets{\cL^{(j)} : j \in \cJ}$, where each $\cL^{(j)}$ is an instance of $\mathsf{ModLevApprox}$. Then, there exists a quantum data structure $\mathsf{QOverestimate}$ that supports the following operations:
	\begin{itemize}
		\item Initialization: $\mathsf{QOverestimate}\parens{\sets{ \cL ^{(j)}: j\in \cJ},\epsilon}$, outputs an instance
		      $\mathcal{Z}$ in total time $ \widetilde O\parens{\sum_{j \in \cJ} \zeta _{j} }$,
		      where $\zeta _{j}$ denotes the needed time of initialization of $\cL^{(j)}$.
		\item Query: $\cZ.\mathsf{Query}$, outputs an unitary such that for every $i \in \sqb{m}$,
		      \begin{equation*}
			      \cZ.\mathsf{Query}\ket{i}\ket{0}=\ket{i}\ket{z _i},
		      \end{equation*}
		      where
		      \begin{equation*}
			      z _i = \frac{1}{1-\epsilon}  \sum_{ j \in \cJ}  \sigma ^{(j)}_i,
		      \end{equation*}
		      and $\sigma ^{(j)}_i$ is leverage score returned by querying $\cL^{(j)}$ at index $i\in \sqb{m}$. This query is executed  in $ \widetilde O\parens{\sum_{j\in \cJ}  \iota_j }$ time,
		      where $\iota_j$ represents the time required to querying $\cL^{(j)}$.
	\end{itemize}
\end{proposition}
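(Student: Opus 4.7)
The plan is to realize $\mathsf{QOverestimate}$ as a thin coherent wrapper around the given collection $\sets{\cL^{(j)} : j \in \cJ}$, which already knows how to compute its own approximate leverage scores. Since each $\cL^{(j)}.\mathsf{Query}$ is by assumption a unitary $\ket{i}\ket{0}\mapsto\ket{i}\ket{\sigma_i^{(j)}}$, the aggregated quantity $z_i = \frac{1}{1-\epsilon}\sum_{j\in\cJ}\sigma_i^{(j)}$ can be produced by straightforward coherent arithmetic followed by uncomputation.

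For the \emph{initialization}, the instance $\cZ$ needs only to hold (references to) each $\cL^{(j)}$ together with the interval $\cJ$ and the parameter $\epsilon$. Whatever setup cost is charged to building a single $\cL^{(j)}$ is exactly $\zeta_j$, and summing over $j$ gives the claimed $\widetilde O\paren{\sum_{j\in\cJ}\zeta_j}$ bound; no further preprocessing is performed.

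For the \emph{query}, I would implement $\cZ.\mathsf{Query}$ on input $\ket{i}\ket{0}$ by the following three-step procedure. (i) Allocate $\abss{\cJ}$ fresh ancilla registers and, for each $j\in\cJ$ in turn, apply $\cL^{(j)}.\mathsf{Query}$ to write $\ket{\sigma_i^{(j)}}$ into the $j$-th ancilla; this produces $\ket{i}\bigotimes_{j\in\cJ}\ket{\sigma_i^{(j)}}\ket{0}$. (ii) Apply an arithmetic circuit that sums the contents of the $\abss{\cJ}$ ancillas, rescales by the classical constant $\tfrac{1}{1-\epsilon}$, and writes the result $z_i$ into the output register. (iii) Uncompute each ancilla by applying $\paren{\cL^{(j)}.\mathsf{Query}}^\dagger$ in reverse order, leaving $\ket{i}\ket{z_i}$ with all ancillas back in $\ket{0}$. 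Step (i) and step (iii) together cost one forward and one inverse query to each $\cL^{(j)}$, totalling $\widetilde O\paren{\sum_{j\in\cJ}\iota_j}$; the arithmetic in step (ii) is an additive $\widetilde O(\abss{\cJ})$ overhead and fits inside the same bound.

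There is essentially no hard step: because each $\cL^{(j)}.\mathsf{Query}$ is a deterministic unitary, the uncomputation in step (iii) is exact and requires no garbage-collection gymnastics or probabilistic reasoning, and the rescaling by $\tfrac{1}{1-\epsilon}$ is just a classical constant baked into the arithmetic unitary of step (ii). The only thing to be careful about is that the registers used to hold each $\sigma_i^{(j)}$ must be kept separate so that the inverse queries in step (iii) act on the correct sub-register; this is a standard bookkeeping matter. Correctness of the stated value of $z_i$ is then immediate from the definition, and the time bounds follow directly from counting the $2\abss{\cJ}$ calls to the $\cL^{(j)}.\mathsf{Query}$ unitaries.
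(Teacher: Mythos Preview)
Your proposal is correct and follows essentially the same approach as the paper: treat $\mathsf{QOverestimate}$ as a thin wrapper that coherently queries each $\cL^{(j)}$, sums and rescales the results, and uncomputes. The only cosmetic difference is that the paper uses a single accumulator register (interleaving query, add, and uncompute for each $j$ in turn) rather than allocating $\abss{\cJ}$ parallel ancillas and uncomputing at the end; this saves $O(\abss{\cJ})$ ancilla registers but is otherwise the same construction with the same time bounds.
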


\begin{proof}[Proof of \cref{prop:preparation-multiscale-overestimates}]
For every \(j \in \cJ\), suppose we hold an instance \(\cL^{(j)}\) of $\mathsf{ModLevApprox}$. Denote by \(\zeta_j\) its initialization cost and by \(\iota_j\) the cost of a single query. The initialization of $\mathsf{QOverestimate}$ consists of initializing all instances \(\cL^{(j)}\) for \(j \in \cJ\) and storing $\epsilon$ in QRAM in $\widetilde O \parens{1}$ time, allowing access through a unitary $U_\epsilon: \ket{i}\ket{0}\to \ket{i}\ket{1/1-\epsilon}$.

For the query operation, on input \(\ket{i}\ket{0}\), attach an accumulator register initialized to \(\ket{0}\), and iterate over the scales \(j \in \cJ\):
\[
\begin{aligned}
\ket{i}\ket{0} &
\xmapsto{\cL^{(j_{\min})}.\mathsf{Query}}     
  \ket{i}\ket{\sigma^{(j_{\min})}_i}\ket{0} 
\xmapsto{U_{\textsc{add}}}                           
  \ket{i}\ket{\sigma^{(j_{\min})}_i}\ket{\sigma^{(j_{\min})}_i} \ket{0}
  \\&\xmapsto{(\cL^{(j_{\min})}.\mathsf{Query})^{\dagger}}     
  \ket{i}\ket{0}\ket{\sigma^{(j_{\min})}_i}\ket{0} 
\\&\xmapsto{\cL^{(j_{\min}+1)}.\mathsf{Query}}      
  \ket{i}\ket{\sigma^{(j_{\min}+1)}_i}\ket{\sigma^{(j_{\min})}} \ket{0}
\xmapsto{U_{\textsc{add}}}                           
  \ket{i}\ket{\sigma^{(j_{\min}+1)}_i}\ket{\sigma^{(j_{\min})}+\sigma^{(j_{\min}+1)}_i} \ket{0}
  \\&\xmapsto{(\cL^{(j_{\min}+1)}.\mathsf{Query})^{\dagger}}  
  \ket{i}\ket{0}\ket{\sigma _i^{(j_{\min})}+\sigma _i^{(j_{\min}+1)}}\ket{0}
\\&\xmapsto{\quad\cdots\quad}  
  \ket{i}\ket{0}\ket{\sum_{j\in \cJ}\sigma _i^{(j)}}\ket{0}
  \\&
  \xmapsto{ U _{\epsilon}}  
  \ket{i}\ket{1/\parens{1-\epsilon}}\ket{\sum_{j\in \cJ}\sigma _i^{(j)}}\ket{0}
    \xmapsto{ U _{\textsc{mult}}}  
  \ket{i}\ket{1/\parens{1-\epsilon}}\ket{\sum_{j\in \cJ}\sigma _i^{(j)}}\ket{z_i},
\end{aligned}
\]
The adder \(U_{\textsc{add}}\) and multiplier $U _{\textsc{mult}}$  are implemented using fixed-point arithmetic and costs \(\widetilde{O}(1)\) gates.
The query procedure calls each \(\cL^{(j)}\) once and invokes its inverse once; together this costs \(\Theta(\iota_j)\). Summing over all \(j \in \cJ\), the total runtime is \(O\bigl(\sum_{j \in \cJ} \iota_j\bigr)\). The arithmetic overhead from the adders contributes \(\widetilde{O}(|\cJ|)\) additional gates, which is absorbed in the stated complexity.
\end{proof}

\subsection{Proof of \cref{thm:qmlso}}

We first analyze the time complexity. 
Since $\theta\in (0,4)$,  we have $\delta= \max\set{\frac{1}{2}, \abss{\frac{\theta-2}{2}}}<1$, and thus $T=\Theta_{L,\theta,c}\parens{\log\log \beta}$. In each iteration $i\leq T$, the subroutine $\mathsf{ModLevApprox}$ runs in $\widetilde O \parens{r \sqrt{mn}+n ^\omega+n r ^2}$ time (absorbing the cost of queries), and provides query access such that  each query requires $O (r)$ calls to $\cO _A$, $O (1)$ calls to $\cO_{w^{(i)}_{\text{itr}}}$, and $\widetilde O (r)$ time (\cref{prop:modlevapprox}). The subroutine  $\mathsf{WeightCompute}$ then constructs $U_{w^{(i+1)}_{\text{itr}}}$, where each query  requires $\widetilde O \parens{r}$ time.  This is because it makes $O(1)$ calls to $\cL_{\text{itr}}^{(i)}$ and $\cO_{w^{(i)}_{\text{itr}}}$, each of which can be executed in $\widetilde{O}(r)$ time (\cref{prop:weight-computation}).
For the downward recursion across $\abs{\cJ}$ scales (lines 9-12 of \cref{alg:qmlso}), the complexity is the same per step. Therefore, the total cost of iterations and recursion is $\widetilde O\paren{ \parens{r \sqrt{mn}+n^\omega+n r ^2}\Delta}$, makeing $O\parens{\sqrt{mn}}$ queries to $\cO_{w^\circ}$, $O \parens{r\sqrt{mn}\Delta}$ queries to $\cO_{A}$, $O \parens{\sqrt{mn}\Delta}$ queries to $\cO_{\cF}$, where $\Delta=\abss{j_{\max}-j_{\min}}+\max\sets{\log \log\beta,0}$.
In the final step of the algorithm, the data structure $\cZ$ can be initialized in
  $\widetilde O \paren{\parens {r \sqrt{mn}+n^\omega+nr^2}\abss{\cJ}}$ time, and each call to  $\cZ.\mathsf{Query}$ can be executed
  in $\widetilde O\paren{r\abss{\cJ}} $ time. This follows directly from 
  \cref{prop:preparation-multiscale-overestimates}, with
  parameters $\zeta_t =\widetilde O\parens {r \sqrt{mn}+n^\omega+nr^2}$   and $\iota_t = \widetilde O\parens {r}$.

In the remainder of this section, we will show that the vector $z\in \RR^m$ corresponding to the output of QMLSO (\cref{alg:qmlso}) is indeed an 
$O \parens{n\cdot\abss{j_{\max}-j_{\min}}}$-overestimate with respect to  $A$ and an $\alpha$-approximate scheme $\cW$.
To analyze the correctness of the algorithm, we first introduce the following metric.  
\begin{definition}
	Define a metric $d$ on $\RR^m_+$ by
\begin{equation*}
	d \parens{u,w}:= \max\set{\abss{\log \frac{u_i}{w_i}}: i \in \sqb{m}}
\end{equation*}
\end{definition}
It is straightforward to verify that $d$ is indeed a metric, as it satisfies symmetry, positivity, and the triangle inequality.

A useful characterization is the following, which is immediate from the definition: it shows that being an
$\alpha$-approximate weight is equivalent to requiring that $w$ is
almost a fixed point of the update map $\phi_s$, up to distance
$\log \alpha$ in the metric $d$.
\begin{fact}\label{fa:approximate-weight}
Fix a matrix $A\in \RR^{m\times n}$ (where the $i$-th row is $a_i$ for each $i\in \sqb{m}$) and a family of loss functions $\cF=\sets{f_1,\ldots,f_m}$. A vector $w \in \RR^m _+$ is an $\alpha$-approximate weight  (with respect to $A$ and $\cF$)
 at scale $s$ if and only if 
	\begin{equation*}
		d \parens{w, \phi_s \parens{w}} \leq \log \alpha
	\end{equation*}
	where $\phi_s:\RR^m_+\to \RR^m_+$ is the update function defined by
\begin{equation*}
		\parens{\phi_s\parens{ w}}_i :=\frac{1}{s}\cdot \frac{f_i\parens{\sqrt{\sigma_i\parens{W^{1/2}A}/w_i}}}{\sigma_i\parens{W^{1/2}A}/w_i},\quad \text{and}\quad W:=\diag\paren{w}.
\end{equation*}
\end{fact}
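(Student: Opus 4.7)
The plan is to unwind both sides of the claimed equivalence to the same pointwise inequality, using a single algebraic identification between the leverage score $\sigma_i(W^{1/2}A)$ and the quantity $\|M^{-1/2}a_i\|_2^2$ appearing in the definition of approximate weight.

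First I would observe that if $B:=W^{1/2}A$ with $W=\diag(w)$, then the $i$-th row of $B$ is $\sqrt{w_i}\,a_i$ and $B^\top B=\sum_{i=1}^m w_i a_i a_i^\top = M$. By the definition of leverage score in \cref{eq:leverage-score}, this gives the key identity
\begin{equation*}
\sigma_i\bigl(W^{1/2}A\bigr)
=\bigl(\sqrt{w_i}\,a_i\bigr)^{\top} M^{+}\bigl(\sqrt{w_i}\,a_i\bigr)
= w_i\,\bigl\lVert M^{-1/2}a_i\bigr\rVert_2^2,
\end{equation*}
so that $\sigma_i(W^{1/2}A)/w_i = \|M^{-1/2}a_i\|_2^2$ and $\sqrt{\sigma_i(W^{1/2}A)/w_i}=\|M^{-1/2}a_i\|_2$ (here I would briefly handle the $\|M^{-1/2}a_i\|_2=0$ case, where $a_i$ lies in the kernel of $M$, so the corresponding row contributes $w_i a_i a_i^\top = 0$ and the inequalities hold trivially or require the convention that $0/0$ is interpreted consistently on both sides; this is implicit in the approximate-weight definition).

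Plugging the identity into the update map $\phi_s$ from the statement yields
\begin{equation*}
\frac{\phi_s(w)_i}{w_i}
= \frac{1}{s\,w_i}\cdot\frac{f_i\bigl(\lVert M^{-1/2}a_i\rVert_2\bigr)}{\lVert M^{-1/2}a_i\rVert_2^{2}}
= \frac{1}{s}\cdot\frac{f_i\bigl(\lVert M^{-1/2}a_i\rVert_2\bigr)}{w_i\,\lVert M^{-1/2}a_i\rVert_2^{2}}.
\end{equation*}
Hence the sandwich $\frac{s}{\alpha}\le \frac{f_i(\|M^{-1/2}a_i\|_2)}{w_i\|M^{-1/2}a_i\|_2^{2}}\le \alpha s$ in the definition of $\alpha$-approximate weight is, after dividing by $s$, precisely the statement $\alpha^{-1}\le \phi_s(w)_i/w_i \le \alpha$, which in turn is equivalent to $|\log(\phi_s(w)_i/w_i)|\le \log\alpha$.

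Finally I would quantify over $i\in\sqb{m}$: the pointwise bound holds for every $i$ if and only if $\max_i |\log(\phi_s(w)_i/w_i)|\le \log\alpha$, which by definition of the metric is $d(w,\phi_s(w))\le \log\alpha$. Since every step above is an equivalence, the ``if and only if'' follows. There is no real obstacle here; the only care needed is the degenerate case $\|M^{-1/2}a_i\|_2=0$, which is handled by the convention that such rows are excluded from both conditions consistently (equivalently, by taking the pseudoinverse on the range of $M$ and observing that such rows contribute nothing to either inequality).
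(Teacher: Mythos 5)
Your proof is correct and matches what the paper leaves implicit: the paper states this fact is ``immediate from the definition'' and provides no argument, and your unwinding of $\sigma_i(W^{1/2}A) = w_i\lVert M^{-1/2}a_i\rVert_2^2$ followed by the pointwise logarithmic reformulation is exactly the intended derivation. The remarks on the degenerate case $\lVert M^{-1/2}a_i\rVert_2=0$ are a sensible precaution though not addressed in the paper.
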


Furthermore,  $\cW=\sets{w ^{(j)}\in \RR^m_+: j \in \cJ}$ is an $\alpha$-approximate weight scheme if and only if
\begin{align}\label{eq:weight-scheme-equality-1}
	\forall j\in\cJ:\ 
d\bigl(w^{(j)},\phi_{2^{j}}\bigl(w^{(j)}\bigr)\bigr)\leq \log\alpha;
\\\label{eq:weight-scheme-equality-2}
\forall j,j+1\in\cJ:\ d \bigl(w^{(j+1)},w^{(j)}\bigr)\leq \log \alpha.
\end{align}

In the noisy setting (where \cref{alg:qmlso} actually operates), we define the update function $\widetilde{\phi}_{s,\epsilon} : \mathbb{R}_{+}^m \to \mathbb{R}_{+}^m$ by
\begin{equation}\label{eq:noisy-update-function}
\paren{\widetilde \phi_{s, \epsilon} \parens{w}}_i:=\frac{1}{s} \frac{f_i\left(\sqrt{\tilde{\sigma}_i / w_i}\right)}{\tilde{\sigma}_i / w_i},\quad \forall i \in\sqb{m}, 
\end{equation}
where $\widetilde{\sigma}_i $ is an $\epsilon$-approximation of $\sigma_i\parens{W^{1/2}A}$, i.e., $(1-\epsilon)\sigma_i \parens{W^{1/2}A}\leq \widetilde{\sigma}_i \leq (1+\epsilon)\sigma_i\parens{W^{1/2}A}$. 

Our analysis builds on the ideas of \citet{jambulapati2024sparsifying}. We use the following result, which shows that the weight update procedure remains ``contractive'' even when noise is present.

\begin{lemma}[Contraction of a noisy update,{\cite[Lemma 2.9]{jambulapati2024sparsifying}}]\label{le:contraction-update}
Let $\delta=\max\set{\frac{1}{2}, \abss{\frac{\theta-2}{2}}}$, $C=\max\set{\frac{2L}{c} ,\frac{1}{c}}$.
For any $w,w ^\prime\in \RR^m _+$ and $\epsilon,\epsilon^\prime \in [0,1)$, it holds that 
$	d \parens{\widetilde \phi_{s,\epsilon} \parens{w},\widetilde \phi_{s,\epsilon^\prime}\parens{w^\prime}}\leq  \delta\cdot  d \parens{w ,w ^\prime}+\log C+ \log\frac{1+\epsilon}{1-\epsilon^\prime}$.
\end{lemma}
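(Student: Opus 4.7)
The plan is to reconstruct the Jambulapati--Jia--Sidford--Tian--Zhang argument by analyzing the log-ratio of the two noisy updates coordinate by coordinate, separating the contribution of the weight change from that of the multiplicative noise in the leverage-score approximation.

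First, fix $i\in\sqb{m}$ and set $\tau:=\widetilde\sigma_i/w_i$, $\tau':=\widetilde\sigma'_i/w'_i$, where $\widetilde\sigma_i,\widetilde\sigma'_i$ are the multiplicative approximations of $\sigma_i(W^{1/2}A),\sigma_i(W'^{1/2}A)$ entering the definitions of $\widetilde\phi_{s,\epsilon}(w)$ and $\widetilde\phi_{s,\epsilon'}(w')$. Since $f_i=h_i^2$ and the common prefactor $1/s$ cancels, the log-ratio decomposes as
\begin{equation*}
\log\frac{\paren{\widetilde\phi_{s,\epsilon}(w)}_i}{\paren{\widetilde\phi_{s,\epsilon'}(w')}_i}=2\log\frac{h_i(\sqrt{\tau})}{h_i(\sqrt{\tau'})}-\log\frac{\tau}{\tau'}.
\end{equation*}
The second term is controlled by the standard fact that leverage scores are multiplicatively Lipschitz in the weight: since $e^{-d(w,w')}W\preceq W'\preceq e^{d(w,w')}W$ pushes forward to $A^\top W A$ and then to its Moore--Penrose pseudoinverse, and since $\widetilde\sigma_i,\widetilde\sigma'_i$ are within factors $1\pm\epsilon$ and $1\pm\epsilon'$ of their true values, one obtains $\abss{\log(\tau/\tau')}\le d(w,w')+\log\tfrac{1+\epsilon}{1-\epsilon'}$.

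The more delicate step is bounding the $h_i$-ratio. Writing $\lambda:=\sqrt{\tau}/\sqrt{\tau'}$ (WLOG $\lambda\ge 1$) and $y:=\sqrt{\tau'}$, the proper-loss conditions supply two-sided control of $h_i(\lambda y)/h_i(y)$. The lower bound $h_i(\lambda y)\ge c\lambda^\theta h_i(y)$ is immediate from lower $\theta$-homogeneity. For the upper bound one iterates the auto-Lipschitz inequality $\abss{h_i(\lambda y)-h_i(y)}\le L\,h_i((\lambda-1)y)$, using lower homogeneity \emph{in reverse} to estimate $h_i((\lambda-1)y)\le (\lambda-1)^\theta h_i(y)/c$ on $\lambda\in[1,2]$, and a doubling/telescoping argument for $\lambda\ge 2$. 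Substituting these into the decomposition and performing a case analysis (on $\lambda\in[1,2]$ vs.\ $\lambda\ge 2$, and on $\theta\le 2$ vs.\ $\theta>2$) yields
\begin{equation*}
\abss{\log\tfrac{\paren{\widetilde\phi_{s,\epsilon}(w)}_i}{\paren{\widetilde\phi_{s,\epsilon'}(w')}_i}}\le \delta\,\abss{\log(\tau/\tau')}+\log C,
\end{equation*}
with the stated $\delta$ and $C$; combining with the bound on $\abss{\log(\tau/\tau')}$ and taking the maximum over $i\in\sqb{m}$ gives the claimed inequality.

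The main obstacle is this case analysis. Because $h_i$ is only \emph{lower} $\theta$-homogeneous and the upper bound produced by auto-Lipschitz is not a pure power law, combining the two conditions without losing factors requires a nontrivial telescoping. The tight contraction coefficient $\delta=\max\sets{1/2,\abss{\theta-2}/2}$---rather than the naive $\abss{\theta-1}$ that pure monomial homogeneity would suggest---emerges from this interplay: the $1/2$ regime is dominated by auto-Lipschitz, which forces essentially linear growth of $h_i$ regardless of $\theta$, while the $\abss{\theta-2}/2$ regime is dominated by homogeneity and becomes relevant only for $\theta$ far from $2$. Obtaining these sharp constants, rather than loose factors scaling polynomially with $L$, is where most of the technical work of a direct proof would need to be invested.
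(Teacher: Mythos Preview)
The paper does not prove this lemma at all: it is quoted verbatim as \cite[Lemma~2.9]{jambulapati2024sparsifying} and used as a black box in the analysis of \textsc{QMLSO}. So there is no ``paper's own proof'' to compare against; what you are doing is attempting to reconstruct the argument from the cited source.

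Your outline is structurally sound and matches the way the original proof is organized. The coordinatewise decomposition
\[
\log\frac{(\widetilde\phi_{s,\epsilon}(w))_i}{(\widetilde\phi_{s,\epsilon'}(w'))_i}
=2\log\frac{h_i(\sqrt\tau)}{h_i(\sqrt{\tau'})}-\log\frac{\tau}{\tau'}
\]
is correct, as is the bound $\abss{\log(\tau/\tau')}\le d(w,w')+\log\tfrac{1+\epsilon}{1-\epsilon'}$, which follows from the elementary fact that $\sigma_i(W^{1/2}A)/w_i=a_i^\top(A^\top WA)^+a_i$ is multiplicatively $1$-Lipschitz in $W$ together with the noise guarantees. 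Your identification of the remaining task---two-sided control of $h_i(\lambda y)/h_i(y)$ via lower $\theta$-homogeneity (for the lower bound) and auto-Lipschitz plus reversed homogeneity (for the upper bound)---is also the right shape.

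That said, what you have written is a plan rather than a proof: you explicitly defer the case analysis that produces the specific constants $\delta=\max\{1/2,|\theta-2|/2\}$ and $C=\max\{2L/c,1/c\}$, and this is precisely the substance of the lemma. In particular, the naive route of plugging the homogeneity bound $\log\tfrac{h_i(\lambda y)}{h_i(y)}\ge\theta\log\lambda+\log c$ directly into your decomposition does not by itself give the stated inequality (the additive $\log c$ term does not match $\tfrac12\log C$ without further work), and your ``doubling/telescoping'' sketch for the upper bound needs to be made precise to see why the contraction exponent bottoms out at $1/2$ rather than degrading. These are exactly the points you flag as obstacles, so your self-assessment is accurate; just be aware that until that case analysis is carried out, the proposal does not yet constitute a proof.
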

At the beginning, assuming the initial weight is $\beta$-approximate, we apply the triangle inequality of $d$ together with \cref{le:contraction-update} to get
\begin{align*}
	d \paren{w^{\parens{0}}_\text{itr} , \widetilde \phi _{s_{\max,\epsilon}}\paren{w^{\parens{0}}_\text{itr}}}&\leq 	d \paren{w^{\parens{0}}_\text{itr} ,  \phi _{s_{\max}}\paren{w^{\parens{0}}_\text{itr}}}+	d \paren{\phi _{s_{\max}}\paren{w^{\parens{0}}_\text{itr}}, \widetilde \phi _{s_{\max,\epsilon}}\paren{w^{\parens{0}}_\text{itr}}}\\
	&\leq\log \beta+\log C +\log \parens{1+\epsilon}.
\end{align*}

For the iterative procedure (lines 4-7 of \cref{alg:qmlso}), by applying \cref{le:contraction-update} inductively, we obtain
\begin{align*}
	d \paren{w^{\parens{T}}_\text{itr} , \widetilde \phi _{s_{\max,\epsilon}}\paren{w^{\parens{T}}_\text{itr}}}&\leq \delta^T	\cdot d \paren{w^{\parens{0}}_\text{itr} ,  \widetilde\phi _{s_{\max},\epsilon}\paren{w^{\parens{0}}_\text{itr}}}+	\sum_{i=0}^{T-1}\delta ^i \paren{\log C +\log \frac{1+\epsilon}{1-\epsilon}}
	\\&\leq  \delta^T	\cdot \log \beta+\sum_{i=0}^{T}\delta ^i \paren{\log C +\log \frac{1+\epsilon}{1-\epsilon}}
	\\&\leq  \frac{\delta ^T \log \beta +\log C +\log \frac{1+\epsilon}{1-\epsilon}}{1-\delta }.
\end{align*}
Therefore, after $T$ iterations, we have
\begin{align*}
		d \paren{w^{\parens{T}}_\text{itr} , \phi _{s_{\max}}\paren{w^{\parens{T}}_\text{itr}}}
		&\leq 	d \paren{w^{\parens{T}}_\text{itr} , \widetilde \phi _{s_{\max},\epsilon}\paren{w^{\parens{T}}_\text{itr}}}+	d \paren{\widetilde \phi _{s_{\max},\epsilon}\paren{w^{\parens{T}}_\text{itr}} , \widetilde\phi _{s_{\max},0}\paren{w^{\parens{T}}_\text{itr}}}\\
		&\leq d \paren{w^{\parens{T}}_\text{itr} , \widetilde \phi _{s_{\max},\epsilon}\paren{w^{\parens{T}}_\text{itr}}}+\log C+\log \parens{1+\epsilon}
		\\&\leq \frac{\delta ^T \log \beta +\log C +\log \frac{1+\epsilon}{1-\epsilon}}{1-\delta }+\log C+\log \parens{1+\epsilon}
		\\&\leq \frac{\delta ^T \log \beta +2\log C +2\log \parens{1+\epsilon}-\log\parens{1-\epsilon}}{1-\delta }
		\\&\leq \frac{3 \log  C }{1-\delta },
\end{align*}
where 
the first inequality follows from the triangle inequality of $d$, 
the second inequality applies \cref{le:contraction-update}, 
the third inequality substitutes the previous bound, 
and the last inequality follows from the choice of $T$ (line~2 of \cref{alg:qmlso}).

Set
$\alpha:= C^{3/\parens{1-\delta}} 
  = \Theta_{L,\theta,c}(1)$.
By \cref{fa:approximate-weight}, we have that $w^{(j_{\max})} = w^{(T)}_{\text{itr}}$ is $\alpha$-approximate at scale $s_{\max} = 2^{j_{\max}}$.

We now proceed with a downward recursion across scales (lines 9-12 of \cref{alg:qmlso}).
For each $j=j_{\max} -1,\dots,j_{\min}$, define
\(
w^{(j)}:=\widetilde\phi_{2^{j},\epsilon}\bigl(w^{(j+1)}\bigr).
\) By applying \cref{le:contraction-update} inductively, we obtain that
\begin{align*}
& \ d \paren{w^{(j)},  \widetilde\phi_{2^{j},\epsilon}\paren{w^{(j)}}}\\&= 
d \paren{\widetilde\phi_{2^{j},\epsilon}\paren{w^{(j+1)}}, \widetilde\phi_{2^{j},\epsilon}^2\paren{w^{(j+1)}}}
\\&
\leq \delta	\cdot d \paren{w^{\parens{j+1}} ,  \widetilde\phi _{2^{j},\epsilon}\paren{w^{\parens{j+1}}}}+	\log C +\log \frac{1+\epsilon}{1-\epsilon}
\\&=
\delta	\cdot \paren{d \paren{w^{\parens{j+1}} ,  \widetilde\phi _{2^{j+1},\epsilon}\paren{w^{\parens{j+1}}}}+ \log 2} +	\log C +\log \frac{1+\epsilon}{1-\epsilon} 
\\&
\leq \delta^{j_{\max}-j}	\cdot \paren{d \paren{w^{\parens{j_{\max}}},  \widetilde\phi _{s_{\max},\epsilon}\paren{w^{\parens{j_{\max}}}}}+\log 2}+	\sum_{i=0}^{j_{\max}-1-j}\delta ^i \paren{\log C +\log \frac{1+\epsilon}{1-\epsilon}}
	\\&
	\leq \log \alpha+\log 2+\frac{ \log C +\log \frac{1+\epsilon}{1-\epsilon}}{1-\delta }.
\end{align*}
Using the above result, we can now verify that \cref{eq:weight-scheme-equality-1} holds.
\begin{align*}
d \paren{w^{(j)},  \phi_{2^{j}}\paren{w^{(j)}}}&=d \paren{w^{(j)},  \widetilde\phi_{2^{j},\epsilon}\paren{w^{(j)}}}+
d \paren{\widetilde\phi_{2^{j},\epsilon}\paren{w^{(j)}}, \widetilde\phi_{2^{j},0}\paren{w^{(j)}}}
\\&\leq 
d \paren{w^{(j)},  \widetilde\phi_{2^{j},\epsilon}\paren{w^{(j)}}}+\log C+\log \parens{1+\epsilon}
\\&
\leq \log \alpha+\log 2+\frac{ \log C +\log \frac{1+\epsilon}{1-\epsilon}}{1-\delta }+\log C+\log \parens{1+\epsilon}
\\&\leq \log  \alpha +\log 2+\frac{ 2\log C +2\log \parens{1+\epsilon}-\log\parens{1-\epsilon}}{1-\delta }
\\&\leq2\log \alpha+\log 2.
\end{align*}
And confirm that \cref{eq:weight-scheme-equality-2} is satisfied.
\begin{equation*}
	d \paren{w^{(j+1)}, w ^{(j)}}=
d \paren{w^{(j+1)}, \widetilde \phi_{2^{j},\epsilon}\paren{w^{(j+1)}}}
=d \paren{w^{(j+1)}, \widetilde \phi_{2^{j+1},\epsilon}\paren{w^{(j+1)}}}+\log 2
\leq 2 \log 2\alpha. 
\end{equation*}
Therefore, $\cW := \set{w^{(j)} : j  \in \cJ:= \ZZ\cap [j_{\min},j_{\max}]}$ forms a $4\alpha^2$-approximate weight scheme. It remains to verify that the vector $z$ in line 12 of \cref{alg:qmlso} (see \cref{prop:preparation-multiscale-overestimates} for details) is an $ O \parens{n \cdot\abss{j_{\max}-j_{\min}}}$-overestimate with respect to $\cW$.

For every $i\in \sqb{m}$, note that
\begin{equation*}
	z_{i}=\frac{1}{1-\epsilon}\sum_{j\in \cJ}\widetilde \sigma_i \paren{W^{1/2}_j A}\geq \sum_{j\in \cJ} \sigma_i \paren{W^{1/2}_j A}
	\geq \max_{j\in \cJ}\sigma_i \parens{W^{1/2}_j A},
\end{equation*}
and
\begin{align*}
\norms{z}_1&=\sum_{i\in \sqb{m}} z_i=\frac{1}{1-\epsilon}\sum_{i\in \sqb{m}}\sum_{j \in \cJ}\widetilde \sigma_i \paren{W^{1/2}_j A}\leq \frac{1+\epsilon}{1-\epsilon}\sum_{j\in \cJ}\sum_{i \in \sqb{m}}\sigma_i \paren{W^{1/2}_j A}\\&\leq \frac{1+\epsilon}{1-\epsilon}\sum_{j\in \cJ}\trace\parens{W_j^{1/2}A\paren{A^\top W_jA}^+ (W^{1/2}_jA)^\top  }\leq\frac{1+\epsilon}{1-\epsilon}\sum_{j\in \cJ}\rank \parens{A}
\\&
\leq \frac{1+\epsilon}{1-\epsilon}\parens{j_{\max}-j_{\min}}n.
\end{align*}

\section{Proof of~\cref{thm:weight-initialize}}\label{apdx:weight-initialize}

\begin{proof}[Proof of~\cref{thm:weight-initialize}] 
The construction follows the same three–step recipe as in the classical
argument of \citet{jambulapati2024sparsifying}:  
(1) pick anchor points $\hat t_i$ on each loss function;  
(2) use $\mathsf{ModLevApprox}$ to estimate leverage scores at those anchors and
convert them into weights;  
(3) add a quadratic bump to every $f_i$ to obtain a better-behaved family
$\cF^\circ$ together with an approximate weight vector~$w^\circ$.

\textbf{Step 1: Anchor Selection.}  
Since each $f_i$ is $(L,\theta,c)$-proper, we can efficiently search over $t>0$ to locate anchor points $\widehat t_i$ satisfying
\begin{equation*}
\frac{1}{2}s_{\max}\leq f _i \parens{\widehat t_i}\leq s_{\max},\quad i \in \sqb{m},
\end{equation*}
as detailed in \cref{prop:weight-init-search}. Thus, a quantum oracle $\cO_t$ to these anchors can be constructed using $ O(1)$ queries to $\cO_{\cF}$.

\textbf{Step 2: Quantum Leverage Score Approximation.} 
Invoke $\mathsf{ModLevApprox}(\cO_{A}, \cO_{h}, \frac{1}{2})$, where $\cO_{h}$ queries the vector $h=(\widehat t_1^{-2},\dots,\widehat t_m^{-2})$, constructed in $\widetilde O(1)$ time via $O(1)$ queries to $\cO_t$. The quantum algorithm uses $\widetilde O(r\sqrt{mn})$ queries to $\cO_A$, $\widetilde O(\sqrt{mn})$ queries to $\cO_\cF$, and runs in $\widetilde O(r\sqrt{mn}+n^\omega+nr^2)$ time. It outputs an approximation $\widetilde\sigma$ satisfying $\frac{1}{2}\sigma(H^{1/2}A)\leq\widetilde\sigma_i\leq\frac{3}{2}\sigma(H^{1/2}A)$ for $H=\text{diag}(h)$. Set
\begin{equation*}
	w_i^\circ:=\delta/ \widetilde \sigma_i, \quad i \in \sqb{m},
\end{equation*}
and define query access $\cO_{w^{\circ}}$, where each query requires $O(r)$ queries to $\cO_A$, $O(1)$ queries to $\cO_{\cF}$, and takes $\widetilde O(r)$ time.

\textbf{Step 3: Modified Loss Definition.}
Define modified losses as:
\begin{equation*}
f_i^\circ\parens{t}= f_i \parens{t}+s_{\max} w_i ^\circ\cdot t ^2, \quad\forall  i \in \sqb{m}.
\end{equation*}

Following~\citet[Fact 2.10]{jambulapati2024sparsifying}, each $(f_i^\circ)^{1/2}$ is $\max\sets{1,L}$-auto-Lipschitz, $\theta$-lower-homogeneous with constant $c$, making $\cF^\circ$ a $(\max\sets{1,L},\theta,c)$-proper family. The correctness of \cref{eq:weight-initialize} and that $w^\circ$ is $\beta$-approximate at scale $s_{\max}$ directly follows from~\citet[Theorem 2.11]{jambulapati2024sparsifying}, yielding $C_{\textsc{init}}=2(2L/c)^{2/\theta}$.
\end{proof}

\begin{proposition}\label{prop:weight-init-search}
Let $\cO_f$ be an oracle for a $\parens{L,\theta,c}$-proper function $f:\RR \to \RR_+$. For any $s_{\max} > s_{\min} > 0$, there exists an algorithm 
$\mathsf{FindSol}(\cO_f, s_{\min}, s_{\max})$
that outputs $x>0$ such that $s_{\min} \leq f(x) \leq s_{\max}$, making at most
$O_{L,\theta,c}\left(\log\frac{s_{\max}}{s_{\min}}\right)$
queries to $\cO_f$, and running in $
O_{L,\theta,c}\left(\log\frac{s_{\max}}{s_{\min}}\right)$ time.
\end{proposition}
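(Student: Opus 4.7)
The plan is to run a classical exponential/bisection search on $x>0$, using the multiplicative control that $(L,\theta,c)$-properness places on $f$ under rescaling. Writing $h=\sqrt{f}$, the $L$-auto-Lipschitz condition applied to $(x,x/2)$ yields $h(x)\leq(1+L)h(x/2)$, hence $f(x/2)\geq f(x)/(1+L)^2$, while lower $\theta$-homogeneity applied to $x/2$ with $\lambda=2$ gives $h(x)\geq c\cdot 2^\theta h(x/2)$, hence $f(x/2)\leq f(x)/(c^2\cdot 4^\theta)$. Thus halving $x$ divides $f$ by a factor lying in $[\alpha,\beta]:=[c^2\cdot 4^\theta,(1+L)^2]$, and doubling multiplies by the symmetric factor. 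Both $\alpha$ and $\beta$ are $\Theta_{L,\theta,c}(1)$ and depend only on the three structural parameters.

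Next I pick an integer scaling factor $\lambda^\star\geq 2$ large enough that $c(\lambda^\star)^{\theta}\geq 2$; this choice depends only on $L,\theta,c$. A single ``$\lambda^\star$-ing'' multiplies $f$ by at least $4$, and by telescoping the auto-Lipschitz inequality by at most $B:=(\lambda^\star L)^2=O_{L,\theta,c}(1)$. Starting from $x_0=1$ with one query to $f(x_0)$, I repeatedly scale $x$ by $\lambda^\star$ if the current $f(x)<s_{\min}$ and by $1/\lambda^\star$ if $f(x)>s_{\max}$. Each such step moves $\log f(x)$ by a definite amount in the expected direction, so after $O_{L,\theta,c}(\log(s_{\max}/s_{\min}))$ steps either $f(x)\in[s_{\min},s_{\max}]$ and we output $x$, or the last step overshot the whole target window $[s_{\min},s_{\max}]$.

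In the overshoot case we are left with $x_L<x_R=\lambda^\star x_L$ satisfying $f(x_L)<s_{\min}$ and $f(x_R)>s_{\max}$ (up to symmetric relabeling). I then refine using the finer factor $\lambda=2$: each refinement step now changes $\log f$ by at most $\log\beta=O_{L,\theta,c}(1)$ and at least $\log\alpha$. For the losses arising in our applications, $f$ is monotone on $[0,\infty)$, so a standard bisection on $x$ in $[x_L,x_R]$ locates an anchor with $f(x)\in[s_{\min},s_{\max}]$ in another $O_{L,\theta,c}(\log(s_{\max}/s_{\min}))$ queries; for the purely abstract case one replaces bisection by an interval-shrinking step that uses only the $[\alpha,\beta]$ bounds to guarantee progress. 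Each query is followed by $O_{L,\theta,c}(1)$ arithmetic operations (a comparison and a fixed-point multiplication of $x$ by $\lambda^\star$ or $2$), so the running time matches the query count.

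The main obstacle is the overshoot: if $\log(s_{\max}/s_{\min})<\log B$, a single $\lambda^\star$-step can multiply $f$ by more than the width of the target window, so the exponential phase alone cannot land $x$ inside. The crux of the argument is therefore to show that the refinement phase closes the remaining gap in only $O_{L,\theta,c}(\log(s_{\max}/s_{\min}))$ further queries, by coupling the $[\alpha,\beta]$ step bounds with the length of the log-$f$ interval between $f(x_L)$ and $f(x_R)$, which is at most $\log B+\log(s_{\max}/s_{\min})$.
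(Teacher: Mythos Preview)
Your overall two-phase plan (exponential search, then bisection) is the same as the paper's, and the exponential phase is fine modulo the same caveat the paper makes about $f(1)$ being polynomially bounded in $s_{\min},s_{\max}$ (which you do not mention, but neither does the proposition statement). The real gap is in the refinement phase. You invoke monotonicity of $f$ on $[0,\infty)$, which is \emph{not} part of the hypotheses of the proposition, and for the abstract case you only gesture at ``an interval-shrinking step that uses only the $[\alpha,\beta]$ bounds.'' But your $[\alpha,\beta]$ bounds control $f(x)/f(x/2)$, i.e., the behaviour of $f$ under \emph{multiplicative} scaling of the argument; the bisection midpoint $m=(a'+b')/2$ is not a scaled copy of either endpoint, so those bounds tell you nothing directly about $f(m)$ versus $f(a')$ or $f(b')$. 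You correctly flag this as ``the crux,'' and then do not actually resolve it.

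The paper closes this gap by using the two axioms in a different combination. With bracket $[a',b']$ of half-length $\delta_t=\delta_0/2^t$ and midpoint $m$, the $L$-auto-Lipschitz inequality is applied to the pair $(m,a')$, giving $|h(m)-h(a')|\le L\,h(m-a')=L\,h(\delta_t)$, and similarly for $b'$. Lower $\theta$-homogeneity is then applied to the \emph{half-length itself}: from $h(\delta_t)=h(2\cdot\delta_{t+1})\ge c\,2^{\theta}h(\delta_{t+1})$ one gets $h(\delta_t)\le (c\,2^{\theta})^{-t}h(\delta_0)$, so $h(\delta_t)$ shrinks geometrically in $t$. Once $L\,h(\delta_t)\le h_{\max}-h_{\min}$, the two auto-Lipschitz bounds sandwich $h(m)$ inside $[h_{\min},h_{\max}]$, with no monotonicity needed. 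The idea you are missing is to feed the \emph{interval length} $\delta_t$, rather than the endpoints, into $h$ via the auto-Lipschitz inequality, and then control $h(\delta_t)$ by homogeneity.
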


\begin{proof}[Proof of \cref{prop:weight-init-search}]
Let $h=\sqrt{f}$. Recall \cref{def:proper-loss}, $h$ is $L$-auto-Lipschitz and lower $\theta$-homogeneous with constant $c>0$.
We implement oracle access to $h$ via a single call to $\cO_f$ and a  square root, so the query complexity to $\cO_h$ equals that to $\cO_f$. Let $h_{\min}:=\sqrt{s_{\min}}, h_{\max}:=\sqrt{s_{\max}}$. Our goal is to find $x > 0$ such that $h_{\min} \leq h(x) \leq h_{\max}$.

We first query $h(1)$. If $h_{\min} \leq h(1) \leq h_{\max}$, we are done.  
Otherwise, we use exponential search to find an interval $[a,b]$ (with $b=2a$ or $a=2b$) such that $h$ crosses the range $[h_{\min},h_{\max}]$ within this bracket.

\paragraph{Case 1: $h(1)<h_{\min}$.}  
By lower $\theta$-homogeneity,
\[
h(2^j) \geq c \cdot 2^{j\theta} h(1), \qquad j \geq 1.
\]
Thus, for 
$j^\star = \Big\lceil \tfrac{1}{\theta}\log_2 \tfrac{h_{\max}}{c\cdot  h(1) } \Big\rceil$,
we have $h(2^{j^\star}) \geq h_{\max}$. Therefore, within $O(\tfrac{1}{\theta}\log\frac{h_{\max}}{c\cdot h(1)})$ queries we either find $x$ with $h(x)\in[h_{\min},h_{\max}]$ or obtain a bracket $[2^{j^\star-1}, 2^{j^\star}]$.

\paragraph{Case 2: $h(1)>h_{\max}$.}  
For $j\geq 1$, taking $\lambda=2^j$ and $x=2^{-j}$ in the homogeneity property gives
\[
h(1) = h(2^j \cdot 2^{-j}) \geq c\cdot 2^{j\theta} h(2^{-j})
\quad\Rightarrow\quad
h(2^{-j}) \leq \tfrac{h(1)}{c \cdot 2^{j\theta}}.
\]
Hence for $j^\star = \Big\lceil \tfrac{1}{\theta}\log_2 \tfrac{h(1)}{c\cdot h_{\min}} \Big\rceil$,
we have $h(2^{-j^\star}) \leq h_{\min}$, and again we either hit the target range or obtain a bracket $[2^{-j^\star}, 2^{-(j^\star-1)}]$.

\paragraph{Bisection step.}	
Assume w.l.o.g.\ we have a bracket $[a,b]$ with $b=2a$ and
$h(a)\leq h_{\min}<h_{\max}\leq h(b)$ (the mirrored case is analogous).
Let $\Delta:=h_{\max}-h_{\min}>0$ and let the half-length be $\delta_0:=\tfrac{b-a}{2}=\tfrac{a}{2}$.
For $t\geq 0$, let the half-length after $t$ bisections be $\delta_t := \delta_0/{2^{t}}$.
By lower $\theta$-homogeneity (applied with $x=\delta_{t+1}$ and $\lambda=2$),
\[
h(\delta_{t+1}) \leq \frac{1}{c2^\theta}h(\delta_t)
\quad\Rightarrow\quad
h(\delta_t) \leq \Bigl(\frac{1}{c2^\theta}\Bigr)^{t} h(\delta_0).
\]
Moreover, since $b-a=a$ and $h(a)\leq h_{\min}$, we have
\[
h(\delta_0) = h\Bigl(\frac{a}{2}\Bigr) \leq \frac{1}{c2^\theta} h(a)
\leq \frac{h_{\min}}{c2^\theta}.
\]
Combining the two displays yields
\[
Lh(\delta_t)
\leq
L\Bigl(\frac{1}{c2^\theta}\Bigr)^{t+1} h_{\min}.
\]
Choose
\[
t \geq
\biggl\lceil
\frac{1}{\theta+\log_2 c}
\log_2\Bigl(\frac{Lh_{\min}}{\Delta}\Bigr)
\biggr\rceil-1.
\]
Then $L h(\delta_t)\leq \Delta$. Let $m$ be the midpoint of the current bracket $[a^\prime, b^\prime ]$ (of length $\delta_{t-1}$).
By the $L$-auto-Lipschitz property,
\begin{align*}
h(m) &\leq h(a^\prime)+L h(\delta_t)\leq h_{\min}+\Delta = h_{\max},
\\
h(m) &\geq h(b^\prime)-L h(\delta_t) \geq h_{\max}-\Delta = h_{\min},
\end{align*}
so $h(m)\in [h_{\min},h_{\max}]$ and the midpoint succeeds.

Overall, the total query complexity of $\mathsf{FindSol}$ is 
$O_{L,\theta,c}\left(\log\frac{s_{\max}}{s_{\min}}\right)$,  under the assumption that $f(1)$ is bounded by $\poly(s_{\min},s_{\max})$.
The time is linear in the number of queries.

\end{proof}

\section{Proof of \cref{thm:quantum-glm-sparsification}}\label{apdx:glm-sparsification}

Before proving \cref{thm:quantum-glm-sparsification}, we restate two quantum algorithms that are invoked as subroutines in our algorithm.

\begin{theorem}[Multiple Quantum State Preparation, Theorem 1 in~\cite{hamoudi2022preparing}]\label{thm:quantum-prob-sample}
Let $1\leq k\leq n$.
	There exists a quantum algorithm $\mathsf{MultiSample}(\cO _w, k)$ that, given query access $\cO_w$ to a nonnegative vector $w \in \RR _{+}^n$, outputs with high probability a sample sequence $\vartheta\in \sqb{n}^k$ in which each element $i$ is drawn with probability proportional to $w _i$. The algorithm makes $\widetilde O\parens {\sqrt{kn}}$ queries to $\cO_w$ and runs in  $\widetilde O\parens {\sqrt{kn}}$ time.
\end{theorem}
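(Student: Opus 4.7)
The plan is to show that Algorithm~\ref{alg:hyper-sparse} is a faithful quantum implementation of the classical importance-sampling sparsifier of \citet{jambulapati2024sparsifying}, applied to a slightly \emph{modified} loss family $\cF^\circ$ that is well-behaved near the origin. First I would invoke Theorem~\ref{thm:weight-initialize} with the chosen $s_{\max}$ and $\delta = \Theta(\epsilon s_{\min}/(m^3 s_{\max}))$ to obtain both an $(\max\{1,L\},\theta,c)$-proper family $\cF^\circ$ satisfying $0\leq F^\circ(x)-F(x)\leq C_{\textsc{init}}\delta m^2 s_{\max}$ whenever $F(x)\leq s_{\max}$, and a $\beta$-approximate initial weight $w^\circ$ with $\beta = O((L/c)^2 m/\delta)$. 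The chosen $\delta$ makes the perturbation $O(\epsilon s_{\min}/m)$, so any $(\epsilon/2)$-approximate sparsifier of $F^\circ$ on $[2^{j_{\min}},2^{j_{\max}}]\supseteq[s_{\min},s_{\max}]$ will automatically be an $\epsilon$-approximate sparsifier of $F$ on $[s_{\min},s_{\max}]$ after absorbing constants into $\widetilde O$.

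Feeding $\cF^\circ$ and $w^\circ$ into QMLSO (Theorem~\ref{thm:qmlso}) then gives quantum query access to a multiscale $O(n\log(s_{\max}/s_{\min}))$-overestimate $z$ for an $\alpha$-approximate weight scheme $\cW$ on $\cJ = \ZZ \cap [j_{\min},j_{\max}]$. I would then appeal to the classical sampling analysis underlying~\citet{jambulapati2024sparsifying}: drawing $M=\widetilde\Theta(n/\epsilon^2)$ i.i.d.\ indices proportional to $z$ and assigning weight $\norms{z}_1/(M z_{\vartheta_i})$ yields an $\epsilon$-approximate sparsifier of $F^\circ$ on $[2^{j_{\min}},2^{j_{\max}}]$ with high probability. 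The quantum pipeline supplies exactly such samples via MultiSample (Theorem~\ref{thm:quantum-prob-sample}), while the unknown normalizer $\norms{z}_1$ is replaced by $\widetilde \nu\in[0.9,1.1]\norms{z}_1$ from SumEstimate (Theorem~\ref{thm:quantum-sum-estimate}); the explicit $1/1.1$ factor in the reweighting ensures each assigned weight is dominated by the ideal value, so that the classical concentration argument still applies after adjusting constants.

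The step I expect to be the most delicate is the multi-scale bookkeeping: one must verify that the modification error $F^\circ-F$ and the normalizer error together contribute at most an $O(\epsilon)$ fraction of $F(x)$ uniformly over all $x$ with $F(x)\in[s_{\min},s_{\max}]$, which is why $j_{\min}$ is taken $4\log m$ below $\log s_{\min}$ and $\delta$ carries an extra $1/m$ slack. Once correctness is settled, the runtime is a straightforward sum: WeightInitialize contributes $\widetilde O(r\sqrt{mn}+n^\omega+nr^2)$; QMLSO contributes $\widetilde O((r\sqrt{mn}+n^\omega+nr^2)\Delta)$ with $\Delta = \widetilde O(\log(s_{\max}/s_{\min}))$ since $\log\log\beta=\widetilde O(1)$; and the $\widetilde O(\sqrt{Mm})$ calls that MultiSample and SumEstimate make to $\cZ$, each costing $\widetilde O(r\log(s_{\max}/s_{\min}))$ time per Theorem~\ref{thm:qmlso}, contribute $\widetilde O(r\sqrt{mn}/\epsilon\cdot \log(s_{\max}/s_{\min}))$. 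Combining the three pieces yields the advertised $\widetilde O((n^\omega+nr^2+r\sqrt{mn}/\epsilon)\cdot\log(s_{\max}/s_{\min}))$ bound.
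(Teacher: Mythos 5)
The statement you were asked to prove is \cref{thm:quantum-prob-sample}, which the paper explicitly labels as a restatement of Theorem~1 of \citet{hamoudi2022preparing}; it is imported and used as a black box, and the paper supplies no proof of it. Your write-up does not address that statement at all: it is a proof sketch of \cref{thm:quantum-glm-sparsification}, the main GLM-sparsification theorem, and in your second paragraph you \emph{invoke} \cref{thm:quantum-prob-sample} as an assumed primitive (``The quantum pipeline supplies exactly such samples via MultiSample''), which is the opposite of proving it.

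If the goal were actually to establish \cref{thm:quantum-prob-sample}, none of the ingredients you deploy---\cref{thm:weight-initialize}, \cref{thm:qmlso}, the concentration analysis of \citet{jambulapati2024sparsifying}, the normalizer estimate from \cref{thm:quantum-sum-estimate}---bear on it. A proof would need to reconstruct the amplitude-amplification argument of \citet{hamoudi2022preparing}: prepare a state with amplitudes proportional to $\sqrt{w_i}$, measure to obtain one sample in $\widetilde O(\sqrt{n})$ queries, and then show that $k$ samples can be produced in $\widetilde O(\sqrt{kn})$ total queries rather than the naive $\widetilde O(k\sqrt{n})$. That $\sqrt{k}$ saving, obtained by interleaving the $k$ preparations so that later rounds reuse Grover progress from earlier ones, is precisely the nontrivial content of Hamoudi's theorem, and none of it appears in your proposal. (For what it is worth, your sketch tracks the paper's actual proof of \cref{thm:quantum-glm-sparsification} in \cref{apdx:glm-sparsification} fairly faithfully; you simply proved a different theorem than the one asked.)
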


\begin{theorem}[Quantum Sum Estimation, {\cite[Lemma~3.1]{li2019sublinear}}]\label{thm:quantum-sum-estimate}
There exists a quantum algorithm $\mathsf{SumEstimate}(\cO_w,\epsilon)$ that, given query access $\cO_w$ to a nonnegative vector $w \in \RR_{+}^n$, outputs with high probability an $\epsilon$-approximation $\widetilde{s}$ to the sum $s=\sum_{i\in \sqb{n}} w_i$. The algorithm uses $\widetilde{O}(\sqrt{n}/\epsilon)$ queries to $\cO_w$ and runs in $\widetilde{O}(\sqrt{n}/\epsilon)$ time.
\end{theorem}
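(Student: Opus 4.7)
The plan is to assemble three ingredients: bound the total runtime by summing subroutine costs; establish that $\widetilde F^\circ(x):=\sum_{i=1}^m w_i f_i^\circ(\langle a_i,x\rangle)$ is, with high probability, an $\epsilon$-sparsifier of the \emph{modified} objective $F^\circ(x):=\sum_{i=1}^m f_i^\circ(\langle a_i,x\rangle)$; and then convert this, via a perturbation bound, to an $\epsilon$-sparsifier of the original $F$. For the runtime, note that $j_{\max}-j_{\min} = O(\log(s_{\max}/s_{\min})+\log m)$, $\delta = \widetilde\Omega(\epsilon s_{\min}/(m^3 s_{\max}))$, and $\beta = \poly(m, L/c, s_{\max}/(\epsilon s_{\min}))$, so the QMLSO parameter $\Delta$ is $\widetilde O(\log(s_{\max}/s_{\min}))$. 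Summing the bounds in \cref{thm:weight-initialize} and \cref{thm:qmlso} gives $\widetilde O((n^\omega + nr^2 + r\sqrt{mn})\cdot \log(s_{\max}/s_{\min}))$ up through the construction of $\cZ$. The subsequent $\mathsf{MultiSample}$ call with $k=M=\widetilde\Theta(n/\epsilon^2)$ makes $\widetilde O(\sqrt{Mm})$ queries to $\cZ$, each of cost $\widetilde O(r|\cJ|)$, for a total of $\widetilde O(r\sqrt{mn}/\epsilon \cdot \log(s_{\max}/s_{\min}))$; $\mathsf{SumEstimate}$ is strictly cheaper; and the reweighting loop costs $\widetilde O(Mr|\cJ|)$, absorbed into the $\poly(n,1/\epsilon)\cdot \log(s_{\max}/s_{\min})$ term. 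Summing yields the claimed complexity.

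For correctness on $F^\circ$, \cref{thm:weight-initialize} guarantees that $\cF^\circ$ is $(\max\{1,L\},\theta,c)$-proper and that the initial weight is $\beta$-approximate at scale $s_{\max}$; feeding these to \cref{thm:qmlso} produces, with high probability, an $\alpha(L,\theta,c)$-approximate weight scheme for $\cF^\circ$ over $\cJ$ together with quantum query access to a multiscale $\tau$-overestimate $z$ with $\tau = O(n|\cJ|)$. I would then invoke the classical importance-sampling sparsification theorem of \citet{jambulapati2024sparsifying}, which states that drawing $M = \widetilde\Theta(\tau/\epsilon^2) = \widetilde\Theta(n|\cJ|/\epsilon^2)$ i.i.d.\ samples with probabilities $z_i/\|z\|_1$ and reweighting each sample by $\|z\|_1/(Mz_i)$ yields, with high probability, an $\epsilon$-sparsifier of $F^\circ$ on $[s_{\min}, s_{\max}]$. \cref{thm:quantum-prob-sample} realizes exactly this distribution on $\cZ$, and \cref{thm:quantum-sum-estimate} returns $\widetilde\nu \in [0.9,1.1]\cdot \|z\|_1$. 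The update in line~10 of \cref{alg:hyper-sparse} therefore assigns weights in $[0.9/1.1,\,1]\cdot \|z\|_1/(Mz_{\vartheta_i})$; the deliberate division by $1.1$ guarantees the sampled weights never exceed the ideal importance weights, which is the pointwise upper bound underlying the matrix-Chernoff step of the cited theorem, and the constant-factor shrinkage is absorbed into $\epsilon$ after an absolute rescaling.

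For the transfer from $F^\circ$ to $F$, \cref{eq:weight-initialize} and the choice of $\delta$ give $0 \le F^\circ(x) - F(x) \le C_{\textsc{init}}\delta m^2 s_{\max} = O(\epsilon s_{\min}/m) \le \epsilon F(x)$ whenever $s_{\min} \le F(x) \le s_{\max}$. Since $\widetilde F(x) = \widetilde F^\circ(x) - \sum_i w_i \cdot s_{\max} w_i^\circ \langle a_i, x\rangle^2$ and the quadratic bump itself belongs to the sparsified family (so it is preserved by $\widetilde F^\circ$ up to the same $(1\pm\epsilon)$ factor), the same bound carries over to $|\widetilde F - \widetilde F^\circ|$; a triangle inequality then yields $|\widetilde F(x) - F(x)| \le O(\epsilon) F(x)$ on the stated range, and an absolute rescaling of $\epsilon$ gives the advertised guarantee. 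The sparsity claim follows because $\widetilde w$ has at most $M = \widetilde O(\log(s_{\max}/s_{\min}) \cdot n/\epsilon^2)$ nonzero coordinates.

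The main obstacle is the simultaneous calibration of $\delta$: it must be small enough that $F^\circ - F$ is a genuine $\epsilon$-perturbation of $F$ on the whole range $[s_{\min}, s_{\max}]$, yet large enough that $\log\log\beta$ remains polylogarithmic so that the $\Delta$ factor in \cref{thm:qmlso} does not blow up. The specific choice $\delta = \Theta(\epsilon s_{\min}/(m^3 s_{\max}))$ in line~1 of \cref{alg:hyper-sparse} sits precisely at this sweet spot, but showing rigorously that the quadratic bump terms are themselves sparsified correctly, and that the concentration argument of \citet{jambulapati2024sparsifying} tolerates the multiplicative $[0.9/1.1,1]$ distortion introduced by the approximate normalization $\widetilde\nu$, is where the most careful bookkeeping is required.
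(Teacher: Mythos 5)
Your proposal does not address the stated theorem at all. The statement in question is the Quantum Sum Estimation lemma (\cref{thm:quantum-sum-estimate}), which is an external result cited from \citet{li2019sublinear} (their Lemma~3.1) and restated here for use as a black-box subroutine; the paper does not prove it and you would not be expected to reprove it from scratch. A correct treatment would either defer to that citation or sketch the underlying technique: amplitude estimation applied in a multilevel fashion to estimate $\sum_i w_i$ to relative accuracy $\epsilon$ with $\widetilde{O}(\sqrt{n}/\epsilon)$ quantum queries, typically by bucketing the $w_i$ by magnitude and estimating the mass in each bucket via quantum counting.

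What you have written instead is an outline of the proof of \cref{thm:quantum-glm-sparsification}, the main GLM sparsification theorem — you discuss the runtime accounting across $\mathsf{WeightInitialize}$ and $\mathsf{QMLSO}$, the $\Delta = \widetilde{O}(\log(s_{\max}/s_{\min}))$ bound, the sparsifier guarantee on $F^\circ$, the perturbation transfer from $F^\circ$ back to $F$ via \cref{eq:weight-initialize}, and the role of the $1.1$ factor and $\widetilde\nu$. None of this bears on estimating $\sum_i w_i$. You have proved the wrong statement; re-read which theorem you were asked to address.
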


We will use the following result from~\cite{jambulapati2024sparsifying} in a black-box manner.
\begin{theorem}\label{thm:glm-main}
Let $F(x) := \sum_{i=1}^m f_i\left(\langle a_i, x \rangle \right)$ be a total loss function defined by vectors $a_1, \ldots, a_m \in \mathbb{R}^n$ and a $\parens{L,\theta,c}$-proper loss functions family $\cF =\sets{f_1, \ldots, f_m }$.
Let $0\leq s_{\min}<s_{\max}$. 
Given any $O_{L,\theta,c}(1)$-approximate weight scheme $\cW$ with respect to matrix $A$ (with rows $a_i$), loss family $\cF$, and interval $\cJ := [\log_2 s_{\min}-4\log m,\log_2 s_{\max}] \bigcap \mathbb{Z}$, 
with high probability,
importance sampling each function $f_i \in \cF$ with probability
\begin{equation*}
\quad p_i \geq \max_{j \in \cJ}\sigma_i(W_j^{1/2}A)
\end{equation*}
yields an $s$-sparse $\epsilon$-approximate sparsifier of $F$ over the range $[s_{\min}, s_{\max}]$, where
\begin{equation*}
s=O \parens{\norm{p}_1 \log^3 m/\epsilon^2 }.
\end{equation*}
\end{theorem}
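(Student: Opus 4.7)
The plan is to prove \cref{thm:glm-main} by first establishing pointwise concentration of $\widetilde F$ around $F$ at each fixed $x$ with $F(x) \in [s_{\min}, s_{\max}]$, and then promoting this to a uniform-over-$x$ guarantee through a multi-scale chaining argument that exploits the approximate weight scheme $\cW$. The definition of an approximate weight is exactly what makes such a chaining tractable: at each dyadic scale $2^j$, the weights $w^{(j)}$ equalize the row contributions $f_i(\|M_j^{-1/2}a_i\|_2) / (w_i^{(j)}\|M_j^{-1/2}a_i\|_2^{2})$ up to a constant factor $\alpha$, so the leverage scores $\sigma_i(W_j^{1/2}A)$ serve as sharp row-importance estimates at that scale, and the envelope condition $p_i \geq \max_{j\in\cJ} \sigma_i(W_j^{1/2}A)$ makes a single sampling distribution compatible with every scale at once.

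First I would fix a point $x$ and let $j \in \cJ$ be the dyadic scale with $F(x) \in [2^{j-1},2^{j})$. Since each independent draw contributes $\frac{1}{s\,p_{i_k}}f_{i_k}(\langle a_{i_k},x\rangle)$ to $\widetilde F(x)$, a scalar Bernstein inequality applies once the per-sample maximum $\max_i f_i(\langle a_i,x\rangle)/p_i$ and the variance proxy $\sum_i f_i(\langle a_i,x\rangle)^{2}/p_i$ are controlled. Combining the lower bound $p_i \geq w_i^{(j)}\|M_j^{-1/2}a_i\|_2^{2}$ with the $L$-auto-Lipschitz and lower-$\theta$-homogeneous properties of $h_i=\sqrt{f_i}$ bounds both quantities in terms of $F(x)$, $\|p\|_1$, and polylogarithms in $m$. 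This yields a pointwise $(1\pm\epsilon)$ bound at $s = \widetilde O(\|p\|_1/\epsilon^{2})$ samples with failure probability exponentially small in $\log m$.

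Next I would extend from a fixed $x$ to a uniform bound on each shell $\{x: F(x) \in [2^{j-1},2^{j})\}$ by a Dudley-style chaining argument in the Mahalanobis geometry induced by $M_j=\sum_i w_i^{(j)} a_i a_i^{\top}$. The $L$-auto-Lipschitz property turns the pointwise bound into a modulus-of-continuity statement for $\widetilde F - F$, so that nearby points in a sufficiently fine $\epsilon$-net of the shell give close values; the covering numbers of the shell in the $M_j$-metric are controlled by standard volume estimates on ellipsoids. The Dudley integral contributes two of the three $\log m$ factors — one from the depth of the chaining tree and one from the geometric successive refinement — while the third $\log m$ is absorbed into the failure-probability amplification for the final union bound.

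Finally I would close the argument with a union bound over the $|\cJ| = O(\log(s_{\max}/s_{\min}))$ dyadic scales. The crucial point, absent in single-scale sparsification, is that the consistency property $w^{(j+1)} \leq \alpha w^{(j)}$ together with the envelope $p_i \geq \max_{j\in\cJ}\sigma_i(W_j^{1/2}A)$ lets a \emph{single} realization of $s$ samples simultaneously serve every shell, so no resampling per scale is needed. The hardest step is the chaining one: verifying that the scale-dependent shells have covering numbers controllable under only the $(L,\theta,c)$-proper assumption, so that the Dudley integral converges with merely polylogarithmic overhead, and that the shells at neighbouring scales mesh through the $\alpha$-contraction of $\cW$. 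This is precisely the iterated-covering and symmetrization argument of \citet{jambulapati2024sparsifying}, which generalizes the $\ell_p$-Lewis-weight analysis of \citet{cohen2015lp}; because our paper imports it as a black box, this proposal describes the structural strategy rather than carrying out the routine but delicate net-construction calculations.
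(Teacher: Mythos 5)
The paper does not actually prove this statement: it is imported verbatim from \citet{jambulapati2024sparsifying} and used as a black box, with no argument given in the appendix. So there is no ``paper proof'' to compare against, and you are right to flag this in your last paragraph.

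Your sketch is a reasonable reconstruction of the overall architecture (Bernstein-type pointwise concentration under importance sampling, a covering/chaining argument to make it uniform over a shell, and the observation that the single envelope $p_i \geq \max_{j\in\cJ}\sigma_i(W_j^{1/2}A)$ lets one draw of $s$ samples serve every scale simultaneously). Two caveats worth naming, though, since they mark the difference between a high-level outline and an actual proof. First, the covering step is not a standard Dudley argument with ``volume estimates on ellipsoids'': the metric in which one must cover the shell $\{x : F(x)\in[2^{j-1},2^j)\}$ is \emph{not} the $M_j$-Mahalanobis metric, because the empirical process $\widetilde F - F$ varies with the nonquadratic losses $f_i$, not with $\|M_j^{-1/2}\cdot\|_2$. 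The entire technical weight of \citet{jambulapati2024sparsifying} (following \citet{cohen2015lp} for $\ell_p$) lies in relating entropy numbers in the loss-induced geometry to the $\ell_2$-leverage geometry via the Lewis-weight change of measure and the $(L,\theta,c)$-proper conditions; calling this ``routine but delicate'' undersells where the content is. Second, your accounting of the three $\log m$ factors is speculative; since the theorem's sparsity bound $s = O(\|p\|_1 \log^3 m/\epsilon^2)$ has no $|\cJ|$ dependence, one must check that the union bound over $|\cJ| = O(\log(m^4 s_{\max}/s_{\min}))$ scales (which can exceed $\mathrm{polylog}\, m$) is already paid for elsewhere — typically by the high-probability guarantee being stated relative to $m$, with the $|\cJ|$ factor absorbed into $\|p\|_1$ through the sum of per-scale leverage scores, not into the $\log^3 m$.

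In short: since both you and the paper defer the substance to the same external reference, there is no gap to speak of, but the hard part of the argument is the non-Euclidean covering step that your sketch treats as standard.
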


\begin{proof}[Proof of \cref{thm:quantum-glm-sparsification}]

According to \cref{thm:weight-initialize}, we obtain query access to $\cF^\circ$ and ${w^\circ}$ in $\widetilde O\parens{r \sqrt{mn}+n ^\omega+n r^2}$ time, and each query costs $\widetilde O\parens{r}$ time. These satisfy the property that $w^\circ$ is an $O \parens{m^4s_{\max} /\parens{\epsilon s_{\min}}}$-approximate weight at scale $s_{\max}$ with respect to $\cF ^\circ$, and moreover $\cF^\circ$ satisfies the condition
\begin{equation}\label{eq:quantum-glm-proof-1}
		\sum_{i=1}^m f_i \paren{\ips{a_i}{x}}\leq s_{\max}\quad\Rightarrow \quad 0\leq \sum_{i=1}^m f_i^\circ \paren{\ips{a_i}{x}}-f_i \paren{\ips{a_i}{x}}\leq\frac{ C_{\textsc{init}} \epsilon s_{\min} }{ m}.
\end{equation}

Note that $\abss{\cJ}=j_{\max}-j_{\min}=O\parens{\log \parens{m s_{\max}/s_{\min}}}=\widetilde O\parens{\log \parens{s_{\max}/s_{\min}}}$.
By applying  \cref{thm:qmlso}, we obtain query access to a multiscale $O \parens{n\log \parens{s_{\max}/s_{\min}}}$-overestimates $z$ with respect to $\cF^\circ$. This construction requires $\widetilde O \paren{\parens{r\sqrt{mn}+n ^\omega+n r^2}\log \parens{s_{\max}/s_{\min}}}$ time, and each query can be executed in $\widetilde O \parens{r \log \parens{s_{\max}/s_{\min}}}$ time.
Combining this with the multiple quantum state preparation (\cref{thm:quantum-prob-sample}), we can generate a sequence $\vartheta \in  \sqb{m}^M$ in
  $\widetilde O\parens{\sqrt{Mm} \cdot \log \parens{s_{\max}/s_{\min}}} =\widetilde O\parens{r \sqrt{mn}  \log \parens{s_{\max}/s_{\min}}/\epsilon}$ time.
The quantum sum estimation procedure (\cref{thm:quantum-sum-estimate}) then runs in 
  $\widetilde O\parens{r \sqrt{m} \log \parens{s_{\max}/s_{\min}}}$ time, producing $\widetilde \nu$ as a
  $0.1$-approximation of $\nu=\norms{z}_1$. Therefore, the overall time complexity is $\widetilde O \paren{\parens{n ^\omega+nr^2 +r\sqrt{mn}/\epsilon }\cdot {\log \parens{s_{\max}/s_{\min}}}}$. 

It remains to show that
\begin{equation*}
\widetilde F(x):=\sum_{i=1}^m \frac{\#\sets{k\in \sqb{M}: \vartheta_k=i}}{1.1M z_i/\widetilde \nu} f_i\parens{\ips{a_i}{x}}	
\end{equation*}
 is an $\widetilde O \parens{ \log\parens{s_{\max}/s_{\min}}\cdot n/\epsilon^2}$-sparse $\epsilon$-approximate sparsifier of $F$ over the range $\sqbs{s_{\min},s_{\max}}$.

In each sampling round (out of $M$ total rounds), the probability of selecting each  $f_i\in \cF$  can be expressed (after reweighting) as
\begin{equation*}
	1.1 z_i\cdot \frac{\nu}{\widetilde \nu}\geq \sigma^{\multiscale}_i \parens{A, \cW^\circ}\geq \max_{j\in \cJ}\sigma \parens{{W^\circ}^{1/2}_j A},
\end{equation*}
where $W^\circ$ is an $O\parens{1}$-approximate weight scheme with respect to $\cF^\circ$. 
The inequalities follow from the bound $\widetilde \nu \leq 1.1 \nu$ and from \cref{def:mlso}.

Applying \cref{thm:glm-main}, we obtain that $\widetilde F(x)$ is an $O\parens{\norms{z}_1\log ^3m/\epsilon}=\widetilde O \parens{n\log\parens{s_{\max}/s_{\min}} /\epsilon^2}$-sparse $\epsilon$-approximate sparsifier of $F^\circ$ over the range $[s_{\min},s_{\max}]$. 
Finally, by \cref{eq:quantum-glm-proof-1}, for all $x$ such that $ s_{\min}\leq  F \parens{x}\leq s_{\max}$, we have
\begin{align*}
	\abss{\widetilde F\parens{x}- F\parens{x}}&\leq 	\abss{ \widetilde F\parens{x}- F^\circ\parens{x}}+	\abss{ F^\circ\parens{x}- F\parens{x}} \\
	&\overset{\cref{eq:quantum-glm-proof-1}}{\leq} \epsilon  F^\circ \parens{x} + \frac{C_{\textsc{init}}\epsilon s_{\min}}{m}\\
	&\overset{\cref{eq:quantum-glm-proof-1}}{\leq}  \epsilon\paren{ F \parens{x} +\frac{C_{\textsc{init}}\epsilon s_{\min }}{m} }+\frac{C_{\textsc{init}}\epsilon s_{\min} }{m}\\
		&\leq  \epsilon F \parens{x} +\frac{C_{\textsc{init}}\parens{\epsilon^2+\epsilon }}{m} F \parens{x} \\
	&\leq 2 \epsilon F\parens{x}.
\end{align*}
By setting $\epsilon \leftarrow \tfrac{\epsilon}{2}$, we obtain we  desired.
\end{proof}

\section{Applications}\label{apdx:applications}

As a special case of $\ell_p$ regression with $p=2$, the multiscale leverage score overestimates reduce to the  leverage score overestimates. In this case, we can directly apply quantum leverage score approximation (\cref{thm:quantum-leverage-score-aapproximation}) with constant accuracy, without invoking QMLSO (\cref{alg:qmlso}). Combining the classical linear regression algorithm~\citep[Theorem~43]{clarkson2017low} with our quantum sparsification framework yields the following corollary.

\begin{corollary}[Quantum Linear Regression]\label{cor:linear-regression}
There exists a quantum algorithm that, given query access to a matrix $A\in \RR^{m\times n}$ (with row sparsity $r \leq n$) and vector $b\in \RR^m$, and $\epsilon> 0$, outputs with high probability an $x\in \RR^n$ such that $\norms{Ax-b}_2 \leq \parens{1+\epsilon} \min _{x\in\RR^n }\norms{Ax-b}_2$, in $\widetilde O \parens{r\sqrt{mn}/\epsilon+n^3}$ time. 
\end{corollary}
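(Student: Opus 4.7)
The plan is to specialize the general GLM-sparsification pipeline to the case $f(t)=t^2$, which collapses enough structure that quantum leverage score approximation (\cref{thm:quantum-leverage-score-aapproximation}) can stand in for the full \textsc{QMLSO} routine, after which a classical solver finishes the job. I first reduce to the unbiased GLM form: following the bias-embedding trick in the paragraph preceding \cref{eq:ridge}, I set $A':=[A\mid -b]\in\RR^{m\times(n+1)}$ (row sparsity $r+1$) and constrain $x'\in\RR^{n+1}$ to have last coordinate $1$, so that $\sum_i\paren{\ips{a'_i}{x'}}^2=\|Ax-b\|_2^2$.

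The key structural observation is that $t^2$ is $2$-homogeneous, so by the remark after \cref{thm:quantum-glm-sparsification} the scale-ratio factor $\log(s_{\max}/s_{\min})$ drops out entirely; moreover $f\paren{\sqrt{\sigma/w}}/(\sigma/w)\equiv 1$, so \cref{fa:approximate-weight} implies that the uniform weight $w^{(j)}\equiv 2^{-j}\mathbf{1}$ is exactly $1$-approximate at scale $2^j$, and because leverage scores are scale-invariant the multiscale overestimates reduce (up to a constant) to the ordinary leverage scores $\sigma_i(A')$. I would therefore bypass \cref{alg:qmlso} and invoke \cref{thm:quantum-leverage-score-aapproximation} on $A'$ with constant accuracy to obtain, in time $\widetilde{O}(r\sqrt{mn}+n^\omega+nr^2+n^2)\subseteq\widetilde{O}(r\sqrt{mn}+n^3)$, oracle access to $O(1)$-approximate leverage scores $\widetilde{\sigma}_i$ at cost $\widetilde{O}(r)$ per query. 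Feeding $z_i:=\widetilde{\sigma}_i$ into $\mathsf{MultiSample}$ (\cref{thm:quantum-prob-sample}) and $\mathsf{SumEstimate}$ (\cref{thm:quantum-sum-estimate}) as in lines 5--11 of \cref{alg:hyper-sparse}, I would draw $M=\widetilde{\Theta}(n/\epsilon^2)$ rows in $\widetilde{O}(r\sqrt{mM})=\widetilde{O}(r\sqrt{mn}/\epsilon)$ total time, and \cref{thm:glm-main} guarantees that the reweighted row-subset gives a uniform $(1\pm\epsilon)$-sparsifier of $\|A'x'\|_2^2$, which restricts on the affine slice $\set{(x,1):x\in\RR^n}$ to a sparsifier of $\|Ax-b\|_2^2$.

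Finally, I would feed the reduced $\widetilde{O}(n/\epsilon^2)$-by-$n$ regression instance into the classical sketch-and-solve algorithm of \citet[Theorem~43]{clarkson2017low}, which outputs a $(1+\epsilon)$-approximate minimizer in $\widetilde{O}(n^3)$ time, giving the advertised $\widetilde{O}(r\sqrt{mn}/\epsilon+n^3)$ total. I do not foresee a genuine obstacle: the main conceptual step is recognizing the three simplifications above (vanishing scale range, constant weight fixed point, and scale-invariance of leverage scores) that let the generic GLM pipeline degenerate to standard leverage-score sketching; the only point requiring care is checking that the Clarkson--Woodruff postprocessing truly runs in $\widetilde{O}(n^3)$ without hidden $1/\epsilon$ factors on an already-reduced instance, which is exactly the content of their sketch-and-solve guarantee.
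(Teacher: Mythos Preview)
Your proposal is correct and matches the paper's argument essentially line for line: the paper also observes that for $p=2$ the multiscale leverage score overestimates collapse to ordinary leverage scores, bypasses \textsc{QMLSO} in favor of \cref{thm:quantum-leverage-score-aapproximation} with constant accuracy, plugs into the sampling pipeline, and finishes with \citet[Theorem~43]{clarkson2017low}. The one point you flag for care---hidden $1/\epsilon$ cost in the postprocessing---the paper handles by noting that the sparsifier being nontrivial forces $\epsilon=\Omega(\sqrt{n/m})$, so the $\widetilde O(nr/\epsilon^2)$ cost of reading the reduced instance is already absorbed into $\widetilde O(r\sqrt{mn}/\epsilon)$, leaving only the $\epsilon$-free $n^3$ solve.
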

Note that the sparsifier size is $m^\prime=\widetilde O(n/\epsilon^2)$, which is smaller than the original  size $m$,  implying $\epsilon = \Omega(\sqrt{n/m})$. Consequently, the classical runtime term $\widetilde O(mr)$ reduces to $\widetilde O(m^\prime r)=\widetilde O(nr/\epsilon^2)=\widetilde O(r\sqrt{mn}/\epsilon)$.

As discussed in \cref{subsec:application}, multiple regression can be reformulated as linear regression through vectorization. Accordingly, we first apply quantum leverage score approximation (\cref{thm:quantum-leverage-score-aapproximation}) to the matrix $A \in \mathbb{R}^{m\times n}$ and perform importance sampling to obtain a reduced matrix $A'$, of size $\widetilde O(n/\epsilon^2) \times n$. Finally, we apply the classical multiple regression algorithm (e.g.,~\citep[Theorem~38]{clarkson2017low}) to $A'$ to compute the solution.
\begin{corollary}[Quantum Multiple Regression]\label{cor:multiple-regression}
There exists a quantum algorithm that, given query access to a matrix $A\in \RR^{m\times n}$ (with row sparsity $r \leq n$) and matrix $B\in \RR^{m\times N}$, and $\epsilon> 0$, outputs with high probability an $X\in \RR^{n\times N}$ such that $\norms{AX-B}_F \leq \parens{1+\epsilon} \min _{x\in\RR^n }\norms{AX-B}_F$, in $\widetilde O \paren{r\sqrt{mn}/\epsilon+n^2\parens{Nn+n/\epsilon+N/\epsilon}}$ time. 
\end{corollary}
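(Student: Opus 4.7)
The plan is to reduce the multiple regression problem to a sketched instance via leverage score sampling of $A$, and then invoke a known classical solver on the reduced instance. Since the quadratic loss $\ell_2(x) = x^2$ is $p$-homogeneous with $p=2$, the machinery of multiscale leverage score overestimates collapses to standard leverage scores of $A$, so we can bypass the full \textsc{QMLSO} pipeline and call quantum leverage score approximation of \cref{thm:quantum-leverage-score-aapproximation} directly. Note that by the standard vectorization trick noted before the corollary, $\norms{AX-B}_F^2 = \sum_{k=1}^N \norms{Ax^k - b^k}_2^2$, so any row-sampling distribution that yields a subspace embedding for $A$ with distortion $\epsilon$ simultaneously preserves the Frobenius-norm regression objective for all $N$ right-hand sides.

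First I would invoke \cref{thm:quantum-leverage-score-aapproximation} on $A$ with constant accuracy to obtain query access to estimates $\widetilde\sigma_i$ with $\widetilde\sigma_i = \Theta(\sigma_i(A))$ in time $\widetilde O(r\sqrt{mn} + n^\omega + nr^2)$. Next I would use the multiple quantum state preparation procedure (\cref{thm:quantum-prob-sample}) and quantum sum estimation (\cref{thm:quantum-sum-estimate}) to draw $M = \widetilde O(n/\epsilon^2)$ independent samples from the distribution proportional to $\widetilde\sigma_i$, together with a $(1\pm 0.1)$-approximation of $\sum_i \widetilde\sigma_i$, in total time $\widetilde O(r\sqrt{Mm} + r\sqrt{m}) = \widetilde O(r\sqrt{mn}/\epsilon)$. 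For each sampled index $i$, query the row $a_i$ of $A$ (cost $\widetilde O(r)$) and the row $B_{i,:}$ of $B$ (cost $\widetilde O(N)$), and rescale appropriately, yielding explicit matrices $A' \in \RR^{M\times n}$ and $B' \in \RR^{M\times N}$. Reading out $A'$ and $B'$ costs $\widetilde O(M(r+N)) = \widetilde O(n(r+N)/\epsilon^2)$, which is absorbed by the other terms (note $nr/\epsilon^2 \leq r\sqrt{mn}/\epsilon$ in the regime where the sparsifier is nontrivially smaller than $A$, and $nN/\epsilon^2$ is absorbed by $n^2N/\epsilon$).

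Finally, I would solve $\min_X \norms{A'X - B'}_F$ using the classical multiple regression algorithm of \citet[Theorem 38]{clarkson2017low}. On an $M\times n$ instance with $N$ right-hand sides, this costs $\widetilde O(n^2(Nn + n/\epsilon + N/\epsilon))$ time after the sketch has been formed, which matches the second additive term in the claimed runtime. Combining the quantum sampling cost with the classical solver cost yields the total runtime $\widetilde O(r\sqrt{mn}/\epsilon + n^2(Nn + n/\epsilon + N/\epsilon))$ as claimed.

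For correctness, standard leverage-score sampling theory (cf.\ \citet{woodruff2014sketching}) guarantees that with constant-factor leverage score overestimates summing to $O(n)$, sampling $M = \widetilde O(n/\epsilon^2)$ rows produces an $\epsilon$-subspace embedding for $A$ with high probability, so simultaneously for every $X \in \RR^{n\times N}$ we have $(1-\epsilon)\norms{AX - B}_F \leq \norms{A'X - B'}_F \leq (1+\epsilon)\norms{AX-B}_F$, and hence any $(1+\epsilon)$-optimizer of the sketched problem is a $(1+O(\epsilon))$-optimizer of the original. The main thing to verify carefully is the bookkeeping on the sampling distribution: one must confirm that sampling from the unnormalized vector $\widetilde\sigma$ (together with the $0.1$-approximation $\widetilde\nu$ of its sum used for reweighting) gives the correct multiplicative overestimate guarantee required by the subspace embedding argument, exactly as in the proof of \cref{thm:quantum-glm-sparsification}; the homogeneous quadratic case is in fact strictly simpler since no scale range appears. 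Rescaling $\epsilon \leftarrow \epsilon/C$ for a suitable constant $C$ absorbs the constant-factor slack introduced by the leverage score approximation and by $\widetilde\nu$.
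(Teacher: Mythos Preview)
Your proposal is correct and follows essentially the same approach as the paper: bypass \textsc{QMLSO} by directly invoking quantum leverage score approximation (\cref{thm:quantum-leverage-score-aapproximation}) on $A$, sample $\widetilde O(n/\epsilon^2)$ rows via \cref{thm:quantum-prob-sample} and \cref{thm:quantum-sum-estimate}, and then call the classical multiple regression solver of \citet[Theorem~38]{clarkson2017low} on the sketched instance. Your write-up is in fact more detailed than the paper's own treatment (which is just the short paragraph preceding the corollary), including the explicit subspace-embedding correctness argument and the cost of reading out $B'$.
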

Ridge regression can be reduced to linear regression by augmenting with $n$ additional data points (see \cref{subsec:application}).
\begin{corollary}[Quantum Ridge Regression]\label{cor:ridge-regression}
Let $\lambda>0$ be a regularization parameter. There exists a quantum algorithm that, given query access to a matrix $A\in \RR^{m\times n}$ (with row sparsity $r \leq n$) and vector $b\in \RR^m$, and $\epsilon> 0$, outputs with high probability an $x\in \RR^n$ such that $\norms{Ax-b}_2^2+\lambda \norms{x}_2^2 \leq \parens{1+\epsilon} \min _{x\in\RR^n }\paren{\norms{Ax-b}_2^2+\lambda\norms{x}_2^2}$, in $\widetilde O \parens{r\sqrt{mn}/\epsilon+n^3}$ time. 
\end{corollary}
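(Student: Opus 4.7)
The plan is to reduce ridge regression directly to the quantum linear regression algorithm of \cref{cor:linear-regression}, exploiting the augmentation trick described in \cref{eq:ridge}. Define
\begin{equation*}
A^\prime=\begin{bmatrix}A\\ \sqrt{\lambda}I_n\end{bmatrix}\in\RR^{(m+n)\times n},
\qquad
b^\prime=\begin{bmatrix}b\\ 0_n\end{bmatrix}\in\RR^{m+n},
\end{equation*}
so that $\norms{A^\prime x-b^\prime}_2^2 = \norms{Ax-b}_2^2+\lambda\norms{x}_2^2$ for every $x\in\RR^n$. Thus any $(1+\epsilon)$-approximate minimizer of the linear regression objective on $(A^\prime,b^\prime)$ is a $(1+\epsilon)$-approximate minimizer of the ridge objective on $(A,b)$.

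The next step is to supply quantum oracle access to $A^\prime$ and $b^\prime$ with the same asymptotic cost as the original oracles. For the element oracle $\cO_{A^\prime}^{\text{elem}}$, on input $(i,j)$ we conditionally branch: if $i\leq m$, invoke $\cO_A^{\text{elem}}$; otherwise return $\sqrt{\lambda}\cdot\mathbf{1}[j=i-m]$. For the index oracle $\cO_{A^\prime}^{\text{idx}}$, the first $m$ rows defer to $\cO_A^{\text{idx}}$, while rows $i>m$ have exactly one nonzero entry, at column $i-m$. An analogous construction yields $\cO_{b^\prime}$ from $\cO_b$. Each of these simulated queries costs $O(1)$ queries to $\cO_A,\cO_b$ and $\widetilde O(1)$ additional time, and the row sparsity of $A^\prime$ is $r^\prime=\max\{r,1\}=r$.

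The final step is to apply \cref{cor:linear-regression} to $(A^\prime,b^\prime)$ with parameters $m^\prime=m+n$, $n^\prime=n$, $r^\prime=r$. We may assume $m\geq n$, since otherwise the ridge normal equations $(A^\top A+\lambda I_n)x=A^\top b$ can be solved classically in $O(n^\omega)\subseteq O(n^3)$ time after reading the full input in $O(mn)=O(n^2)$ time, which is absorbed into the $n^3$ term. Under $m\geq n$, we have $\sqrt{m^\prime n}=\sqrt{(m+n)n}\leq\sqrt{2mn}$, so \cref{cor:linear-regression} returns an $x$ achieving the desired $(1+\epsilon)$-guarantee in total time $\widetilde O(r\sqrt{(m+n)n}/\epsilon+n^3)=\widetilde O(r\sqrt{mn}/\epsilon+n^3)$.

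The only subtlety, and the step most worth checking, is that the oracle wrappers for $A^\prime$ preserve the per-query cost and the row sparsity, since \cref{cor:linear-regression} charges its runtime in terms of $r$ and the underlying oracle cost. Once this is verified, correctness and the claimed runtime follow immediately from the reduction together with the guarantees of \cref{cor:linear-regression}; no further use of the GLM sparsification machinery is needed.
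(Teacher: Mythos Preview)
Your proposal is correct and follows essentially the same approach as the paper: both reduce ridge regression to linear regression via the augmentation $A'=\begin{bmatrix}A\\ \sqrt{\lambda}I_n\end{bmatrix}$, $b'=\begin{bmatrix}b\\0\end{bmatrix}$ from \cref{eq:ridge} and then invoke \cref{cor:linear-regression} with $m'=m+n$. The paper leaves the oracle-wrapping and the $m<n$ edge case implicit, so your added details are a reasonable elaboration; the only minor point to note is that \cref{cor:linear-regression} gives a $(1+\epsilon)$ guarantee on $\norms{A'x-b'}_2$ rather than on its square, so you should apply it with accuracy $\epsilon/3$ (say) before squaring to obtain the stated $(1+\epsilon)$ bound on the ridge objective.
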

As discussed in \cref{subsec:application}, lasso regression can also be embedded into the quantum GLM sparsification by introducing $n$ augmented data points and assigning suitable loss functions $f_i$ to them.
\begin{corollary}[Quantum Lasso Regression]\label{cor:lasso-regression}
Let $\lambda>0$ be a regularization parameter.
There exists a quantum algorithm that, given query access to a matrix $A\in \RR^{m\times n}$ (with row sparsity $r \leq n$) and vector $b\in \RR^m$, and $\epsilon> 0$, outputs with high probability an $x\in \RR^n$ such that $\norms{Ax-b}_2^2 +\lambda \norm{x}_1\leq \parens{1+\epsilon} \min _{x\in\RR^n }\paren{\norms{Ax-b}_2^2+\norms{x}_1}$, in $\widetilde O \parens{r\sqrt{mn}/\epsilon+n^3/\epsilon^2}$ time. 
\end{corollary}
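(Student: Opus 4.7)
My plan is to embed lasso regression into the quantum GLM sparsification framework of \cref{thm:quantum-glm-sparsification}, exactly along the lines sketched in \cref{subsec:application}, and then to run a classical lasso solver on the (much smaller) sparsified instance. Concretely, I would first set
\begin{equation*}
A^{\prime}=\begin{bmatrix} A\\ \lambda I_n\end{bmatrix}\in \RR^{(m+n)\times n},\qquad
b^{\prime}=\begin{bmatrix} b\\ 0_n\end{bmatrix}\in \RR^{m+n},
\end{equation*}
and define $\cF^{\prime}=\sets{f_1^{\prime},\dots,f_{m+n}^{\prime}}$ by $f_i^{\prime}(y)=y^2$ for $i\leq m$ and $f_i^{\prime}(y)=\abss{y}$ for $i>m$. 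A direct expansion gives $\sum_i f_i^{\prime}(\ips{a_i^{\prime}}{x}-b_i^{\prime})=\norms{Ax-b}_2^2+\lambda\norms{x}_1 = F_{\mathrm{lasso}}(x)$. The row sparsity of $A^{\prime}$ is $\max\sets{r,1}=r$, its row count is $m+n=O(m)$ in the nontrivial regime $n\leq m$, and quantum oracles $\cO_{A^{\prime}}$ and $\cO_{\cF^{\prime}}$ can be synthesised from $\cO_A$ with $O(1)$ overhead per query, since the augmented rows are scaled standard basis vectors and the choice of loss depends only on the row index.

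Next I would verify that $\cF^{\prime}$ is $(L,\theta,c)$-proper. Writing $h_i=\sqrt{f_i^{\prime}}$: for $i\leq m$ we have $h_i(y)=\abss{y}$, which is $1$-auto-Lipschitz and lower $1$-homogeneous with $c=1$; for $i>m$ we have $h_i(y)=\sqrt{\abss{y}}$, which is $1$-auto-Lipschitz (via $\abss{\sqrt{a}-\sqrt{b}}\leq\sqrt{\abss{a-b}}$) and lower $\tfrac12$-homogeneous with $c=1$. Since lower $\theta$-homogeneity is preserved when $\theta$ is decreased, the whole family is $(1,\tfrac12,1)$-proper, and $\tfrac12\in(0,4)$ as required.

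Then I would invoke $\mathsf{QGLMSparsify}(\cO_{A^{\prime}},\cO_{\cF^{\prime}},1,\tfrac12,1,\epsilon,s_{\min},s_{\max})$ to obtain, with high probability, an $\widetilde O(n/\epsilon^2)$-sparse $\epsilon$-approximate sparsifier $\widetilde F$ of $F_{\mathrm{lasso}}$ over $[s_{\min},s_{\max}]$, in time $\widetilde O\paren{(r\sqrt{mn}/\epsilon+n^\omega+nr^2)\log(s_{\max}/s_{\min})}$. Since the family is not purely $p$-homogeneous, the scale factor $\log(s_{\max}/s_{\min})$ is not free, but it reduces to $O(\log m)$ via the properness bound of \citet[Lemma 1.5]{jambulapati2024sparsifying}, exactly as the text uses for the $\gamma_p$ case, and is thus absorbed into $\widetilde O(\cdot)$. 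Finally I would solve the sparsified convex program, which has only $\widetilde O(n/\epsilon^2)$ terms and $n$ variables, using any standard classical lasso method (e.g., an accelerated proximal-gradient or interior-point algorithm), in $\widetilde O(n^3/\epsilon^2)$ time. Adding the two yields the stated bound $\widetilde O(r\sqrt{mn}/\epsilon+n^3/\epsilon^2)$.

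The main obstacle I anticipate is the mixed loss family: because the $\ell_1$ summands drop the family's homogeneity exponent to $\theta=\tfrac12$, one has to check that the constants $\alpha=\alpha(L,\theta,c)$ and $C_{\textsc{init}}$ coming from \cref{thm:qmlso} and \cref{thm:weight-initialize} remain $O(1)$, and that a valid $[s_{\min},s_{\max}]$ containing the optimum can be fixed a priori (for instance, $s_{\max}$ from the upper bound $F_{\mathrm{lasso}}(0)=\norms{b}_2^2$ and $s_{\min}$ from a polynomial-in-$m$ scaling argument). A secondary point is matching the classical solver's runtime to the claimed $n^3/\epsilon^2$; this should follow from standard analyses of convex solvers applied to an $\widetilde O(n/\epsilon^2)\times n$ instance, but is the only step that depends on a non-quantum ingredient and so warrants careful accounting.
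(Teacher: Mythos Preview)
Your proposal is correct and follows essentially the same route as the paper: embed lasso into the GLM framework via the augmented matrix and mixed $\ell_2$/$\ell_1$ loss family described in \cref{subsec:application}, apply \cref{thm:quantum-glm-sparsification}, and solve the sparsified instance classically. The paper's own treatment of \cref{cor:lasso-regression} is little more than a pointer to that embedding, so your version is in fact more detailed---in particular, your verification that the mixed family is $(1,\tfrac12,1)$-proper and your remark about absorbing $\log(s_{\max}/s_{\min})$ via the properness bound fill in steps the paper leaves implicit.
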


\end{document}